\documentclass{article}

\usepackage[a4paper]{geometry}

\usepackage{times}

\usepackage{url}

\usepackage{algorithm}
\usepackage{algpseudocode}
\usepackage{booktabs, colortbl, diagbox}
\usepackage{amsmath}
\usepackage{amssymb}
\usepackage{amsfonts}
\usepackage{xspace}
\usepackage{enumitem} \setlist[itemize]{noitemsep} \setlist[enumerate]{noitemsep}
\usepackage{color}
\usepackage{graphicx}
\usepackage{multirow}
\usepackage{authblk}
\usepackage{amsthm}
\usepackage{caption} \usepackage{subcaption}

\usepackage{bussproofs}
\usepackage{cmll}
\usepackage{balance}

\usepackage{wasysym}

\newtheorem{theorem}{Theorem}

\newtheorem{proposition}[theorem]{Proposition}
\newtheorem{lemma}[theorem]{Lemma}

\newtheorem{definition}[theorem]{Definition}
\newtheorem{example}[theorem]{Example}

\newcolumntype{C}[1]{>{\centering\let\newline\\\arraybackslash\hspace{0pt}}m{#1}}

\newcommand\pcase[1]{}

\newcommand{\logica}{{\sf LOG}\xspace}
\newcommand{\iNE}{{\sf NE}\xspace}

\renewcommand{\phi}{\varphi}
\renewcommand{\implies}{\multimap}
\newcommand{\lineg}{\mathop \sim}

\newcommand{\ch}{\mathsf{ch}}

\newcommand{\out}{\mathsf{out}}
\newcommand{\redis}{\mathsf{redis}}

\newcommand{\var}[1]{\mathsf{#1}}

\newcommand{\arrow}{\vartriangleright}

\newcommand\diNE{{\square}}
\newcommand\paNE{{\blacksquare}}
\newcommand\dipaNE{{\blacksquare}}

\newcommand\transparency{0.90}

\widowpenalty10000
\clubpenalty10000

\title{
Individual Resource Games and Resource Redistributions
}
\author{Nicolas Troquard}

\affil{
  The KRDB Research Centre for Knowledge and Data\\
  Faculty of Computer Science\\
  Free University of Bozen-Bolzano\\
  Piazza Domenicani, 3\\
  I-39100 Bozen-Bolzano BZ, Italy\\
\texttt{Nicolas.Troquard@unibz.it}\\
}
\date{}

\begin{document}

\maketitle

\begin{abstract}
  To introduce agent-based technologies in real-world systems, one needs to acknowledge that the agents often have limited access to resources. They have to seek after resource objectives and compete for those resources.
  
    We introduce a class of resource games where resources and preferences are specified with the language of a resource-sensitive logic.
    The agents are endowed with a bag of resources and try to achieve a resource objective.
    For each agent, an action consists in making available a part of their endowed resources.
    All the resources made available can be used towards the agents' objectives.

    We study three decision problems, the first of which is deciding whether an action profile is a Nash equilibrium: when all the agents have chosen an action, it is a Nash Equilibrium if no agent has an incentive to change their action unilaterally.
    
    When dealing with resources, interesting questions arise as to whether some equilibria can be eliminated or constructed by a central authority by redistributing the available resources among the agents.
  In our economies, division of property in divorce law exemplifies how a central authority can redistribute the resources of individuals, and why they would desire to do so.
  We thus study two related decision problems:
  \begin{itemize}
  \item rational elimination: given an action profile's outcome, can the endowed resources be redistributed so that it is not the outcome of a Nash equilibrium.
  \item rational construction: given an action profile's outcome, can the endowed resources be redistributed so that it is the outcome of a Nash equilibrium.
  \end{itemize}
  
  Among other results, we prove that all three problems are $\mathsf{PSPACE}$-complete when the resources are described in the very expressive language of the propositional multiplicative and additive Linear Logic.

  We also identify a new modest fragment of Linear Logic that we call MULT, suitable to represent multisets and reason about the inclusion and equality of bags of resources.
  We show that when the resources are described in MULT, the problem of deciding whether a profile is a Nash equilibrium is in $\mathsf{PTIME}$.
\end{abstract}

\section{Introduction}

Agents, or players, are entities capable of action and trying to reach their goals.
In the physical or cyber world, these agents have limited access to resources. They have to seek after resource objectives and compete for those resources.

This paper makes use of resource-sensitive logics, Linear Logic~\cite{girard1987} specifically, to model and solve problems of rational agents interacting in a resource-aware environment.
We use Linear Logic to define and reason about a new class of non-cooperative games~\cite{Osborne1994}. Every Linear Logic formula represents a resource. In these games, each agent is endowed with a bag of resources, and has an objective to achieve by transforming the resources made available from the agents' endowed resources.
Can we decide whether the resources made available by the agents constitute a Nash equilibrium, that is, whether it is locally optimal under individual strategic considerations?
If a local optimal is not desirable, could an arbitrator redistribute the resources in the endowments among the agent so that is not a Nash equilibrium anymore, thus eliminating it?
To the contrary, if an outcome is desirable, could an arbitrator redistribute the resources so that is becomes the outcome of a Nash equilibrium, thus constructing it?
In this paper we are going to address the computational complexity of the decision problems corresponding to these questions.

As we study the computational complexity of answering these questions about resource-sensitive game theoretical interactions, we will be particularly interested in a few varying parameters:
\begin{itemize}
\item What kind of preferences the agents have?
  \begin{itemize}
  \item Do they only care about reaching their resource objectives? (dichotomous)
  \item Do they also care about how much resource the consume? (parsimonious)
  \end{itemize}
\item What is the exact language for talking about resources, and what is the complexity of reasoning about resources in this language?
\item Which resource-sensitive logic exactly is used to reason about the resources?
  \begin{itemize}
  \item Can resources be disposed of freely during reasoning? (affine reasoning)
  \item Must all resources be accounted for during reasoning? (linear reasoning)
  \end{itemize}
\end{itemize}

This paper is putting together:
\begin{enumerate}
\item \textbf{Linear Logic}, which enables the specification of resources and
  the reasoning about them.
\item \textbf{Game theory and Nash equilibria}, which give us a guideline to
  characterize normatively good outcome in games whose actions and
  preferences are defined in terms of the resources expressed in
  Linear Logic.
\item \textbf{Computational complexity}, which helps us towards an algorithmic
  treatment of our resource games. It is intimately affected by the
  precise Linear Logic used to represent the resources.
\end{enumerate}

The models and the algorithms presented here can be used as analytical tools at the disposition of actors and policy makers, for instance in interconnected economies~\cite{ambaketal1994interconnection,brock1995interconnection}. They can serve at gauging the possible strategic behaviours of the actors and of their competitors, and at identifying possible issues of resource scarcity in a commons.

Our games are reminiscent of notable models existing in the literature.
They share the logic-based approach of Boolean games~\cite{Harrenstein:2001:BG:1028128.1028159,DBLP:conf/ecai/BonzonLLZ06}.
In Boolean games, each player controls a set of Boolean variables and produces truth values which can be used without restriction towards the Boolean goals. As such, resources proper are absent from Boolean games.
Our games also share the resource-sensitiveness of congestion games~\cite{Rosenthal1973}.
In congestion games, the players choose a set of resources (e.g., edges to travel in a graph), and their utility depends on the cost (e.g., delays) of using the shared resources, which depends on the number of players travelling them.
Despite some apparent similarities, they are rather superficial. One thing should be obvious: the resources in congestion games are limited to basic resources and lack a rich specification language of resources like the one of resource-sensitive logics.

Using resource-sensitive logic languages to represent goods that are
transformed and exchanged between agents owes to previous
work, e.g., \cite{DBLP:conf/atal/HarlandW02,DBLP:conf/kr/PorelloE10,DBLP:conf/ecai/PorelloE10},
in multiagent systems and computational social
choice~\cite{DBLP:books/cu/FKW2014, Brandt:2016:HCS:3033138}.

A short version of this paper appeared as~\cite{Tr16ijcai}.

\paragraph{Logic: exploiting resource-sensitive logics.}
In this paper, we study games of resources that are aimed at representing the strategic interactions between rational agents 
where some combinations of resources replace the
abstract notions of action and preferences. In these games,
players are endowed with some resources and have preferences upon some
resources to be available after the game is played. Players' actions
also consist in making available some of the resources they are
endowed with.

We propose a class of games of resources that exploits the formalisms and reasoning methods coming from the
literature in knowledge representation and computational logics,
namely resource-sensitive logics: e.g., Linear Logic, Separation
Logic, BI Logic~\cite{girard1987,Reynolds:2002:SLL:645683.664578,DBLP:journals/bsl/OHearnP99}. The languages of these logics allow a
fine-grained description of resources, processes, and their harmonious
combinations. In computer science, they have been quite successful at
modeling systems for multi-party access and modification of shared
structures, by allocation and deallocation of resources. 
The resources used in this paper are not based on a trivial and na\"ive set theory. Instead, they are based on rich logical languages, supported by elaborate reasoning features.

A resource is represented by one formula of a
resource-sensitive logic \logica. More specifically, we assume here that
\logica is some propositional variant of Linear Logic.
We provide an informal presentation of the resource interpretation of Linear Logic in Section~\ref{sec:resources-ll} so that the conceptual aspects of the paper can be grasped without a great understanding of Linear Logic.

\paragraph{Game theory: individual resource games.}

We will consider individual resource games defined formally in
Section~\ref{sec:irg-dp}. Each player $i$ of a game will be endowed
with a \emph{multiset} of resources $\epsilon_i$. An action for
Player~$i$ will be to contribute a subset of
$\epsilon_i$. An \emph{(action) profile} specifies a contribution for every player.
An \emph{outcome} will be a context consisting of a \emph{multiset} of resources resulting from a profile.
Then, each player $i$ has a goal $\gamma_i$, which is a resource, represented by \emph{one} formula of \logica.
An outcome $X$ satisfies the goal of Player $i$ if there is a proof of $X \vdash
\gamma_i$ in the logic \logica.
This will mean that the resources in $X$ \emph{can} be consumed so as to produce $\gamma_i$.\footnote{Indeed, $X \vdash \gamma_i$ indicates that the resources $X$ are sufficient to produce $\gamma_i$, and $X \vdash \gamma_j$ indicates that the resources $X$ are sufficient to produce $\gamma_j$. It may be however that the resources $X$ are not sufficient to produce $\gamma_i$ and $\gamma_j$ simultaneously.}

\begin{quote}\itshape
Intuitively, we can imagine a game taking place around a table. Each
player has an objective to create some resource. Each player has also
a bag of resources. To play, each player chooses to take some resources
(possibly none) from their respective bags and put them on the table
in front of them. The outcome is the collection of resources on the
table after every player has chosen. A player is satisfied if we
\emph{can} transform the resources on the table so as to produce her
goal. It is a Nash equilibrium when no player has an incentive to take
back any resources she put on the table, or to add more resources from
her bag.
\end{quote}
What should be an incentive to take back or to add resources? We will study these games of resources with two kinds of preferences.
We will first consider, in Section~\ref{sec:dicho-pref}, preferences over outcomes that are \emph{dichotomous}.
We can thus initially say that Player~$i$ prefers an outcome $X$ over an outcome
$Y$ iff $X \vdash \gamma_i$ and $Y \not\vdash \gamma_i$. Some formal
results will lead us to define in Section~\ref{sec:parsi-pref}, \emph{parsimonious} preferences, a finer notion of preference where $i$ may be
qualitatively indifferent between $X$ and $Y$, but still prefer $X$
over $Y$ because $i$'s contribution is strictly less in $X$ than in
$Y$.

\paragraph{Algorithms and complexity: solving problems.}
We will study three decision problems defined also in
Section~\ref{sec:irg-dp}, the first of which is deciding whether an
action profile is a Nash equilibrium.
A Nash equilibrium is, under strategic considerations, a local optimal.
A situation in which every agent has picked an action is a Nash equilibrium when no agent has an incentive to change their mind.
A variant of this example, with one additional player, will be formalized later in Section~\ref{sec:inter-econ}.

\begin{example}\label{example:telecom1}
  In a local telecom industry, anti-trust laws forbid a priori cooperation, and regulations oblige the companies to accept traffic from each other.
  (These telecom companies operate in an interconnected economy~\cite{ambaketal1994interconnection,brock1995interconnection}.)
  Consider two competing telecommunication companies. 
  Company~$A$ manages a 3G network of comprised capacity~$3$ (bundled as capacities~$1$, and $2$). 
  Company~$B$ manages a 4G network of capacity~$3$ (bundled as capacities~$1$, and $2$).
  Company~$A$ need to offer their customers 3G at capacity~$2$ and 4G at capacity~$1$. Company~$B$ need to offer their customers 3G at capacity $2$ and 4G at capacity~$2$.

Activating a network at some capacity has a cost. 
Companies can privately activate and deactivate networks on the fly. 
What are the possible equilibria?

There are two Nash equilibria. First, there is the one where Company~$A$ provides a bundle of two 3G antennas and Company~$B$ provides a bundle of two 4G antennas. Both companies can achieve their goal, and none has an incentive to reduce their contribution as they would not satisfy their goal anymore. Second, there is the one where both Company~$A$ and Company~$B$ contribute nothing. None of them has an incentive to change their contribution since they would not be able to achieve their goal on their own.


\end{example}

When dealing with resources,
interesting questions arise as to whether some equilibria can be
eliminated or constructed by a central authority by redistributing the
available resources among the players~\cite{harrenstein2015aamas}.
In the tradition of social mechanism design, redistribution schemes can be used by a central authority to enforce some behavior, either by disincentivizing a behavior or incentivizing a behavior.
Formal frameworks dealing with redistribution schemes and economic policies have been studied \cite{Endriss:2011:DIB:2030470.2030482,Levit:2013:TSB:2484920.2484952,naumovmodal}.

Some profiles that are not equilibria can have desirable outcomes. Some equilibria can have outcomes that are undesirable.
Desirability must here be understood from the point of view of a system designer.
A system designer can redistribute the resources of the players in a game so as to steer the interaction to or away from a particular outcome.
\begin{quote}\itshape
  A redistribution consists in reallocating the resources endowed to the players. To every redistribution corresponds a new game where the players maintain their objectives, but their possible actions have changed. If $G^\epsilon$ is the original game, and $\epsilon'$ is a redistribution of the endowment function $\epsilon$, then $G^{\epsilon'}$ is a new game.
\end{quote}
We will investigate how resource distribution schemes can contribute to eliminate undesirable game equilibria, and construct desirable game equilibria.
They are a form of redistribution of wealth, which consists in wealth being transferred from some individuals to others. In our economies, it exists in the form of social mechanisms such as taxation, public services, and confiscation. Division of property and division of debt in divorce law are good imagery of what a designer can do in the mechanisms we propose in this paper. This example will be formalized later in Section~\ref{sec:divorce}.
\begin{example}\label{ex:divorce}
  Ann and Bernard, a couple of bakers, have filed for divorce. Ann is officially the tenant of the business premises of the bakery. Bernard is the owner of the baking equipment. He also owns enough flour to make bread for two years. 
  Ann would like to be able to keep the means of production, and being able to make bread for one year.
  Bernard wants to keep the shop. In this context, if Ann and Bernard are parsimonious, the outcome is very likely to be the one where Ann does not use the shop and Bernard does not use the breadmaking equipment and the flour. It is the only equilibrium. Neither of them satisfy their objective.

  However, an arbitrator can redistribute their endowments. He can give the equipment and half the flour to Ann, and give the shop to Bernard. Doing so, the outcome where Ann and Bernard do not use any of their endowment can be eliminated. Moreover, a new outcome equilibrium can be constructed where both satisfy their objectives.
\end{example}

We will thus look at two decision problems related to Nash equilibria: \emph{rational elimination} and \emph{rational construction} of Nash equilibria.
\begin{quote}\itshape
In a game $G^{\epsilon}$, a profile can be rationally eliminated from a game if there exists a redistribution $\epsilon'$ of $\epsilon$ such that there is no profile with the same outcome which is a Nash equilibrium in $G^{\epsilon'}$. A profile can be rationally constructed if there exists a redistribution $\epsilon'$ such that there is a profile in $G^{\epsilon'}$ with the same outcome, which a Nash equilibrium. 
\end{quote}

\paragraph{Outline.} We make a brief presentation of Linear Logic in Section~\ref{sec:resources-ll}. We explain how the language can be used to capture a variety of resources which we will put to use in the remainder of the paper.
We present individual resource games formally in Section~\ref{sec:irg-dp}. We also introduce precisely the decision problems {\sf NASH EQUILIBRIUM}, {\sf RATIONAL ELIMINATION}, and {\sf RATIONAL CONSTRUCTION}. 
We will use and study two kinds of preferences over action profiles. 
We define dichotomous preferences in Section~\ref{sec:dicho-pref}. We study all three decision problems. We propose general algorithms and general complexity results depending on the complexity of sequent provability in \logica, and on whether \logica admits the weakening rule or not (that is, whether \logica is linear or affine). 
We do the same for parsimonious preferences in Section~\ref{sec:parsi-pref}. 
We also illustrate the decision problems with a few small examples.
We present more thorough examples in Section~\ref{sec:examples}. In particular, we formalize Example~\ref{ex:divorce} in Section~\ref{sec:divorce}, and a variant Example~\ref{example:telecom1} in Section~\ref{sec:inter-econ}, and we illustrate the findings of this paper on them.
Some concluding remarks are offered in Section~\ref{sec:conclusions}.

\medskip

We provide a technical appendix. Specifically, Appendix~\ref{sec:rules} presents the sequent rules of the biggest fragment of Linear Logic used in the paper.
Appendix~\ref{sec:complexity} briefly summarizes some elements of computational complexity that can be useful to the reader.

\section{Resources and Linear Logic}\label{sec:resources-ll}
One contribution of this paper is to show that resource-sensitive logics are a useful tool for studying the formal aspects of resources in game theoretical settings.
Another contribution is to demonstrate that it is possible to obtain rather general results for a large class of games of resources depending on the formal properties of the logic \logica we start with.
This offers the opportunity to tailor a game to the needs of a certain application without changing the framework.
We can indeed choose any sensible fragment of a resource-sensitive logic.

We will work with some fragments of Linear Logic~\cite{girard1987}.
The conceptual aspects of the paper can be grasped without a great understanding of Linear Logic, but the technical results will draw upon the proof theory and its rules presented in the Appendix~\ref{sec:rules}. A basic understanding of logic is thus necessary to follow the proofs in general, and some intuitions about the resource interpretation of Linear Logic can hopefully contribute to make reading through the remainder of this paper less dull.

\subsection{Formulas and sequents}
A good introduction to Linear Logic and its variants is
\cite{Troelstra1992}. We will use logics defined on the language of
propositional Linear Logic. The classical tautology splits into the
additive~$\top$ and the multiplicative~$\mathbf{1}$. The classical
falsum splits into the additive~$\mathbf{0}$ and the multiplicative~$\bot$. The additive conjunction and disjunction are respectively~$\with$ and~$\oplus$. The multiplicative conjunction and disjunction are
respectively~$\parr$ and~$\otimes$. The linear implication is $A \multimap B$ and combines with the multiplicative conjunction such that $(A
\otimes (A \multimap B)) \multimap B$ is a valid principle. The linear
negation is $\lineg A$.

MLL is the multiplicative fragment, whose language is formalized by the grammar 
$A ::= \mathbf{1} | \bot | p | \lineg A |\linebreak A \parr A | 
                                A \otimes A | A \multimap A$, 
where $p$ is an atomic formula. It only contains the multiplicative connectives.
MALL is the fragment with both
additive and multiplicative operators
$A ::= \top | \mathbf{0} | \mathbf{1} | \bot | p | 
\lineg A |
A \parr A | A \otimes A | A \multimap A | 
A \with A | A \oplus A$.

\medskip

We now introduce some terminology and notations.
A \emph{sequent} is a statement $ \Gamma \vdash \Delta$ where $\Gamma$ and $\Delta$ are finite multisets of occurrences of formulas of $\logica$. Often, we can conveniently write a multiset $\{A_1, \ldots, A_n\}$ as the list of formulas $A_1, \ldots, A_n$. 
Also, we use the notation $\Gamma^* = \bigotimes_{A \in \Gamma} A$ and $\emptyset^* = \mathbf{1}$. 
An \emph{intuitionistic sequent} is a sequent $\Gamma \vdash A$ with only one formula to the right.
Sequent provability will play an important part in the technical work of the paper.
A sequent $\Gamma \vdash \Delta$ is \emph{provable} in \logica if
there exists a linear proof using the rules of the logic \logica.
Intuitively, $\Gamma \vdash \Delta$ being provable means that the resources in $\Gamma$ can be transformed into either of the resources in $\Delta$.
If a sequent $\Gamma \vdash \Delta$ is not provable, we can write $\Gamma \not\vdash \Delta$, although we will also often simply write ``not $\Gamma \vdash \Delta$''. Section~\ref{sec:seq-prov-def-compl-summary} summarizes the computational complexity characterizations of a few fragments of Linear Logic in terms sequent provability.

In the individual resource games introduced in this paper, the action of a player~$i$ consists in making available a multiset $C_i$ of formulas/resources. The outcome of an action is the multiset union of all the individual actions: $\Gamma = \biguplus_i C_i$.\footnote{We use $\biguplus$ for the multiset union, and $\bigcup$ for the set union.} The goal of a player is a formula/resource $\gamma$. To decide whether the profile with outcome $\Gamma$ satisfies the goal $\gamma$ of a player, we will evaluate the provability of the (intuitionistic) sequent $\Gamma \vdash \gamma$.
%

%
The logic captured by all the rules in the Appendix~\ref{sec:rules} is Affine MALL.

A rule that is \emph{not} part of the calculus is the structural rule of contraction. One rule of contraction (left contraction) says that if something can be proved with two occurrences of $A$, then it can be proved with only one occurrence. Symbolically,
\[
\AxiomC{$\Gamma, A,A \vdash \Delta$} \UnaryInfC{$\Gamma, A \vdash \Delta$} \DisplayProof \enspace .
\]
This is prohibited in every resource-sensitive logic. Integrating it into Linear Logic, one consequence would be that $A \vdash A \otimes A$. If we interpret formulas as resources---as we do---contraction would be a license to duplicate resources at will. (See~\cite{restall94thesis} for a detailed account of logics without contraction.)

\medskip

\begin{table}[t]
\centering
\bgroup
\def\arraystretch{1.2}
\begin{tabular}{|lcl|}\hline
  \cellcolor[gray]{\transparency}{$\lineg \lineg A$}  & \cellcolor[gray]{\transparency}{$\dashv\vdash$} & \cellcolor[gray]{\transparency}{$A$}\\  
  $\lineg (A \with B)$ & $\dashv\vdash$ & $(\lineg A) \oplus (\lineg B)$\\  
  \cellcolor[gray]{\transparency}{$A \parr B$} & \cellcolor[gray]{\transparency}{$\dashv\vdash$} & \cellcolor[gray]{\transparency}{$(\lineg A) \multimap B$}\\
  $\lineg (A \otimes B)$ & $\dashv\vdash$ & $(\lineg A) \parr (\lineg B)$\\
  \cellcolor[gray]{\transparency}{$A \parr \bot$} & \cellcolor[gray]{\transparency}{$\dashv\vdash$} & \cellcolor[gray]{\transparency}{$A$}\\  
  $A \otimes \mathbf{1}$ & $\dashv\vdash$ & $A$\\
  \cellcolor[gray]{\transparency}{$A \with \top$} & \cellcolor[gray]{\transparency}{$\dashv\vdash$} & \cellcolor[gray]{\transparency}{$A$}\\
  $A \oplus \mathbf{0}$ & $\dashv\vdash$ & $A$\\  
  \cellcolor[gray]{\transparency}{$\mathbf{0}$} & \cellcolor[gray]{\transparency}{$\dashv\vdash$} & \cellcolor[gray]{\transparency}{$\lineg \top$}\\
  $\mathbf{\bot}$ & $\dashv\vdash$ & $\lineg \mathbf{1}$\\\hline
\end{tabular}
\egroup
\caption{\label{tab:equiv-ll}  Remarkable relationship between the Linear Logic connectives. The symbol $\dashv\vdash$ indicates provability in both directions.}
\end{table}

We must concede that some of the connectives of MLL and MALL do not have an intuitive interpretation in terms of resources, in and of themselves. This is the case of the multiplicative and the additive falsums ($\bot$, $\mathbf{0}$), and of the somehow infamous multiplicative disjunction $\parr$. Fortunately, we do not need them to enjoy the full expressivity of Linear Logic. To see that, Table~\ref{tab:equiv-ll} shows how the connectives interact. From it, it is clear that we can as well make without some language redundancy. The resource-interpretable language of MLL is
\[A ::= \mathbf{1} \mid p \mid \lineg A \mid A \otimes A \mid A \multimap A \enspace ,\]
and the resource-interpretable language of MALL is 
\[A ::= \top \mid \mathbf{1} \mid p \mid 
\lineg A \mid A \otimes A \mid A \multimap A \mid 
A \with A \mid A \oplus A \enspace .\]
It suffices to see the other connectives as definitions, following the equivalences of Table~\ref{tab:equiv-ll}. We define $\bot = \lineg \mathbf{1}$, $\mathbf{0} = \lineg \top$, and $A \parr B = (\lineg A) \multimap B$.

\subsection{Resources as propositions} 
A resource captured by a proposition of Linear Logic, can be atomic
like one mole of hydrogen $\var{H}$ or one mole of oxygen $\var{O}$. It can be a
simultaneous combination of resources, e.g., $\var{O} \otimes \var{O}$ being two moles
of oxygen. A resource can be a process transforming resources, e.g.,
$\var{H_2O} \otimes \var{H_2O} \multimap \var{H_2} \otimes \var{H_2} \otimes O_2$ would be the
well known chemical reaction of electrolysis. It consumes two moles of
water to produce two moles of dihydrogen and one mole of dioxygen.
Working harmoniously with resources and resource transformation
processes with this meticulous control over their combination is made
possible using resource-sensitive logics. In a game where a player is
endowed with $2q$ moles of water and a player is endowed with $q$
processes of electrolysis, it is possible to consume these resources
and produce $2q$ moles of hydrogen gas and $q$ of oxygen gas. But not
more!

In Section~\ref{sec:alan-fish}, we will illustrate our games with an
example using chemical reactions.
But for the time being, we explain in more details how the refined operators of Linear Logic can be used to formalize and grasp a variety of resources.
Table~\ref{tab:readings-formulas} reports possible readings of the connectives.

\begin{table}[ht]
\centering
\bgroup
\def\arraystretch{1.2}
  \begin{tabular}{|ll|}\hline
    \cellcolor[gray]{\transparency}{$A \otimes B$} & \cellcolor[gray]{\transparency}{$A$ and $B$ simultaneously}\\
    $A \with B$ & a deterministic choice between $A$ and $B$; not both\\
    \cellcolor[gray]{\transparency}{$A \oplus B$} & \cellcolor[gray]{\transparency}{$A$ or $B$ non-deterministically; not both}\\
    $A \multimap B$ & $A$ is sufficient to produce $B$ (losing $A$ in the process)\\
    \cellcolor[gray]{\transparency}{$\mathbf{1}$} & \cellcolor[gray]{\transparency}{vacuous resource}\\
    $\top$ & some resource\\\hline
  \end{tabular}
  \egroup
  \caption{\label{tab:readings-formulas} Possible resource interpretations of formulas.}
\end{table}
Now, whether the occurrence of a resource indicates a consumption or a production of the resources depends on where a formula appears in the sequent. The sequent of Linear Logic \[A \vdash B\] can be read as
\[\text{``if you give $A$ you can receive $B$''} \enspace.\]
Hence, as it should be expected, we give the resources at the left of the sequent, and receive the resources at the right of the sequent. Table~\ref{tab:readings-sequents} reports possible readings of the sequents. 
\begin{table}[ht]
\centering
\bgroup
\def\arraystretch{1.2}
\begin{tabular}{|lp{0.7\linewidth}|}\hline
    \cellcolor[gray]{\transparency}{$\Gamma \vdash A \otimes B$} & \cellcolor[gray]{\transparency}{receive $A$ and $B$ simultaneously}\\
    $\Gamma \vdash A \with B$ & choose whether to receive $A$ or $B$; you can't receive both\\
    \cellcolor[gray]{\transparency}{$\Gamma \vdash A \oplus B$} & \cellcolor[gray]{\transparency}{receive $A$ or $B$; you don't choose; you won't receive both}\\
    $\Gamma \vdash A \multimap B$ & receive a resource that can be used in such a way that, if you give $A$, you receive $B$ (losing $A$ in the process)\\
    \cellcolor[gray]{\transparency}{$A \otimes B \vdash \Delta$} & \cellcolor[gray]{\transparency}{give $A$ and $B$ simultaneously}\\
    $A \with B \vdash \Delta$ & choose whether to give $A$ or $B$; you don't give both\\
    \cellcolor[gray]{\transparency}{$A \oplus B \vdash \Delta$} & \cellcolor[gray]{\transparency}{give $A$ or $B$; you don't choose; you don't give both}\\
    $A \multimap B \vdash \Delta$ & give a resource that can be used in such a way that, if you give $A$, you receive $B$ (losing $A$ in the process)\\\hline
\end{tabular}
\egroup
  \caption{\label{tab:readings-sequents} Possible resource interpretations of sequents.}
\end{table}
The linear negation allows one to switch the give/receive mode. The sequent $A \vdash \lineg B$ represents ``give $A$ and $B$, and receive nothing''. The sequent $A, \lineg B \vdash \bot$ represents ``give $A$ and receive $B$''.

\begin{example}
A few items can be obtained from vending machine in exchange of money. For instance, giving $\var{\$1}$ you can choose to receive a chocolate bar or a soft-drink. This is captured by \[\var{\$1} \vdash \var{chocobar} \with \var{drink} \enspace .\] Also, giving $\var{\$0.8}$ you can receive $2$ packs of gum. This is captured by:
\[\$0.8 \vdash \var{gum} \otimes \var{gum}\enspace .\]
\end{example}
In the previous example, the formula $\var{chocobar} \with \var{drink}$ denotes a deliberative choice between $\var{chocobar}$ and $\var{drink}$. One \emph{and} the other can be obtained from $\var{\$1}$, but not both. This is significantly different from $\var{\$1} \vdash \var{chocobar} \oplus \var{drink}$ which denotes something more akin to the classical disjunction: $\var{chocobar}$ or $\var{drink}$ can be obtained from $\var{\$1}$. But for all we know, it might be impossible to actually get one or to get the other, and we don't get to decide.
\begin{example}
We can represent a simple act of gambling. The sequent
\[  \var{\$1} \vdash (\var{\$1}\otimes \var{\$1}) \oplus \mathbf{1} \] 
captures the fact that you can give $\var{\$1}$ to receive $\$2$ or nothing (the vacuous resource); but you don't choose what you get.
\end{example}
The next example uses most of the resource-interpretable connectives.
\begin{example}
We can capture the fact that $\var{\$17}$ get you a menu:
\[\var{\$17} \vdash \var{menu} \enspace .\]
The menu consists of a main dish, a side dish, and a dessert:
\[\var{menu} \vdash \var{dish} \otimes \var{side} \otimes \var{dessert} \enspace .\]
As main dish, you can choose between fish and meat:
\[
\var{dish} \vdash \var{fish} \with \var{meat} \enspace .
\]
The side dish depends on the season; you don't choose; it is either aubergine, or parsnip with leek, or asparagus:
\[\var{side} \vdash \var{aubergine} \oplus (\var{parsnip} \otimes \var{leek}) \oplus \var{asparagus} \enspace .\]
Finally, as dessert, you choose between the strudel and the chocolate tart. Moreover, you choose whether to have ice cream for $\var{\$1}$ extra, or to have no extra (the vacuous resource).
\[\var{dessert} \vdash (\var{strudel} \with \var{chocotart}) \otimes ((\var{\$1} \multimap \var{icecream}) \with \mathbf{1}) \enspace .
\]
\end{example}
We have not illustrated the additive unit $\top$ yet. The next example hints at the upcoming formalization of Example~\ref{ex:divorce} in Section~\ref{sec:divorce}.
\begin{example}
  We can formalize the function of the whole baking equipment (mixer, oven, etc) as the resource transformation process $\var{flour} \multimap \var{bread}$. That is, the equipment transforms flour into bread. (Arguably ignoring that we would also need water and electricity. For simplicity, water and electricity could here be considered resources that are provably equivalent to the vacuous resource $\mathbf{1}$.)
  The sequent
  \[\var{flour}, \var{flour}, \var{flour} \multimap \var{bread} \vdash \var{bread} \otimes \top\]
  indicates that with two `tokens' of flour and the breakmaking equipment, one can make bread, and \emph{some} resources will remain in excess, viz., $\var{flour}$.
\end{example}
The additive unit $\top$ has some connection with the relationship between linear and affine reasoning that we now discuss briefly.

\subsection{Linear vs.\ affine reasoning and preferences}
Weakening (rules $(W)$ in the Appendix~\ref{sec:rules}) in the logic \logica can play a crucial role in the satisfaction of the goals of the players. It will also have striking consequences for the algorithmic solutions of the decision problems that we study in this paper.

In the context of resource-sensitive logic, one rule of weakening (left weakening) says that if something can be obtained from a set of resources then it can also be obtained from more resources. Symbolically,
\[
\AxiomC{$\Gamma \vdash \Delta$}
\UnaryInfC{$\Gamma, A \vdash \Delta$} \DisplayProof
\]
Weakening gives a monotonic flavor to the process of deduction in the logic.
Following the terminology in Linear Logic, a logic \logica admitting weakening will be referred to as \emph{affine} and a logic \logica without weakening will just be referred to as \emph{linear}.

Despite the fact the Affine Logic admits more inference rules than Linear Logic, the unit $\top$ allows one to simulate the reasoning in Affine Logic with the provability of Linear Logic. Indeed, the sequent $\Gamma \vdash A$ is provable in a logic \logica with the rule of weakening iff the sequent $\Gamma \vdash A \otimes \top$ is provable the logic \logica without using weakening.

In the affine case, $A, B \vdash A$ is a provable sequent. If a player has a goal $\gamma = A$, then she will find her objective satisfied with an outcome
$\{A,B\}$. In the linear case, we have in general $A, B \not\vdash A$
(unless $B$ is a vacuous resource equivalent to $\mathbf{1}$). A player with a goal $\gamma = A$ will not be satisfied with an outcome $\{A,B\}$ as she wants $A$ and nothing more. If she \emph{is} indeed indifferent to leftover resources, her goal can be expressed as $\gamma = A \otimes \top$, when \logica is linear.

Affine logic should be used when extra resources can be disposed of freely. That is, when we can assume that a player satisfied with an outcome would be satisfied with a more sizeable outcome. As we will see in Section~\ref{sec:parsi-pref}, this does not prevent players to behave more parsimoniously when they can.

\subsection{Sequent provability and some complexity characterizations}\label{sec:seq-prov-def-compl-summary}
Given a sequent in a fragment \logica of Linear Logic, the problem of \emph{sequent provability} (or \emph{provability} for short) asks whether the sequent is provable from the sequent rules for \logica. When convenient, we write ``\logica is in $\mathsf{C}$'' when the problem of sequent provability in the logic \logica is in the complexity class $\mathsf{C}$.

Before moving to the technical part of this paper,
we quickly summarize the complexity of sequent provability in some
fragments and variants of Linear Logic that could be used as the
\logica parameter in our analysis resource games.\footnote{See Appendix~\ref{sec:complexity} for some elements of complexity that will be useful in the proofs in this paper.}
The results of this
paper will be applicable to every fragment mentioned here. MALL is
$\mathsf{PSPACE}$-complete; MLL is $\mathsf{NP}$-complete; Affine MLL
is $\mathsf{NP}$-complete; Affine MALL is $\mathsf{PSPACE}$-complete;
Intuitionistic MALL is $\mathsf{PSPACE}$-complete; Intuitionistic MLL
is $\mathsf{NP}$-complete. 
Remarkably, and unlike classical logic, these fragments of Linear
Logic behave well computationally also in the first-order
case. First-Order MLL is $\mathsf{NP}$-complete and First-Order MALL
is $\mathsf{NEXPTIME}$-complete. See \cite{LMSS93,K94}.

We will also consider the weaker fragment that we call MULT:
\[A ::= \mathbf{1} \mid p \mid A \otimes A \enspace .\]
\begin{proposition}\label{prop:mult-easy}
  Sequent provability in Intuitionistic Affine and Intuitionistic Linear MULT is in $\mathsf{PTIME}$.
  \begin{proof}
    Linear MULT is captured by the rules (ax), (cut), (E), ($\otimes$R), ($\otimes$L), ($\mathbf{1}$L), and ($\mathbf{1}$R). Affine MULT also requires $(W)$.
    To check whether the Intuitionistic sequent $\Gamma \vdash A$ is provable, it suffices to check whether $\flat^\bullet(\Gamma) \supseteq \flat(A)$ in the case of Affine MULT or $\flat^\bullet(\Gamma) = \flat(A)$ in the case of Linear MULT, where the flattening functions $\flat$ and $\flat^\bullet$ are defined as follows:
    \begin{itemize}
    \item $\flat(\mathbf{1}) = \emptyset$
    \item $\flat(p) = \{p\}$
    \item $\flat(A\otimes B) = \flat(A) \uplus \flat(B)$
    \item $\flat^\bullet(\emptyset) = \emptyset$
    \item $\flat^\bullet(\{A\} \uplus \Delta) = \flat(A) \uplus \flat^\bullet(\Delta)$
    \end{itemize}
    Both multiset inclusion and multiset equality can be performed in linear time in the number of elements in the sets.
  \end{proof}
\end{proposition}

\section{Individual resource games and decision problems}
\label{sec:irg-dp}

We formally define our models of individual resource games.\footnote{Individual resource games were called 
  \emph{ideal} resource games in~\cite{Tr16ijcai}.} 
\begin{definition}
An \emph{individual resource game} (IRG) is a tuple $G = (N,\gamma_1, \ldots, \gamma_n, \epsilon_1, \ldots, \epsilon_n)$ where:
\begin{itemize}
\item $N = \{1, \ldots, n\}$ is a finite set of players;
\item $\gamma_i$ is a formula of \logica ($i$'s goal, or objective);
\item $\epsilon_i$ is a finite multiset of formulas of \logica ($i$'s endowment).
\end{itemize}
\end{definition}
Let $G = (N,\gamma_1, \ldots, \gamma_n, \epsilon_1, \ldots,
\epsilon_n)$, we define: the set of possible actions of $i$ as the set
of multisets $\ch_i(G) = \{C \mid C \subseteq \epsilon_i \}$, and the set of \emph{profiles} in $G$ as
$\ch(G) = \prod_{i \in N} \ch_i(G)$. When $P= (C_1, \ldots , C_k) \in
\ch(G)$ and $1 \leq i \leq k$, then $P_{-i} = (C_1, \ldots ,
C_{i-1},C_{i+1}, \ldots, C_k)$. That is, $P_{-i}$ denotes $P$ without
player $i$'s contribution. The \emph{outcome} of a profile $P = (C_1, \ldots , C_n)$
is given by the multiset of resources $\out(P) = \biguplus_{1\leq i
  \leq n} C_i$.

We will define ``$i$ strongly prefers $P$ over $P'$'' in due time, reflecting
dichotomous preferences first (Section~\ref{sec:dicho-pref}) and
parsimonious preferences second (Section~\ref{sec:parsi-pref}).
\begin{definition}Let $G = (N,\gamma_1, \ldots, \gamma_n, \epsilon_1, \ldots, \epsilon_n)$. A profile $P \in \ch(G)$ is a \emph{Nash equilibrium} iff for all $i \in N$ and for all $C_i \in \ch_i(G)$, we have that $i$ does not strongly prefer $(P_{-i}, C_i)$ over $P$. 
\end{definition}
Let us note $NE(G)$ the set of Nash equilibria in $\ch(G)$.

\medskip
A basic decision problem is the one of determining whether a choice profile is a Nash equilibrium.

\noindent\begin{minipage}{\linewidth}
\medskip
\hrule height 1pt
\smallskip
\textsf{NASH EQUILIBRIUM (NE)}
\smallskip
\hrule
\begin{description}
\item[{\bf (in)}] An individual resource game $G$ and $P \in \ch(G)$.
\item[{\bf (out)}] $P \in NE(G)$?
\end{description}
\hrule
\medskip
\end{minipage}

Some profiles that are not equilibria can have desirable
outcomes. Some equilibria can have outcomes that are
undesirable. Hence, it is interesting to investigate how resource
distribution schemes influence how undesirable game equilibria can be
eliminated and how desirable game equilibria can be constructed.

In the tradition of social mechanism design, redistribution schemes
can be used by a central authority to enforce some behavior, either
by disincentivizing a behavior or incentivizing a behavior.

\medskip
We will study redistribution schemes in individual resource games. Let $\epsilon$ be an endowment function such that for every player $i$ we have $\epsilon(i) = \epsilon_i$, a multiset of formulas of \logica. A redistribution scheme of $\epsilon$ is an endowment function $\epsilon'$ such that \[\biguplus_{i \in N} \epsilon(i) = \biguplus_{i \in N} \epsilon'(i) \enspace .\]
We note $\redis(\epsilon)$ the set of redistributions of the endowment function $\epsilon$.

Given the individual resource game $G^{\epsilon} = (N, \gamma_1, \ldots, \gamma_n, \epsilon(1),
\ldots, \epsilon(n))$ we can apply a redistribution scheme where we modify the endowment function $\epsilon$ into $\epsilon'$. We thus obtain the individual resource game 
$G^{\epsilon'} = (N, \gamma_1, \ldots, \gamma_n, \epsilon'(1), \ldots, \epsilon'(n))$.

We will investigate two decision problems inspired
by~\cite{harrenstein2015aamas}, which are related to resource
redistributions. We will look at whether the outcome of a resource
game can be rationally eliminated. That is whether there is a resource
redistribution such that no Nash equilibrium of the new resource game
yields this outcome.

\noindent\begin{minipage}{\linewidth}
\medskip
\hrule height 1pt
\smallskip
\textsf{RATIONAL ELIMINATION (RE)}
\smallskip
\hrule
\begin{description}
\item[{\bf (in)}] An individual resource game $G^\epsilon$ and $P \in \ch(G^\epsilon)$.
\item[{\bf (out)}] Is there a redistribution $\epsilon'$ of $\epsilon$ such that
  for all $P' \in \ch(G^{\epsilon'})$, if $\out(P') = \out(P)$ then
  $P' \not\in NE(G^{\epsilon'})$?
\end{description}
\hrule
\medskip
\end{minipage}

Conversely, we will look at whether the outcome of a resource game can be rationally constructed. That is whether there is a resource redistribution such that the outcome is the outcome of some Nash equilibrium in the new resource game.

\noindent\begin{minipage}{\linewidth}
\medskip
\hrule height 1pt
\smallskip
\noindent\textsf{RATIONAL CONSTRUCTION (RC)}
\smallskip
\hrule
\begin{description}
\item[{\bf (in)}] An individual resource game $G^\epsilon$ and $P \in \ch(G^\epsilon)$.
\item[{\bf (out)}] Is there a redistribution $\epsilon'$ of $\epsilon$ such that
  there is $P' \in \ch(G^{\epsilon'})$ where $\out(P') = \out(P)$ and $P' \in
  NE(G^{\epsilon'})$?
\end{description}
\hrule
\medskip
\end{minipage}

Note that being a game equilibrium is without ambiguity a property of
profile. However, after a redistribution of resources in an individual
resource game, the space of actions and the space of profiles
change. Thus, elimination and construction are more about the
\emph{outcomes} of profiles. Section~\ref{sec:elim:illustr} and
Section~\ref{sec:illustr-pasimony} will illustrate these decision
problems in due time.

\section{Dichotomous preferences}
\label{sec:dicho-pref}

Let $G = (N, \gamma_1, \ldots, \gamma_n, \epsilon_1, \ldots,
\epsilon_n)$ be an individual resource game.
Player~$i$, whose goal is $\gamma_i$, realizes her objectives in a profile $P$ when $\out(P) \vdash \gamma_i$. That is, the resources in $\out(P)$ can be transformed into a shareable resource $\gamma_i$.
For $P \in \ch(G)$ and $Q \in
\ch(G)$, we say that player $i \in N$ (dichotomously) strongly prefers
$P$ over $Q$ (noted $Q \prec_i P$) iff $\out(P) \vdash \gamma_i$ and
not $\out(Q) \vdash \gamma_i$.




\begin{proposition}\label{prop:compl-dicho-pref}
Let $G = (N, \gamma_1, \ldots, \gamma_n, \epsilon_1, \ldots, \epsilon_n)$ be an individual resource game, two profiles $P \in \ch(G)$ and $Q \in
\ch(G)$, and a player $i \in N$.
The problem of deciding whether $Q \prec_i P$ is: in $\mathsf{PTIME}$ when provability in \logica is in $\mathsf{PTIME}$.
It is $\mathsf{NP \land coNP} = \mathsf{B\var{H_2}}$-complete when provability in \logica is $\mathsf{NP}$-complete.
It is $\mathsf{PSPACE}$-complete when provability in \logica is $\mathsf{PSPACE}$-complete.
\begin{proof}
The language corresponding to the problem is $L = \{ (P,Q) \mid Q \prec_i P \} = L_1 \cap L_2$ with $L_1 = \{ (P,Q) \mid \out(P) \vdash \gamma_i \}$, and $L_2 = \{ (P,Q) \mid \text{ not } \out(Q) \vdash \gamma_i\}$.
In particular, when the problem of provability in \logica is in $\mathsf{NP}$, we clearly have that $L_1$ is a $\mathsf{NP}$ language and $L_2$ is a $\mathsf{coNP}$ language.

For hardness, we consider a newly fabricated decision problem that we call \textsf{PROV-NONPROV}. The problem \textsf{PROV-NONPROV} takes in input two sequents of \logica $\Gamma_1 \vdash \Delta_1$ and $\Gamma_2 \vdash \Delta_2$, and outputs true iff $\Gamma_1 \vdash \Delta_1$ is provable and $\Gamma_2 \vdash \Delta_2$ is not provable. It is easy to see that if \logica is $\mathsf{NP}$-complete, then \textsf{PROV-NONPROV} is $\mathsf{B\var{H_2}}$-complete, and if \logica is $\mathsf{PSPACE}$-complete, then \textsf{PROV-NONPROV} is $\mathsf{PSPACE}$-complete.

We propose a reduction of \textsf{PROV-NONPROV} into the problem of deciding whether in an individual resource game, a profile is strongly preferred to another profile by a player.

Let $\Gamma_1 \vdash \Delta_1$ and $\Gamma_2 \vdash \Delta_2$ be two sequents of \logica.
We can prove using $\bot L$, $\bot R$, (cut) and (E) that $\Gamma \vdash \Delta$ iff $\Gamma \vdash \Delta, \bot$.
Thus, we have $\Gamma_1 \vdash \Delta_1$ iff $\Gamma_1, \lineg \Delta_1 \vdash \bot$, and we have $\Gamma_2 \vdash \Delta_2$ iff $\Gamma_2, \lineg \Delta_2 \vdash \bot$.\footnote{For $\Gamma = \{A_1, \ldots, A_k\}$ we note $\lineg \Gamma$ the set  $\{\lineg A_1, \ldots, \lineg A_k\}$.}
Now we construct the game $G = (\{1\}, \gamma_1 = \bot, \epsilon_1 = \Gamma_1 \uplus \lineg \Delta_1 \uplus \Gamma_2 \uplus \lineg \Delta_2)$.
It is now easy to see that \textsf{PROV-NONPROV} instantiated with $\Gamma_1 \vdash \Delta_1$ and $\Gamma_2 \vdash \Delta_2$ returns true iff Player~$1$ strongly prefers $(\Gamma_1 \uplus \lineg\Delta_1)$ over $(\Gamma_2 \uplus \lineg\Delta_2)$ in $G$.
\end{proof}
\end{proposition}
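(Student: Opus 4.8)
The plan is to split the proof into a membership argument and a hardness argument, treating the NP and PSPACE cases in parallel. The statement $Q \prec_i P$ unfolds by definition into the conjunction $\out(P) \vdash \gamma_i$ \emph{and} $\out(Q) \not\vdash \gamma_i$, so the natural first move is to write the language as $L = L_1 \cap L_2$ where $L_1$ records the positive provability and $L_2$ records the failure of provability. For membership, when sequent validity in \logica is in $\mathsf{NP}$, the set $L_1$ is immediately in $\mathsf{NP}$ (guess a proof of $\out(P)\vdash\gamma_i$) and $L_2$ is in $\mathsf{coNP}$, so $L$ lies in the class $\mathsf{NP}\wedge\mathsf{coNP}$, which is exactly the second level $\mathsf{BH_2}$ of the Boolean hierarchy over $\mathsf{NP}$. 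When sequent validity is $\mathsf{PSPACE}$-complete, both $L_1$ and $L_2$ are in $\mathsf{PSPACE}$, and since $\mathsf{PSPACE}$ is closed under complement and intersection, $L$ is in $\mathsf{PSPACE}$ as well. This disposes of the easy direction.

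\textbf{Reduction from a canonical two-sequent problem.} For hardness I would introduce an auxiliary problem \textsf{VALID-NONVALID} whose input is a pair of sequents and which accepts iff the first is valid and the second is not. This problem is the ``canonical'' complete problem for the relevant class: when validity is $\mathsf{NP}$-complete it is $\mathsf{BH_2}$-complete (it is literally a valid instance conjoined with a non-valid instance, the defining shape of a $\mathsf{BH_2}$ problem), and when validity is $\mathsf{PSPACE}$-complete it stays $\mathsf{PSPACE}$-complete since $\mathsf{PSPACE}$ absorbs the complement. I would then reduce \textsf{VALID-NONVALID} to the preference-decision problem by engineering a single game in which the two input sequents become the two competing profiles for one player.

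\textbf{The encoding via $\bot$ and linear negation.} The technical heart is turning an arbitrary sequent $\Gamma \vdash \Delta$ into the \emph{intuitionistic} form $X \vdash \gamma$ that the game machinery recognizes, since a profile's satisfaction is tested only against a single goal formula on the right. The key identity to establish is that $\Gamma \vdash \Delta$ is provable iff $\Gamma, \lineg\Delta \vdash \bot$ is provable; this follows from the rules $\bot L$, $\bot R$, (cut) and exchange (E), using that $\Gamma\vdash\Delta$ iff $\Gamma\vdash\Delta,\bot$. With this in hand I set the goal $\gamma_1 = \bot$ and define a single-player game whose endowment is the multiset $\Gamma_1 \uplus \lineg\Delta_1 \uplus \Gamma_2 \uplus \lineg\Delta_2$, so that the player has enough resources to play either $\Gamma_1 \uplus \lineg\Delta_1$ or $\Gamma_2 \uplus \lineg\Delta_2$ as a contribution. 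Then $\out$ of the first profile satisfies the goal iff the first sequent is valid, and $\out$ of the second satisfies the goal iff the second sequent is valid, so ``player~$1$ strongly prefers the first profile over the second'' translates exactly to ``first sequent valid and second sequent invalid,'' which is \textsf{VALID-NONVALID}.

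\textbf{Anticipated obstacle.} The routine parts (membership, closure properties, the reduction skeleton) are straightforward; the step that needs genuine care is the provability equivalence $\Gamma \vdash \Delta \iff \Gamma, \lineg\Delta \vdash \bot$. One must verify it holds in \emph{both} the linear and affine variants and for both the $\mathsf{MLL}$/$\mathsf{MALL}$ fragments actually used, checking that the multiset $\lineg\Delta$ of negated formulas behaves correctly under the sequent rules and that no illicit use of weakening or contraction sneaks in. I would also confirm that the single-player, one-goal game genuinely isolates the two profiles as the only relevant contributions, so that the preference relation faithfully mirrors the pair of provability queries.
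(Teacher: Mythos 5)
Your proposal is correct and matches the paper's own proof essentially step for step: the same decomposition $L = L_1 \cap L_2$ for membership in $\mathsf{BH_2}$ (resp.\ $\mathsf{PSPACE}$), the same auxiliary problem \textsf{VALID-NONVALID} for hardness, and the same reduction that sets the goal to $\bot$, converts each sequent $\Gamma_j \vdash \Delta_j$ to the intuitionistic form $\Gamma_j, \lineg\Delta_j \vdash \bot$ via $\bot$L, $\bot$R, (cut), (E), and endows a single player with $\Gamma_1 \uplus \lineg\Delta_1 \uplus \Gamma_2 \uplus \lineg\Delta_2$ so that the two sequents become the two competing contributions. No gaps; the "anticipated obstacle" you flag is exactly the provability equivalence the paper also relies on.
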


\subsection{Finding Nash equilibria}

We study the complexity of the problem {\sf NASH EQUILIBRIUM} with dichotomous preferences.

\subsubsection{Hardness}

We are about to prove the lower bound of the complexity {\sf NE} with dichotomous preferences. Before we do so, observe that  by applying the rules $L\lineg$ and $R\lineg$,
\[A_1, \ldots, A_n \vdash B_1, \ldots, B_m \text{~~~iff~~~} A_1, \ldots, A_n,  \lineg B_2 , \ldots,  \lineg B_m \vdash B_1\]
is immediate. Hence, we can, without loss of generality, consider only the intuitionistic sequents of \logica in the many-to-one reductions of this paper.

\begin{proposition}\label{prop:iNE-hardness}
\iNE is as hard as the problem of checking sequent provability in \logica, even when there is only one player.
\end{proposition}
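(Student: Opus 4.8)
The plan is to exhibit a polynomial-time many-one reduction from sequent validity in \logica to \iNE that uses a game with a single player. By the observation recorded immediately before the proposition (applying $L\lineg$ and $R\lineg$), any sequent $A_1,\ldots,A_n \vdash B_1,\ldots,B_m$ is valid iff the intuitionistic sequent $A_1,\ldots,A_n,\lineg B_2,\ldots,\lineg B_m \vdash B_1$ is valid; hence it suffices to reduce validity of an arbitrary \emph{intuitionistic} sequent $\Gamma \vdash \gamma$ to \iNE.

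Given such a sequent, I would build the one-player game $G = (\{1\}, \gamma_1 = \gamma, \epsilon_1 = \Gamma \uplus \{\gamma\})$ together with the profile $P = (\Gamma)$, in which Player~$1$ contributes exactly the multiset $\Gamma$ and withholds the spare copy of $\gamma$. This is a legal profile, since $\Gamma \subseteq \Gamma \uplus \{\gamma\}$, and it is computable in linear time. Its outcome is $\out(P) = \Gamma$.

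The core of the argument is the specialization of the Nash-equilibrium condition to one player. Unfolding the definitions of dichotomous strong preference and of \iNE, a profile $P = (C_1)$ is a Nash equilibrium iff either $\out(P) \vdash \gamma_1$, or no contribution $C_1' \subseteq \epsilon_1$ satisfies $C_1' \vdash \gamma_1$. I would then check the two cases. If $\Gamma \vdash \gamma$, then $\out(P) = \Gamma \vdash \gamma_1$, so $P$ is trivially a Nash equilibrium. If $\Gamma \not\vdash \gamma$, then the deviation $C_1' = \{\gamma\}$ is available (since $\gamma \in \epsilon_1$) and satisfies $\{\gamma\} \vdash \gamma$ by the identity axiom, while $\out(P) \not\vdash \gamma_1$; hence Player~$1$ strongly prefers $(\{\gamma\})$ over $P$, so $P \notin NE(G)$. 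Thus $P \in NE(G)$ iff $\Gamma \vdash \gamma$, which establishes the reduction and the claimed lower bound.

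I expect the only real subtlety to be the second disjunct of the one-player equilibrium condition: a profile can be a Nash equilibrium not merely because the player is already satisfied, but also vacuously when \emph{no} available contribution could satisfy her goal. If I reduced naively with $\epsilon_1 = \Gamma$ and $P = (\Gamma)$, an invalid sequent for which additionally no sub-multiset of $\Gamma$ proves $\gamma$ would still yield a spurious equilibrium. Seeding the endowment with the unused copy of $\gamma$ is precisely what neutralizes this escape hatch: it guarantees that a goal-satisfying deviation always exists (via $\gamma \vdash \gamma$), so the vacuous disjunct is never triggered and equilibrium collapses to exactly ``the current outcome proves $\gamma$''. The remaining points --- that $\gamma \vdash \gamma$ is provable in \logica and that the reduction is polynomial --- are routine.
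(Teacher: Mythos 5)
Your proposal is correct and follows essentially the same route as the paper's own proof: the same one-player game with the goal formula seeded into the endowment (the paper uses $\Gamma \cup \{\delta\}$ but explicitly notes $\Gamma \uplus \{\delta\}$ works equally well), the same profile where the player contributes $\Gamma$, and the same use of the axiom $\delta \vdash \delta$ to rule out a spurious equilibrium when the sequent is invalid. Your explicit remark about why the spare copy of the goal is needed is exactly the role it plays, implicitly, in the paper's argument.
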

\begin{proof}


  We reduce the problem of sequent provability for the logic \logica. W.l.o.g., we only consider intuitionistic sequents.
Let $\Gamma \vdash \delta$ be the intuitionistic sequent where $\Gamma$
is an arbitrary multiset of formulas of \logica and $\delta$ is an
arbitrary formula.

We can construct the individual resource game $G$ such that $G =
(\{1\},\delta, \Gamma \cup \{\delta\})$. $G$ is thus the one-player individual
resource game where Player~$1$'s goal is to achieve $\delta$, and
Player~$1$ is endowed with $\Gamma \cup \{\delta\}$ (this is a set union but we
could have chosen the endowment $\Gamma \uplus \{\delta\}$ as well). A
profile in $G$ is a choice of Player~$1$, that is, a subset $C_1$ of
$\Gamma \cup \{\delta\}$. In this case for any profile $P$ in $G$,
$\out(P) = P$.

We show that $\Gamma \vdash \delta$ iff $\Gamma \in NE(G)$.

From left to right, suppose that $\Gamma \vdash \delta$. We need to
show that $\Gamma \in NE(G)$. That is, for all $C_1 \subseteq \Gamma
\cup \{\delta\}$, if $C_1 \vdash \delta$ then $\Gamma \vdash
\delta$. Since we supposed $\Gamma \vdash \delta$, this is trivially
true.

From right to left, suppose that $\Gamma \in NE(G)$. This means that
for all $C_1 \subseteq \Gamma \cup \{\delta\}$, if $C_1 \vdash \delta$
then $\Gamma \vdash \delta$. Let in particular $C_1 =
\{\delta\}$. Indeed, $C_1 \subseteq \Gamma \cup \{\delta\}$. Moreover,
by (ax) we have $\delta \vdash \delta$. Hence, $\Gamma \vdash \delta$
follows.
\end{proof}

\subsubsection{Algorithms}

\pcase{Linear case.}
To establish an upper bound on the complexity of \iNE let us
first outline an algorithm for solving its complement. That is,
checking whether a profile is \emph{not} a Nash equilibrium. 
Let $P \in \ch(G)$ be a profile. To determine whether $P \not\in NE(G)$,
we can employ a simple non-deterministic algorithm, showed as Algorithm~\ref{GA-co-iNE}.

\begin{algorithm}
\caption{General algorithm for {\sf co-NE}}
\label{GA-co-iNE}
\begin{algorithmic}[1]
\\non-deterministically guess $(i,C'_i) \in N \times \ch_i(G)$.
\\return $P \prec_i (P_{-i},C'_i)$.
\end{algorithmic}
\end{algorithm}

\begin{proposition}\label{prop:NP-Pi2P}\label{prop:ine-pspace-easy}
If the problem of provability in \logica is in $\mathsf{PTIME}$ then \iNE is in $\mathsf{coNP}$.
If the problem of provability in \logica is in $\mathsf{NP}$ then \iNE is in $\mathsf{coNP^{B\var{H_2}}}$ and indeed in $\mathsf{\Pi_2^p}$.
If the problem of provability in \logica is in $\mathsf{PSPACE}$ then \iNE is in $\mathsf{PSPACE}$.
\end{proposition}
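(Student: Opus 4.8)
The plan is to analyse the complement problem \textsf{co-NE} (``is $P$ \emph{not} a Nash equilibrium?'') and then close under complementation. The starting observation is that, under dichotomous preferences,
\[
P \notin NE(G) \iff \exists (i,C'_i) \in N \times \ch_i(G) : P \prec_i (P_{-i},C'_i),
\]
which is precisely the test performed by Algorithm~\ref{GA-co-iNE}: nondeterministically guess the deviating player $i$ and her alternative contribution $C'_i$, then check the strong-preference relation. Since a contribution $C'_i \subseteq \epsilon_i$ has polynomial size, the guess is a legitimate nondeterministic certificate and the outcomes $\out(P_{-i},C'_i)$ and $\out(P)$ feeding the test are of polynomial size.

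For the first part, assume sequent validity in \logica is in $\mathsf{NP}$. Unwinding the definition, $P \prec_i (P_{-i},C'_i)$ asks to verify $\out(P_{-i},C'_i) \vdash \gamma_i$ together with $\out(P) \not\vdash \gamma_i$, which by Proposition~\ref{prop:compl-dicho-pref} is a $\mathsf{BH_2}$ question. Thus Algorithm~\ref{GA-co-iNE} is a nondeterministic guess followed by a single $\mathsf{BH_2}$ query, witnessing $\textsf{co-NE} \in \mathsf{NP}^{\mathsf{BH_2}}$ and hence $\iNE \in \mathsf{coNP}^{\mathsf{BH_2}}$. To obtain the stated $\mathsf{\Pi_2^p}$ bound I would instead fold the two queries into explicit quantifiers: $\out(P_{-i},C'_i) \vdash \gamma_i$ is an $\exists$(certificate) statement and $\out(P) \not\vdash \gamma_i$ is a $\forall$(certificate) statement, so
\[
P \notin NE(G) \iff \exists (i,C'_i)\, \exists \pi\, \forall \pi'\, \bigl[\, \mathrm{check}_1(\pi) \wedge \neg\,\mathrm{check}_2(\pi') \,\bigr].
\]
The leading existential block collapses, leaving a single $\exists\forall$ alternation, so $\textsf{co-NE} \in \mathsf{\Sigma_2^p}$ and $\iNE \in \mathsf{\Pi_2^p}$. (Equivalently, $\mathsf{BH_2} \subseteq \mathsf{P}^{\mathsf{NP}}$ gives $\mathsf{coNP}^{\mathsf{BH_2}} \subseteq \mathsf{coNP}^{\mathsf{NP}} = \mathsf{\Pi_2^p}$.)

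For the second part, assume sequent validity in \logica is in $\mathsf{PSPACE}$. Then each of the validity and non-validity tests in $P \prec_i (P_{-i},C'_i)$ runs in polynomial space, and since $\mathsf{PSPACE}$ is closed under complement and intersection the whole strong-preference test is in $\mathsf{PSPACE}$. Executing Algorithm~\ref{GA-co-iNE} either as a nondeterministic procedure (and invoking Savitch, $\mathsf{NPSPACE} = \mathsf{PSPACE}$) or by deterministically enumerating the polynomially describable pairs $(i,C'_i)$ one at a time in reused space places $\textsf{co-NE}$ in $\mathsf{PSPACE}$; closure under complement then yields $\iNE \in \mathsf{PSPACE}$.

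The only genuinely delicate point is the $\mathsf{\Pi_2^p}$ claim. Reading the algorithm literally yields only the oracle bound $\mathsf{coNP}^{\mathsf{BH_2}}$, and one must argue that the Boolean-hierarchy oracle costs no extra quantifier alternation; the quantifier-unfolding above is the cleanest way to see that the two validity queries can be absorbed into the guessing phase rather than charged as an independent oracle call. Everything else is routine bookkeeping about certificate sizes and the standard closure properties of $\mathsf{PSPACE}$.
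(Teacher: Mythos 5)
Your proof is correct and follows essentially the same route as the paper: both analyse Algorithm~\ref{GA-co-iNE}, invoke Proposition~\ref{prop:compl-dicho-pref} to place the strong-preference test in $\mathsf{BH_2}$ (resp.\ $\mathsf{PSPACE}$), and conclude that \iNE lies in $\mathsf{coNP^{BH_2}}$ before refining to $\mathsf{\Pi_2^p}$. Your quantifier-unfolding derivation of the $\mathsf{\Pi_2^p}$ bound is simply a more explicit rendering of the paper's class-inclusion step ($\mathsf{BH_2} \subseteq \mathsf{\Delta_2^p}$ together with $\mathsf{NP^{\Delta_2^p}} = \mathsf{\Sigma_2^p}$), which you in fact also record in your parenthetical remark.
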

\begin{proof}
Consider Algorithm~\ref{GA-co-iNE}. If sequent provability in \logica is in $\mathsf{NP}$, we can check $P \prec_i
(P_{-i},C'_i)$ in $\mathsf{B\var{H_2}}$ (Proposition~\ref{prop:compl-dicho-pref}).
Thus we can check whether $P \not\in NE(G)$ in
$\mathsf{NP^{B\var{H_2}}}$. Finally, we can solve \iNE in
$\mathsf{coNP^{B\var{H_2}}}$.  It is the case that $\mathsf{B\var{H_2}} \subseteq
\mathsf{\Delta_2^p}$, and also that $\mathsf{NP^{\Delta_2^p}} =
\mathsf{\Sigma_2^p}$ so we can solve \iNE in
$\mathsf{\Pi_2^p}$. The proofs for the cases of sequent provability in
$\mathsf{PTIME}$ and
$\mathsf{PSPACE}$ proceed with similar considerations about Algorithm~\ref{GA-co-iNE}.
\end{proof}

\pcase{Affine case.}
%
%
Affine logic admits the rule of weakening $(W)$, which allows one to discard resources. 
In this setting, if a player can achieve her goal with the resources $\Gamma$, 
she can as well achieve her goal with the resources $\Gamma \cup \{A\}$.
A consequence is the following lemma, which will have a significant impact 
on the computational complexity of \iNE.
\begin{lemma}\label{lemma:ne-dich-affin}
Let $G = (N,\gamma_1, \ldots, \gamma_n, \epsilon_1, \ldots,
\epsilon_n)$ be an individual resource game. When \logica is affine, $P
\not \in NE(G)$ iff $\exists i \in N: P \prec_i (P_{-i}, \epsilon_i)$.
\end{lemma}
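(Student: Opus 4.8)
The plan is to prove the biconditional in Lemma~\ref{lemma:ne-dich-affin} directly, exploiting the fact that in an affine logic, enlarging the left-hand side of a provable sequent preserves provability. The key observation is that $\epsilon_i$ is the \emph{maximal} action available to Player~$i$ (i.e.\ $C_i \subseteq \epsilon_i$ for every $C_i \in \ch_i(G)$), so the deviation to $\epsilon_i$ dominates all other deviations with respect to goal satisfaction when weakening is admitted.

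For the easy direction, I would establish the right-to-left implication first. Suppose $P \prec_i (P_{-i}, \epsilon_i)$ for some $i \in N$. By the definition of the dichotomous strong preference $\prec_i$, this means $(P_{-i}, \epsilon_i)$ is a profitable deviation for $i$ away from $P$; that is, $i$ strongly prefers $(P_{-i}, \epsilon_i)$ over $P$. Since $\epsilon_i \in \ch_i(G)$ is itself a legitimate action, this exhibits a player and an action witnessing that $P$ is not a Nash equilibrium, so $P \not\in NE(G)$ directly from the definition of Nash equilibrium. This direction holds regardless of affineness.

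For the converse (left-to-right), suppose $P \not\in NE(G)$. By definition there exist some $i \in N$ and some $C_i \in \ch_i(G)$ with $P \prec_i (P_{-i}, C_i)$, i.e.\ $\out(P_{-i},C_i) \vdash \gamma_i$ while $\out(P) \not\vdash \gamma_i$. The goal is to upgrade this arbitrary profitable deviation $C_i$ to the maximal deviation $\epsilon_i$. The core step is to show $\out(P_{-i}, \epsilon_i) \vdash \gamma_i$. Here I would use that $C_i \subseteq \epsilon_i$, so $\out(P_{-i}, \epsilon_i)$ is a multiset superset of $\out(P_{-i}, C_i)$; writing $\out(P_{-i},\epsilon_i) = \out(P_{-i}, C_i) \uplus R$ for the extra resources $R = \epsilon_i \setminus C_i$, repeated application of the weakening rule $(W)$ (available since \logica is affine) to the proof of $\out(P_{-i}, C_i) \vdash \gamma_i$ yields a proof of $\out(P_{-i}, \epsilon_i) \vdash \gamma_i$. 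It remains to check that $\out(P) \not\vdash \gamma_i$ still gives us the "not preferred" half, i.e.\ that $i$ strongly prefers $(P_{-i}, \epsilon_i)$ over $P$: this is immediate since $\out(P) \not\vdash \gamma_i$ is exactly the second clause of $P \prec_i (P_{-i}, \epsilon_i)$, which we already have by hypothesis. Hence $P \prec_i (P_{-i}, \epsilon_i)$, as required.

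The main obstacle, and the only place affineness is genuinely used, is the monotonicity step $\out(P_{-i}, C_i) \vdash \gamma_i \Rightarrow \out(P_{-i}, \epsilon_i) \vdash \gamma_i$; I expect this to be a routine induction closing the gap via $(W)$, but it is essential to state clearly that it fails in the purely linear case (where $A, B \not\vdash A$ in general), which is precisely why the lemma is restricted to affine \logica. A minor bookkeeping point is to handle the multiset difference $R$ correctly and to note that the player index $i$ witnessing non-equilibrium is preserved throughout, so no existential quantifier manipulation beyond the straightforward one is needed.
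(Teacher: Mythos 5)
Your proof is correct and follows essentially the same route as the paper's: the right-to-left direction is immediate because $\epsilon_i \in \ch_i(G)$ is itself an admissible deviation, and for the left-to-right direction you upgrade an arbitrary profitable deviation $C_i$ to the maximal one $\epsilon_i$ by applying the weakening rule $(W)$ once for each formula occurrence in $\epsilon_i \setminus C_i$, exactly as in the paper. The only difference is presentational: you spell out the multiset bookkeeping ($R = \epsilon_i \setminus C_i$) and the role of affineness more explicitly, which the paper leaves terse.
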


\begin{proof}
Suppose $P \not \in NE(G)$. There is $i \in N$ and $C_i \in \ch_i(G)$
s.t.\ $P \prec_i (P_{-i}, C_i)$. By definition, $\out((P_{-i}, C_i))
\vdash \gamma_i$ and $\out(P) \not\vdash \gamma_i$. We have $C_i
\subseteq \epsilon_i$, so by applying weakening $(W)$ with every
instance of formulas in $\epsilon_i \setminus C_i$, we can prove that
$\out((P_{-i}, \epsilon_i)) \vdash \gamma_i$. We thus have that there
is $i \in N$ s.t.\ $P \prec_i (P_{-i}, \epsilon_i)$. The other way
around is immediate from the definition of Nash equilibria.
\end{proof}
It means that, in a profile, if no player has an incentive to deviate by making available their whole 
endowment, then the profile is a Nash equilibrium.
The very profile where all the players make available their whole endowment is trivially such a profile.
The next proposition follows immediately:
\begin{proposition}\label{prop:nonempty}
Let $G = (N, \gamma_1, \ldots,
\gamma_n, \epsilon_1, \ldots, \epsilon_n)$ be
an individual resource game. When \logica is affine:
$NE(G) \not= \emptyset$ and $(\epsilon_1, \ldots, \epsilon_n) \in
NE(G)$.
\end{proposition}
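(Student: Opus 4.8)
The plan is to derive Proposition~\ref{prop:nonempty} directly from Lemma~\ref{lemma:ne-dich-affin}, which characterises non-equilibria in the affine case. The key observation is that the lemma reduces membership in $NE(G)$ to a single check: $P \not\in NE(G)$ iff some player $i$ strictly prefers the deviation in which she contributes her \emph{entire} endowment $\epsilon_i$. So to exhibit a Nash equilibrium, I only need to find a profile for which no such profitable full-endowment deviation exists.

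First I would consider the specific profile $P^\ast = (\epsilon_1, \ldots, \epsilon_n)$, in which every player already contributes her whole endowment. For this profile, observe that for each player $i$ the full-endowment deviation is $(P^\ast_{-i}, \epsilon_i) = P^\ast$ itself, since $i$ was already contributing $\epsilon_i$. Now apply the characterisation from Lemma~\ref{lemma:ne-dich-affin}: $P^\ast \not\in NE(G)$ would require $\exists i \in N: P^\ast \prec_i (P^\ast_{-i}, \epsilon_i) = P^\ast$, i.e.\ $P^\ast \prec_i P^\ast$ for some $i$. But $\prec_i$ is irreflexive — by the definition of dichotomous strong preference, $Q \prec_i P$ requires $\out(P) \vdash \gamma_i$ and not $\out(Q) \vdash \gamma_i$, which a single profile cannot satisfy against itself (the same outcome cannot both prove and fail to prove $\gamma_i$). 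Hence no such $i$ exists, so $P^\ast \in NE(G)$.

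Having shown $(\epsilon_1, \ldots, \epsilon_n) \in NE(G)$, the nonemptiness claim $NE(G) \not= \emptyset$ follows at once, since we have exhibited a concrete element. This handles both conjuncts of the proposition.

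I do not anticipate a genuine obstacle here: the statement is essentially a corollary, and the paper itself signals this by writing ``The next proposition follows immediately.'' The only point requiring care is making explicit the irreflexivity of $\prec_i$ and the identity $(P^\ast_{-i}, \epsilon_i) = P^\ast$, which are both immediate from the definitions but are what make the lemma bite. The prose preceding the proposition already gestures at exactly this argument (``The very profile where all the players make available their whole endowment is trivially such a profile''), so the proof is just a matter of recording these two trivial facts and invoking the lemma.
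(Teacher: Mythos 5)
Your proof is correct and is essentially the paper's own argument made explicit: the paper derives Proposition~\ref{prop:nonempty} from Lemma~\ref{lemma:ne-dich-affin} by exactly the observation that the full-endowment profile coincides with every player's full-endowment deviation, so the lemma's condition for non-equilibrium cannot fire (by irreflexivity of $\prec_i$). The two details you flag as needing care --- the identity $(P^\ast_{-i},\epsilon_i)=P^\ast$ and irreflexivity of dichotomous strong preference --- are precisely what the paper leaves implicit behind ``follows immediately.''
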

Lemma~\ref{lemma:ne-dich-affin} also helps us to establish the following
result.
\begin{proposition}\label{prop:ine-dicho-weakening-easy}
  When \logica is affine, if the problem of sequent provability in
  \logica is in $\mathsf{PTIME}$ then \iNE is in $\mathsf{PTIME}$.
  If the problem of sequent provability in
\logica is in $\mathsf{NP}$ then \iNE is in
$\mathsf{P^{NP||}}$.
If the problem of sequent provability in
\logica is in $\mathsf{PSPACE}$ then \iNE is in
$\mathsf{PSPACE}$.
\end{proposition}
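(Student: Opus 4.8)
The plan is to turn the definition of Nash equilibrium, which quantifies over the exponentially many deviations available to each player, into a fixed, polynomial-size list of sequent-validity checks by exploiting Lemma~\ref{lemma:ne-dich-affin}. Since \logica is affine, that lemma tells us $P \notin NE(G)$ iff there is some $i \in N$ with $P \prec_i (P_{-i}, \epsilon_i)$. Unfolding the definition of $\prec_i$, this says $P \notin NE(G)$ iff there is some $i$ with $\out((P_{-i}, \epsilon_i)) \vdash \gamma_i$ and $\out(P) \not\vdash \gamma_i$. Dually,
\[ P \in NE(G) \quad\text{iff}\quad \text{for all } i \in N:\ \out((P_{-i}, \epsilon_i)) \not\vdash \gamma_i \ \text{ or } \ \out(P) \vdash \gamma_i. \]

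First I would fix the input $(G, P)$, which determines $\out(P)$ and each of the $n$ multisets $\out((P_{-i}, \epsilon_i))$ in polynomial time. The characterization above then depends only on the truth values of the $2n$ sequent-validity queries $\{\out((P_{-i}, \epsilon_i)) \vdash \gamma_i\}_{i \in N}$ and $\{\out(P) \vdash \gamma_i\}_{i \in N}$. Crucially, none of these queries depends on the answer to another, so they can all be posed simultaneously; once their answers are known, deciding $P \in NE(G)$ is the polynomial-time Boolean computation $\bigwedge_{i \in N} (\neg q_i^{(1)} \lor q_i^{(2)})$, where $q_i^{(1)}$ and $q_i^{(2)}$ abbreviate the two queries for player $i$.

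When sequent validity in \logica is in $\mathsf{NP}$, each of the $2n$ queries is a single $\mathsf{NP}$ query, and the procedure is a polynomial-time machine making polynomially many non-adaptive $\mathsf{NP}$ oracle calls followed by a polynomial-time Boolean combination; this is exactly $\mathsf{P^{NP||}}$. When sequent validity is in $\mathsf{PSPACE}$, each query is settled within $\mathsf{PSPACE}$ and the whole procedure is a polynomial-time reduction to polynomially many $\mathsf{PSPACE}$ computations, so it stays in $\mathsf{PSPACE}$ since $\mathsf{P^{PSPACE}} = \mathsf{PSPACE}$.

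The real work is entirely in Lemma~\ref{lemma:ne-dich-affin}, which has already been established: it collapses the universal quantifier over the exponentially many possible deviations $C_i \in \ch_i(G)$ down to the single canonical deviation $\epsilon_i$ per player. Without this collapse one would be left with a genuine $\forall\exists$ alternation and only the weaker $\mathsf{\Pi_2^p}$ bound of the linear case (Prop.~\ref{prop:ine-pspace-easy}); the affine weakening rule is precisely what lets monotonicity reduce each player's best response to contributing her whole endowment. I therefore expect no further obstacle: the remaining steps are the routine query-counting and the observation that the queries are non-adaptive.
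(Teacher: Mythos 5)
Your proposal is correct and follows essentially the same route as the paper: it invokes Lemma~\ref{lemma:ne-dich-affin} to collapse each player's deviations to the single canonical deviation $(P_{-i},\epsilon_i)$, reduces membership in $NE(G)$ to $2n$ non-adaptive sequent-validity queries, and concludes $\mathsf{P^{NP||}}$ (resp.\ $\mathsf{PSPACE}$ via $\mathsf{P^{PSPACE}}=\mathsf{PSPACE}$). The only cosmetic difference is that the paper packages the same $2n$ checks as an explicit loop (Algorithm~\ref{A:iNE-dicho-affine}) whereas you write them directly as a Boolean combination of oracle answers, which if anything makes the non-adaptivity of the queries more transparent.
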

\begin{proof}
  Let $G = (N, \gamma_1, \ldots, \gamma_n, \epsilon_1, \ldots, \epsilon_n)$ be an individual resource game and let $P \in \ch(G)$ be a profile.
  One can check whether $P \in NE(G)$ with Algorithm~\ref{A:iNE-dicho-affine}.

\begin{algorithm}
\caption{Algorithm for \iNE with dichotomous preferences and affine \logica}
\label{A:iNE-dicho-affine}
\begin{algorithmic}[1]
\\for each $i \in N$ do:
\\\indent\indent if ($\out(P) \vdash \gamma_i$):
\\\indent\indent\indent continue;
\\\indent\indent else if ($\out((P_{-i}, \epsilon_i)) \vdash \gamma_i$):
\\\indent\indent\indent return false.
\\return true.
\end{algorithmic}
\end{algorithm}

For correctness, note that the instructions of the lines $2-4$ are
equivalent to a test of whether $\out(P) \not\vdash \gamma_i$ and
$\out((P_{-i}, \epsilon_i)) \vdash \gamma_i$, that is, $P \prec_i
(P_{-i}, \epsilon_i)$. Lemma~\ref{lemma:ne-dich-affin} ensures that
exactly when there is an $i \in N$ such that $P \prec_i (P_{-i},
\epsilon_i)$ we can conclude that $P$ is not a Nash equilibrium.

Suppose sequent provability in \logica is in $\mathsf{NP}$.
The algorithm can be simulated by a deterministic oracle Turing
machine in polynomial time with $2n$ non-adaptive queries to an $\mathsf{NP}$
oracle. Indeed, $P \in NE(G)$ is thus a $\mathsf{P^{NP||[\text{$2n$}]}}$
predicate. The problem is in $\mathsf{P^{NP||}}$.
When sequent provability in \logica is in $\mathsf{PTIME}$ (resp., $\mathsf{PSPACE}$), the algorithm runs in polynomial time (resp., polynomial space).
\end{proof}

\subsection{Elimination}
\label{sec:elim:illustr}

A very simple illustration of \textsf{RATIONAL ELIMINATION} is given by the individual resource game $G^\epsilon = 
 (\{1,2\}, \gamma_1 = B, \gamma_2 = A, \{A\}, \{B\})$. There are two players. Player~$1$ wants $B$ but is endowed with $\{A\}$, while Player~$2$ wants $A$ but is endowed with $\{B\}$. The game $G^\epsilon$ can be represented as on Figure~\ref{fig:elim-Gepsilon}. (We indicate the realized objectives assuming that \logica is affine.)

\begin{figure}[ht]
\begin{center}
\bgroup
\def\arraystretch{1.5}
        \begin{tabular}{|c|*{2}{>{$}C{25mm}<{$}}|}
\hline
           \cellcolor[gray]{\transparency}{\diagbox[linewidth=0.2pt, width=\dimexpr \textwidth/10+2\tabcolsep\relax, height=0.8cm]{$1$}{$2$}}
   & \cellcolor[gray]{\transparency}{\emptyset} & \cellcolor[gray]{\transparency}{\{B\}}\\
\hline
          \cellcolor[gray]{\transparency}{$ \emptyset $} & \emptyset\diNE & \{B\}\diNE: \gamma_1\\
          \cellcolor[gray]{\transparency}{$ \{A\} $} & \{A\}\diNE: \gamma_2 & \{A, B\}\diNE: \gamma_1, \gamma_2\\
\hline
         \end{tabular}
\egroup
\end{center}
\caption{\label{fig:elim-Gepsilon} The game $G^\epsilon$. $\gamma_1$ and $\gamma_2$ indicate that Player~$1$ and Player~$2$ have their goals satisfied, assuming \logica is affine. The symbol ${}\diNE$ denotes a Nash equilibrium.}
\end{figure}

One can readily check that all profiles are Nash equilibria. However,
the profile $(\{A\}, \{B\})$ is more `socially desirable' than the
others since it satisfies both players' goal.

A centralized authority could effectively eliminate the others by
redistributing the resources present in $G^{\epsilon}$ so as to obtain
$G^{\epsilon'} = (\{1,2\}, \gamma_1 = B, \gamma_2 = A, \{B\}, \{A\})$.
The game $G^{\epsilon'}$ can be represented as on Figure~\ref{fig:elim-Gepsilonprime}.

\begin{figure}[ht]
\begin{center}
\bgroup
\def\arraystretch{1.5}
        \begin{tabular}{|c|*{2}{>{$}C{25mm}<{$}}|}
\hline
           \cellcolor[gray]{\transparency}{\diagbox[linewidth=0.2pt, width=\dimexpr \textwidth/10+2\tabcolsep\relax, height=0.8cm]{$1$}{$2$}}
   & \cellcolor[gray]{\transparency}{\emptyset} & \cellcolor[gray]{\transparency}{\{A\}}\\
\hline
          \cellcolor[gray]{\transparency}{$ \emptyset $} & \emptyset & \{A\}: \gamma_2\\
          \cellcolor[gray]{\transparency}{$ \{B\} $} & \{B\}: \gamma_1 & \{A, B\}\diNE: \gamma_1, \gamma_2\\
\hline
         \end{tabular}
\egroup
\end{center}
\caption{\label{fig:elim-Gepsilonprime} The game $G^{\epsilon'}$.}
\end{figure}
The only Nash equilibrium is now the one with outcome $\{B, A\}$.

\subsubsection{Algorithms}

As a consequence of Proposition~\ref{prop:nonempty}, we already know that:
\begin{proposition}
Let $G = (N, \gamma_1, \ldots, \gamma_n, \epsilon_1, \ldots,
\epsilon_n)$ be an individual resource game.  When \logica is affine, the
profile $P$ such that $\out(P) = \biguplus_j \epsilon_j$ is not
rationally eliminable.
\end{proposition}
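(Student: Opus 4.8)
The plan is to unfold the definition of \textsf{RATIONAL ELIMINATION} and then exhibit, for \emph{every} redistribution $\epsilon'$ of $\epsilon$, an explicit Nash equilibrium of $G^{\epsilon'}$ whose outcome equals $\out(P) = \biguplus_j \epsilon_j$. Producing such an equilibrium for each $\epsilon'$ directly witnesses that no redistribution can destroy this outcome, which is exactly what ``not rationally eliminable'' means.

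First I would restate the target. Recall that $P$ \emph{is} rationally eliminable iff there exists a redistribution $\epsilon'$ such that no $P' \in \ch(G^{\epsilon'})$ with $\out(P') = \out(P)$ lies in $NE(G^{\epsilon'})$. Negating this, ``$P$ is not rationally eliminable'' unfolds to: for every redistribution $\epsilon'$ of $\epsilon$ there exists $P' \in \ch(G^{\epsilon'})$ with $\out(P') = \out(P)$ and $P' \in NE(G^{\epsilon'})$. So the whole task reduces to constructing, for an arbitrary fixed redistribution $\epsilon'$, one such witness $P'$.

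The key step is to invoke resource conservation. By the definition of a redistribution scheme, $\biguplus_{i \in N} \epsilon'(i) = \biguplus_{i \in N} \epsilon(i) = \out(P)$, so the total pool of resources is invariant. I would therefore take $P'$ to be the ``everybody contributes their whole endowment'' profile in the redistributed game, namely $P' = (\epsilon'(1), \ldots, \epsilon'(n))$, which is a legitimate element of $\ch(G^{\epsilon'})$ since $\epsilon'(i) \subseteq \epsilon'(i)$ for each $i$. This profile satisfies $\out(P') = \biguplus_i \epsilon'(i) = \out(P)$ as required. Finally, since $G^{\epsilon'}$ is again an individual resource game over the affine logic \logica, Proposition~\ref{prop:nonempty} applies and guarantees $(\epsilon'(1), \ldots, \epsilon'(n)) \in NE(G^{\epsilon'})$. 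This yields the desired witness and closes the argument.

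I do not expect a genuine obstacle: the statement is an immediate corollary of Proposition~\ref{prop:nonempty} combined with the resource-conservation clause built into the definition of a redistribution. The only points requiring care are the bookkeeping that matches multiset outcomes across the two games $G^\epsilon$ and $G^{\epsilon'}$, and checking that the full-endowment profile is admissible in $\ch(G^{\epsilon'})$. The affineness hypothesis enters solely through Proposition~\ref{prop:nonempty} and is essential, since it is weakening that makes the full-endowment profile an equilibrium in the first place.
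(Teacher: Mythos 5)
Your proposal is correct and is exactly the argument the paper intends: the paper states this proposition as an immediate consequence of Proposition~\ref{prop:nonempty}, and your proof simply spells out that consequence---for any redistribution $\epsilon'$, the full-endowment profile of $G^{\epsilon'}$ is a Nash equilibrium with outcome $\biguplus_i \epsilon'(i) = \biguplus_i \epsilon(i) = \out(P)$ by resource conservation. The bookkeeping you flag (admissibility of the full-endowment profile and matching of outcomes) is handled correctly, so nothing is missing.
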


This is very specific to the affine case (and dichotomous
preferences), and even then, it is of course not true of all Nash
equilibria. To decide whether some outcome is rationally eliminable,
one na\"ive approach consists in trying all possible redistributions
and check whether the outcome is a Nash equilibrium in the resulting
individual resource game. Instead, we are going to exploit a pleasant
property, analogous to~\cite[Corollary~$4$]{harrenstein2015aamas}.

Let $G^\epsilon = 
(N, \gamma_1, \ldots, \gamma_n, \epsilon(1), \ldots, \epsilon(n))$
be an individual resource game. For each player $i \in N$, we define $G^{[\epsilon\arrow i]}$ where $[\epsilon\arrow i]$ is the redistribution of $\epsilon$ where all resources are assigned to $i$, that is:
\[
[\epsilon\arrow i](j) = 
\begin{cases}
\biguplus_{k \in N} \epsilon(k) & \text{when } j = i\\
\emptyset & \text{otherwise.}
\end{cases}\]
Because there is only one active player in $G^{[\epsilon\arrow i]}$, we will sometimes write a profile of $G^{[\epsilon\arrow i]}$ as $(C_i)$ with $C_i \in \ch_i(G^{[\epsilon\arrow i]})$ instead of $(\emptyset, \ldots, \emptyset, C_i, \emptyset, \ldots, \emptyset)$, by abuse of notation.

\begin{lemma}\label{lemma:rat-elimination}
Let $G^\epsilon$ be an individual resource game  and $P \in \ch(G^\epsilon)$. $P$ is rationally eliminable iff there is a player $i \in N$ and a profile $Q \in \ch(G^{[\epsilon\arrow i]})$, such that $\out(Q) = \out(P)$ and $Q \not\in NE(G^{[\epsilon\arrow i]})$.
\end{lemma}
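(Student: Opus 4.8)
The plan is to prove the two directions of the biconditional separately, with both resting on one structural observation about $G^{[\epsilon\arrow i]}$: there, player~$i$ single-handedly controls the whole pool $\biguplus_{k\in N}\epsilon(k)$, while every other player is forced to contribute $\emptyset$. Consequently a profile of $G^{[\epsilon\arrow i]}$ is completely determined by player~$i$'s contribution, and its outcome is exactly that contribution; in other words, outcome and profile are in bijection in $G^{[\epsilon\arrow i]}$.

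For the right-to-left direction, I would take $\epsilon' = [\epsilon\arrow i]$ as the eliminating redistribution. Given the hypothesised $Q \in \ch(G^{[\epsilon\arrow i]})$ with $\out(Q)=\out(P)$ and $Q\notin NE(G^{[\epsilon\arrow i]})$, the bijection just noted tells me that the \emph{only} profile $P'$ of $G^{[\epsilon\arrow i]}$ with $\out(P')=\out(P)=\out(Q)$ is $Q$ itself. Since $Q\notin NE(G^{[\epsilon\arrow i]})$, the elimination condition holds directly for $Q$ and vacuously for every other profile, so $P$ is rationally eliminable.

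For the left-to-right direction, I would start from a witnessing redistribution $\epsilon'$ and first record a \emph{realizability} observation: because $\out(P)\subseteq\biguplus_k\epsilon(k)=\biguplus_k\epsilon'(k)$, the multiset $\out(P)$ can be split into contributions $C_k\subseteq\epsilon'(k)$, so some $P'\in\ch(G^{\epsilon'})$ has $\out(P')=\out(P)$. This rules out a vacuous reading of rational elimination and, by definition, forces $P'\notin NE(G^{\epsilon'})$. Unfolding non-equilibrium yields a player~$i$ and a deviation $C_i'$ with $P'\prec_i(P'_{-i},C_i')$, i.e.\ $\out((P'_{-i},C_i'))\vdash\gamma_i$ while $\out(P')\not\vdash\gamma_i$. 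Writing $Y=\out((P'_{-i},C_i'))$, I would then transfer this deviation to $G^{[\epsilon\arrow i]}$: both $Y$ and $\out(P)$ are sub-multisets of the total pool, so there player~$i$ may contribute either of them. Taking $Q=(\out(P))$ gives $\out(Q)=\out(P')\not\vdash\gamma_i$, and letting player~$i$ switch its contribution to $Y$ yields outcome $Y\vdash\gamma_i$; hence $Q\prec_i$ this deviation and $Q\notin NE(G^{[\epsilon\arrow i]})$, with $\out(Q)=\out(P)$ as required.

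The step I expect to need the most care is the left-to-right transfer, namely checking that a deviation available to $i$ in the arbitrary game $G^{\epsilon'}$ remains available to $i$ in $G^{[\epsilon\arrow i]}$. This is exactly where concentrating all resources on $i$ pays off: the target outcome $Y$ is always a sub-multiset of the full pool, which is precisely $i$'s endowment under $[\epsilon\arrow i]$, so no deviation power is lost. I would also make the realizability observation explicit, since without an actual profile of outcome $\out(P)$ in $G^{\epsilon'}$ the elimination condition could be met vacuously and no deviation could be extracted to drive the argument.
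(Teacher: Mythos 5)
Your proof is correct and follows essentially the same route as the paper's: the right-to-left direction exploits the bijection between profiles and outcomes in $G^{[\epsilon\arrow i]}$, and the left-to-right direction extracts a deviation from a profile of the witnessing game $G^{\epsilon'}$ with outcome $\out(P)$ and transfers it to $G^{[\epsilon\arrow i]}$, where player~$i$'s endowment is the whole pool. The only addition is your explicit realizability observation ruling out a vacuous reading of eliminability; the paper takes this for granted when it picks an arbitrary profile $R \in \ch(G^{\epsilon'})$ with $\out(R)=\out(P)$, so making it explicit is a harmless (indeed welcome) refinement rather than a different argument.
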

\begin{proof}
From right to left. Suppose $Q \not\in NE(G^{[\epsilon\arrow i]})$ for
some $i \in N$.  Let also $P \in \ch(G^\epsilon)$ be a profile and assume $\out(P) = \out(Q)$. When there is at most one player with a non-empty endowment, as in $[\epsilon\arrow i]$, there is a one-to-one correspondence between the set of profiles and the set of outcomes. Thus, there is one and only one profile in 
$G^{[\epsilon\arrow i]}$ with outcome $\out(P)$ and it is $Q$. So there is a redistribution of $\epsilon$, namely $[\epsilon\arrow i]$, such that for all profiles $Q \in \ch(G^{[\epsilon\arrow i]})$ with outcome $\out(P)$, we have $Q \not\in NE(G^{[\epsilon\arrow i]})$. So $P$ is rationally eliminable.

From left to right. Suppose that $P$ is rationally eliminable. Thus,
there is a redistribution $\epsilon'$ of $\epsilon$ such that for all
$P' \in \ch(G^{\epsilon'})$, if $\out(P') = \out(P)$ then $P' \not\in
NE(G^{\epsilon'})$. So let $R \in \ch(G^{\epsilon'})$ be an arbitrary profile 
with $\out(R) = \out(P)$. By assumption, we have that $R \not\in
NE(G^{\epsilon'})$. By definition of Nash equilibria, this means that there is $i \in N$ and $C_i' \in \ch_i(G^{\epsilon'})$ such that $R \prec_i (R_{-i},C_i')$.
Now consider the game $G^{[\epsilon\arrow i]}$. We have $\out(R) \in \ch_i(G^{[\epsilon\arrow i]})$ and $\out((R_{-i},C_i')) \in \ch_i(G^{[\epsilon\arrow i]})$. 
Let the profile $R^1 \in \ch(G^{[\epsilon\arrow i]})$ with $R^1_i = \out(R)$ and $R^1_j = \emptyset$ when $j \not = i$. Let $R^2 \in \ch(G^{[\epsilon\arrow i]})$ be the profile with $R^2_i = \out((R_{-i},C_i'))$ and $R^2_j = \emptyset$ when $j \not = i$. Since, $R \prec_i (R_{-i},C_i')$, we also have $R^1 \prec_i R^2$. So $R^1 \not\in NE(G^{[\epsilon\arrow i]})$. The profile $R^1$ is the only profile of $G^{[\epsilon\arrow i]}$ with outcome $\out(P)$. So we can conclude.
\end{proof}




\pcase{Linear case.} We establish an upper bound on the complexity of {\sf RE} 
when \logica does not admit the weakening rule.

\begin{proposition}\label{prop:easyness-rat-elim-linear}
  When \logica is linear, {\sf RE} is in $\mathsf{NP}$ when provability in \logica is in $\mathsf{PTIME}$, in $\mathsf{NP^{B\var{H_2}}}$ and
  indeed in $\mathsf{\Sigma_2^p}$ when \logica is in $\mathsf{NP}$,
  and in $\mathsf{PSPACE}$ when \logica is in $\mathsf{PSPACE}$.
\end{proposition}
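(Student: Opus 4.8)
The plan is to lean entirely on Lemma~\ref{lemma:rat-elimination}, which reduces rational eliminability to a statement quantifying only over the $n$ special redistributions $[\epsilon\arrow i]$ that hand all resources to one player. This is the decisive simplification. A naive algorithm would existentially guess a redistribution $\epsilon'$ and then \emph{universally} check that every profile $P'$ of $G^{\epsilon'}$ with $\out(P') = \out(P)$ fails to be a Nash equilibrium; that $\exists\forall(\text{co-}\iNE)$ pattern would land one level too high in the hierarchy. The lemma removes the universal quantifier: in $G^{[\epsilon\arrow i]}$ only player $i$ has a non-empty endowment, so profiles and outcomes are in bijection and there is a \emph{unique} profile $Q^i$ with $\out(Q^i) = \out(P)$, namely the one where $i$ contributes $\out(P)$. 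This is a legal contribution because $\out(P) \subseteq \biguplus_{k} \epsilon(k)$, which is exactly player $i$'s endowment in $G^{[\epsilon\arrow i]}$.

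Next I would unfold what $Q^i \not\in NE(G^{[\epsilon\arrow i]})$ means. Since every player $j \neq i$ can only play $\emptyset$ in $G^{[\epsilon\arrow i]}$, the sole possible deviation is by player $i$, so $Q^i \not\in NE(G^{[\epsilon\arrow i]})$ iff there is some $C_i' \subseteq \biguplus_{k} \epsilon(k)$ with $Q^i \prec_i ((Q^i)_{-i}, C_i')$. Because $\out((Q^i)_{-i}, C_i') = C_i'$ and $\out(Q^i) = \out(P)$, this says precisely $C_i' \vdash \gamma_i$ and $\out(P) \not\vdash \gamma_i$. Combining this with the lemma yields the clean characterisation I would record first: $P$ is rationally eliminable iff there exist a player $i \in N$ and a multiset $C_i' \subseteq \biguplus_{k} \epsilon(k)$ such that $C_i' \vdash \gamma_i$ and $\out(P) \not\vdash \gamma_i$.

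The upper bounds then follow by inspecting the obvious algorithm: non-deterministically guess the pair $(i, C_i')$, both of polynomial size, and verify the predicate $(C_i' \vdash \gamma_i) \wedge (\out(P) \not\vdash \gamma_i)$. When sequent validity in \logica is in $\mathsf{NP}$, this predicate is a conjunction of an $\mathsf{NP}$ query and a $\mathsf{coNP}$ query, hence a $\mathsf{BH_2}$ predicate exactly as in Proposition~\ref{prop:compl-dicho-pref}. A single existential guess followed by a $\mathsf{BH_2}$ verification places {\sf RE} in $\mathsf{NP^{BH_2}}$, and using $\mathsf{BH_2} \subseteq \mathsf{\Delta_2^p}$ together with $\mathsf{NP^{\Delta_2^p}} = \mathsf{\Sigma_2^p}$ (already invoked in Proposition~\ref{prop:NP-Pi2P}) this collapses to $\mathsf{\Sigma_2^p}$. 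When sequent validity in \logica is in $\mathsf{PSPACE}$, both sub-checks are in $\mathsf{PSPACE}$ and the existential search over the polynomially described $(i, C_i')$ can be performed by enumeration reusing space, so $\mathsf{NP^{PSPACE}} = \mathsf{PSPACE}$ gives the bound.

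I expect the complexity bookkeeping to be routine; the one place demanding care is the sound application of Lemma~\ref{lemma:rat-elimination}, specifically confirming that restricting attention to the $n$ redistributions $[\epsilon\arrow i]$ loses no generality (the content of the lemma) and that the prescribed profile $Q^i$ is genuinely realisable, which holds because $\out(P)$ is a sub-multiset of the total endowment. Everything downstream is a single nondeterministic guess feeding a bounded Boolean combination of sequent-validity tests.
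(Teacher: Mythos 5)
Your proposal is correct and follows essentially the same route as the paper: both rest on Lemma~\ref{lemma:rat-elimination} to restrict attention to the redistributions $[\epsilon\arrow i]$, then nondeterministically guess a pair $(i, C_i')$ and check the dichotomous-preference condition (an $\mathsf{NP}$ plus a $\mathsf{coNP}$ sequent-validity test, i.e.\ a $\mathsf{BH_2}$ oracle call, exactly as in Prop.~\ref{prop:compl-dicho-pref}), yielding $\mathsf{NP^{BH_2}} \subseteq \mathsf{\Sigma_2^p}$ and $\mathsf{PSPACE}$ respectively. Your explicit unfolding of the unique profile $Q^i$ and of the preference test into the conjunction $(C_i' \vdash \gamma_i) \wedge (\out(P) \not\vdash \gamma_i)$ is just a more detailed rendering of what the paper's Algorithm~\ref{GA-RE} does implicitly.
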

\begin{proof}
Let $P \in \ch(G^\epsilon)$ be a profile. To determine
whether $P$ is rationally eliminable,
we can use Algorithm~\ref{GA-RE}.

\begin{algorithm}
\caption{General algorithm for {\sf RE}}
\label{GA-RE}
\begin{algorithmic}[1]
\\non-deterministically guess $(i,C'_i) \in N \times \ch_i(G^{[\epsilon\arrow i]})$.\\return $P \prec_i (P_{-i},C'_i)$.
\end{algorithmic}
\end{algorithm}
Straightforwardly, it guesses a player $i$ and a deviation in the game
$G^{[\epsilon\arrow i]}$ for Player~$i$ from the profile $(\out(P))
\in \ch(G^{[\epsilon\arrow i]})$, and checks whether Player~$i$ has an
incentive to do this deviation.
By Lemma~\ref{lemma:rat-elimination}, if such a player and deviation
exist and only if they exist, the profile $P$ is rationally eliminable
in $G^\epsilon$. So the algorithm is correct. It can of course be
simulated by a non-deterministic oracle Turing machine with one call
to an oracle for $P \prec_i
(P_{-i},C'_i)$. Proposition~\ref{prop:compl-dicho-pref} informs us of a
containing class of this oracle.
\end{proof}

\pcase{Affine case.}
When \logica admits the weakening rule, we can propose a surprisingly simple algorithm, 
which takes advantage of both Lemma~\ref{lemma:ne-dich-affin} and Lemma~\ref{lemma:rat-elimination}.
\begin{proposition}\label{prop:easyness-rat-elim-affine}
  When \logica is affine, {\sf RE} is in $\mathsf{PTIME}$ when provability in \logica is in $\mathsf{PTIME}$, in
  $\mathsf{P^{NP||}}$
  when \logica is in $\mathsf{NP}$, and in $\mathsf{PSPACE}$ when \logica is in $\mathsf{PSPACE}$.
\end{proposition}
\begin{proof}

  Let $G = (N,\gamma_1, \ldots, \gamma_n, \epsilon_1, \ldots, \epsilon_n)$ be an individual resource game and let $P \in \ch(G)$ be a profile.
  Consider Algorithm~\ref{DA-RE}.

\begin{algorithm}
\caption{Algorithm for {\sf RE} with dichotomous preferences and affine \logica}
\label{DA-RE}
\begin{algorithmic}[1]
\\for each $i \in N$ do:
\\\indent\indent if ($P \prec_i ([\epsilon\arrow i](i))$):\label{line:DA-RE-test-prec}
\\\indent\indent\indent return true.
\\return false.
\end{algorithmic}
\end{algorithm}


The algorithm is correct.
Indeed, by Lemma~\ref{lemma:rat-elimination}, $P$ is eliminable in $G$ iff there is $i \in N$ where $(\out(P)) \not\in NE(G^{[\epsilon\arrow i]})$.
By Lemma~\ref{lemma:ne-dich-affin}, we know that $(\out(P)) \not\in NE(G^{[\epsilon\arrow i]})$ iff $P \prec_i ([\epsilon\arrow i](i))$.
Notice that the test of line~\ref{line:DA-RE-test-prec} is equivalent to $P \not \vdash \gamma_i$ and $[\epsilon\arrow i](i) \vdash \gamma_i$.
Thus, it can be simulated by a deterministic oracle Turing machine in polynomial time with at most $2n$ non-adaptive queries to an oracle for the problem of sequent provability.
When the problem of sequent provability in \logica is in $\mathsf{NP}$ it yields a complexity of $\mathsf{P^{NP||}}$.
When it is in $\mathsf{PTIME}$ (resp., $\mathsf{PSPACE}$), it yields a complexity of $\mathsf{PTIME}$ (resp., $\mathsf{PSPACE}$).
\end{proof}

\subsubsection{Hardness}

The linear and affine cases both use the same proof strategy which we
present at once.
\begin{proposition}\label{prop:hard-rat-elim-dicho}
{\sf RE} is as hard as the problem of checking
sequent \emph{non-provability} in \logica.
\end{proposition}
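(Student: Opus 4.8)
The goal is to prove that \textsf{RATIONAL ELIMINATION} is as hard as checking sequent \emph{invalidity} in \logica. Since this statement covers both the linear and affine cases with a single proof strategy, I want a reduction that does not depend on whether \logica admits weakening.

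The plan is to reduce the sequent \emph{invalidity} problem to \textsf{RE}. Given an intuitionistic sequent $\Gamma \vdash \delta$ (recall we may restrict to intuitionistic sequents without loss of generality, by the observation preceding Prop.~\ref{prop:iNE-hardness}), I want to build an individual resource game $G^\epsilon$ together with a profile $P$ such that $P$ is rationally eliminable if and only if $\Gamma \not\vdash \delta$. The natural idea is to exploit Lemma~\ref{lemma:rat-elimination}: $P$ is rationally eliminable iff there is a player $i$ and a profile $Q \in \ch(G^{[\epsilon\arrow i]})$ with $\out(Q) = \out(P)$ and $Q \not\in NE(G^{[\epsilon\arrow i]})$. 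So I want the "all resources to one player" game to contain a profitable deviation exactly when the target sequent is invalid.

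Concretely, I would try a one-player game along the lines of $G = (\{1\}, \gamma_1 = \delta, \epsilon_1 = \Gamma \uplus \{\delta\})$, mirroring the construction in Prop.~\ref{prop:iNE-hardness}, and take $P$ to be the profile whose outcome is $\Gamma$ (i.e. Player~$1$ contributes all of $\Gamma$ but withholds $\delta$). With a single player, $[\epsilon \arrow 1] = \epsilon$, so rational eliminability of $P$ is equivalent to $P \not\in NE(G)$: Player~$1$ has a profitable deviation from contributing $\Gamma$. The key point is that the deviation to contributing $\{\delta\}$ (or $\Gamma \uplus \{\delta\}$) always yields the goal by $(\mathrm{ax})$, giving $\delta \vdash \delta$. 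Hence $P \prec_1 (\{\delta\})$ holds precisely when $\out(P) = \Gamma \not\vdash \delta$. Therefore $P$ is rationally eliminable iff $\Gamma \not\vdash \delta$, which is exactly sequent invalidity.

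I would then verify that this reduction is correct uniformly in both the linear and affine settings: the witnessing deviation uses only the identity axiom $\delta \vdash \delta$, which is available regardless of weakening, and $\out(P) \not\vdash \delta$ is literally the invalidity of $\Gamma \vdash \delta$, so no appeal to weakening is needed in either direction. The main obstacle to watch is making sure the profile $P$ with $\out(P) = \Gamma$ genuinely witnesses eliminability under Definition of \textsf{RE} — that is, that $\Gamma$ is the \emph{unique} profile outcome we must rule out and that withholding $\delta$ is a legal action. Since $\ch_1(G)$ consists of all sub-multisets of $\Gamma \uplus \{\delta\}$, contributing exactly $\Gamma$ is a valid action, and in the one-player game the outcome determines the profile, so Lemma~\ref{lemma:rat-elimination} applies cleanly. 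This yields the desired hardness for both cases at once.
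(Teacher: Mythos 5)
Your proof is correct, but it takes a genuinely different route from the paper's. The paper builds a two-player game $G^\epsilon = (\{1,2\}, \phi, \mathbf{1}, \emptyset, \{\phi\})$ with $\phi = \Gamma^* \multimap \delta$, eliminates the \emph{empty} profile $(\emptyset,\emptyset)$, and exercises an actual redistribution (handing $\phi$ to Player~$1$); this forces it to first derive the auxiliary equivalence $\Gamma \vdash \delta$ iff $\emptyset \vdash \phi$. You instead recycle the one-player game of Prop.~\ref{prop:iNE-hardness}, $G = (\{1\}, \delta, \Gamma \uplus \{\delta\})$ with profile $(\Gamma)$, and observe that with a single player $\redis(\epsilon) = \{\epsilon\}$ and profiles are in bijection with outcomes, so rational eliminability collapses to $(\Gamma) \not\in NE(G)$, which under dichotomous preferences holds iff $\Gamma \not\vdash \delta$: the deviation to $\{\delta\}$ is justified by (ax) alone, and when $\Gamma \vdash \delta$ no deviation can be strongly preferred since the goal is already met. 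Your reduction is simpler (no $\Gamma^* \multimap \delta$ encoding, no sequent bookkeeping) and uniform in the linear/affine distinction, which is all the proposition asks. What the paper's heavier construction buys is reusability: because it eliminates the empty profile, Prop.~\ref{prop:NE-dicho-parsi-empty} transfers the whole argument verbatim to parsimonious preferences (Prop.~\ref{prop:hardness-elimination-parsimonious}) and then to rational construction (Prop.~\ref{prop:RC-parsim-hard}); your profile $(\Gamma)$ would not survive that transfer, since under parsimonious preferences a player may profitably shrink a non-empty contribution even when her goal is satisfied, so $(\Gamma) \in NE(G)$ is no longer equivalent to $\Gamma \vdash \delta$. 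One small slip: your parenthetical claim that deviating to $\Gamma \uplus \{\delta\}$ also yields the goal ``by (ax)'' fails in the linear case (it would need weakening); only the deviation to $\{\delta\}$ is uniformly justified, but that one suffices, so your proof stands.
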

\begin{proof}
Let $\Gamma \vdash \delta$ be an arbitrary intuitionistic sequent.
Let $\phi = \Gamma^* \multimap \delta$. (Remember that $\Gamma^* = \bigotimes_{A \in \Gamma} A$.) Let $G^\epsilon = (\{1,2\}, \phi, \mathbf{1}, \emptyset, \{\phi\})$ be an individual resource game. So, we have $\epsilon_1 = \emptyset$ and $\epsilon_2 = \{\phi\}$. There is only one other distinct redistribution $\epsilon'$ of $\epsilon$ where $\epsilon'_1 = \{\phi\}$ and $\epsilon'_2 = \emptyset$. It is the case that $\redis(\epsilon) = \{\epsilon, \epsilon'\}$. Let $G^{\epsilon'} = (\{1,2\}, \phi, \mathbf{1}, \{\phi\}, \emptyset)$ be the individual resource game resulting from the redistribution $\epsilon'$. Both games are represented on Figure~\ref{fig:hard-redis}.

\begin{figure}[ht]
\centering
\bgroup
\def\arraystretch{1.5}
\begin{subfigure}{.5\textwidth}\centering
\begin{tabular}{|c|*{2}{>{$}C{10mm}<{$}}|}
\hline
\cellcolor[gray]{\transparency}{\diagbox[linewidth=0.2pt, width=\dimexpr \textwidth/10+2\tabcolsep\relax, height=0.8cm]{$1$}{$2$}}
   & \cellcolor[gray]{\transparency}{\emptyset} & \cellcolor[gray]{\transparency}{\{\phi\}}\\
\hline
\cellcolor[gray]{\transparency}{$ \emptyset $} & \emptyset & \{\phi\}\\
\hline
\end{tabular}
\subcaption{\label{fig:hard-redis-Gepsilon} $G^{\epsilon}$.}
\end{subfigure}%
\begin{subfigure}{.5\textwidth}\centering
\begin{tabular}{|c|*{1}{>{$}C{10mm}<{$}}|}
\hline
           \cellcolor[gray]{\transparency}{\diagbox[linewidth=0.2pt, width=\dimexpr \textwidth/10+2\tabcolsep\relax, height=0.8cm]{$1 $}{$2 $}}
   & \cellcolor[gray]{\transparency}{\emptyset}\\
\hline
\cellcolor[gray]{\transparency}{$ \emptyset $} & \emptyset\\
\cellcolor[gray]{\transparency}{$ \{\phi\} $} & \{\phi\}\\
\hline
\end{tabular}
\subcaption{\label{fig:hard-redis-Gepsilon-prime} $G^{\epsilon'}$.}
\end{subfigure}
\egroup
\caption{\label{fig:hard-redis} Games $G^{\epsilon}$ and $G^{\epsilon'}$. The profile $(\emptyset,\emptyset)$ is a Nash equilibrium in $G^{\epsilon}$. The profile $(\emptyset,\emptyset)$ is a Nash equilibrium in $G^{\epsilon'}$ iff $\Gamma \vdash \delta$. (The profile $(\{\phi\}, \emptyset)$ is a Nash equilibrium in $G^{\epsilon'}$. Depending on whether $\Gamma \vdash \delta$ and whether \logica is linear or affine, $(\emptyset, \{\phi\})$ may or may not be Nash equilibria in $G^{\epsilon}$. This is inconsequential for the reduction in the proof of Proposition~\ref{prop:hard-rat-elim-dicho}.)}
\end{figure}

We show that both in the case of linear and of affine logics, we have $\Gamma \not\vdash \delta$ iff $(\emptyset, \emptyset)$ is rationally eliminable in $G^\epsilon$.

We first show that
\begin{equation}\label{eq:deduction}
\Gamma \vdash \delta \text{ iff } \emptyset \vdash \phi \enspace .
\end{equation}
From left to right, suppose $\Gamma \vdash \delta$. By applying ($\otimes$L) enough times we obtain $\Gamma^* \vdash \delta$. Then we obtain $\emptyset \vdash \Gamma^*  \multimap \delta$ using ($\multimap$R). From right to left, suppose $\emptyset \vdash \Gamma^* \multimap \delta$. With (ax) and $\otimes$R we can show $\Gamma \vdash \Gamma^*$. Using $\otimes$R on the sequents $\Gamma \vdash \Gamma^*$ and $\emptyset \vdash \Gamma^* \multimap \delta$ we obtain
\begin{equation}\label{eq:deduction-a}
  \Gamma \vdash \Gamma^* \otimes \Gamma^* \multimap \delta \enspace .
\end{equation}
Without assumption we can also show
\begin{equation}\label{eq:deduction-b}
\Gamma^* \otimes \Gamma^* \multimap \delta \vdash \delta \enspace ,
\end{equation}
using the rules (ax), ($\multimap$L), and ($\otimes$L). We conclude that $\Gamma \vdash \delta$ using (cut) on the sequents~\ref{eq:deduction-a} and~\ref{eq:deduction-b}.

\medskip

We can proceed. Suppose $\Gamma \not\vdash \delta$. We show that $(\emptyset, \emptyset)$ is not a Nash equilibrium in $G^{\epsilon'}$. Since $\Gamma \not\vdash \delta$, we also have $\emptyset \not\vdash \phi$ (by Equation~\ref{eq:deduction}). On the other hand, using (ax), we have $\{\phi\} \vdash \phi$. So in the profile $(\emptyset,\emptyset)$, Player~$1$ has an incentive to deviate to the profile $(\{\phi\},\emptyset)$. So $(\emptyset,\emptyset)$ is not a Nash equilibrium in $G^{\epsilon'}$.

Suppose $\Gamma \vdash \delta$. We show that $(\emptyset, \emptyset)$ is a Nash equilibrium both in $G^{\epsilon}$ and in $G^{\epsilon'}$.

In $G^{\epsilon}$. We have $\emptyset \vdash \mathbf{1}$ from $\textbf{1}$R, so Player~$2$ has no incentive to deviate from the profile $(\emptyset,\emptyset)$ in $G^{\epsilon}$. Moreover, Player~$1$ is dummy in $G^{\epsilon}$. So $(\emptyset,\emptyset)$ is a Nash equilibrium in $G^{\epsilon}$.

In $G^{\epsilon'}$. Since $\Gamma \vdash \delta$, we also have $\emptyset \vdash \phi$ (by Equation~\ref{eq:deduction}), so Player~$1$ has no incentive to deviate from the profile $(\emptyset,\emptyset)$ in $G^{\epsilon'}$. Moreover, Player~$2$ is dummy in $G^{\epsilon'}$. So $(\emptyset,\emptyset)$ is a Nash equilibrium in $G^{\epsilon'}$.

\end{proof}

\subsection{Construction}

For elimination, Lemma~\ref{lemma:rat-elimination} provided a remarkable necessary and sufficient condition for the rational eliminability of a profile. For the rational constructibility of a profile, we can only indicatively provide sufficient conditions. Let $G =  (N,\gamma_1, \ldots, \gamma_n, \epsilon_1, \ldots, \epsilon_n)$ be an IRG, and let $P \in \ch(G)$ be a profile in $G$. If there is a player $i \in N$ such that $\out(P) \vdash \gamma_i$, then $P$ can be rationally constructed by redistributing all the resources to Player~$i$. Also, if there is a player $i \in N$ such that $\biguplus_{k \in N} \epsilon_k \not\vdash \gamma_i \otimes \top$, then $P$ can be rationally constructed by redistributing all the resources to Player~$i$.

We tackle the complexity of {\sf RATIONAL CONSTRUCTION} with dichotomous preferences.

\subsubsection{Hardness}
We prove a lower bound of the problem {\sf RC} in presence of dichotomous preferences.
\begin{proposition}\label{prop:hard-constr-dicho}
{\sf RC} is as hard as the problem of checking
sequent provability in \logica.
\end{proposition}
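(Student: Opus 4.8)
The plan is to reduce the problem of sequent validity in \logica to {\sf RC}. I want to build, from an arbitrary intuitionistic sequent $\Gamma \vdash \delta$, an individual resource game $G^\epsilon$ and a profile $P$ such that $P$ is rationally constructible if and only if $\Gamma \vdash \delta$ holds. The natural move, mirroring the hardness proof for {\sf RE} (Prop.~\ref{prop:hard-rat-elim-dicho}) and the reduction for \iNE (Prop.~\ref{prop:iNE-hardness}), is to encode the sequent into a single player whose goal is exactly $\delta$ and whose available resources are exactly $\Gamma$, so that the satisfaction of that player's objective coincides with provability of $\Gamma \vdash \delta$.

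First I would set $\phi = \Gamma^* \multimap \delta$ and recall Equation~\ref{eq:deduction}, which gives $\Gamma \vdash \delta$ iff $\emptyset \vdash \phi$; this lets me shift between having the whole of $\Gamma$ on the table and producing $\delta$. Concretely I would consider a two-player game such as $G^\epsilon = (\{1,2\}, \gamma_1 = \delta, \gamma_2, \Gamma, \epsilon_2)$ and a target profile $P$ whose outcome is $\Gamma$ (for instance the profile where Player~$1$ contributes all of $\Gamma$ and Player~$2$ contributes $\emptyset$). The second player is there only to make the redistribution non-trivial; I would choose $\gamma_2$ and $\epsilon_2$ so that Player~$2$ never blocks equilibrium (e.g.\ a vacuous goal provable from $\mathbf{1}$, or an empty endowment). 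The key claim is then: the outcome $\out(P) = \Gamma$ can be realized by a Nash equilibrium in \emph{some} redistribution iff $\Gamma \vdash \delta$.

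The forward direction uses the sufficient condition already noted before the proposition: if $\out(P) = \Gamma \vdash \gamma_i$ for some player $i$, then $P$ is rationally constructible by redistributing all resources to $i$. So if $\Gamma \vdash \delta$, giving everything to Player~$1$ makes her goal satisfied, and with a single active player holding her full endowment there is no profitable deviation, yielding a Nash equilibrium with outcome $\Gamma$. For the converse I would argue the contrapositive: if $\Gamma \not\vdash \delta$, then in any redistribution $\epsilon'$ of $\Gamma$, any profile $P'$ with $\out(P') = \Gamma$ fails to be a Nash equilibrium, because Player~$1$ (whose goal is $\delta$) is unsatisfied yet, using the $\multimap$ machinery of Equation~\ref{eq:deduction}, some deviation strictly improves her situation. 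The main obstacle is exactly this converse: I must ensure that whoever ends up holding the relevant resources always has a \emph{profitable} deviation whenever $\delta$ is not derivable, and that no incidental player or leftover resource accidentally creates an equilibrium with the right outcome. Controlling the redistributions tightly — keeping the endowment essentially indivisible (a single compound formula like $\phi$, as in Prop.~\ref{prop:hard-rat-elim-dicho}) rather than the spread-out multiset $\Gamma$ — is the cleanest way to pin down all of $\redis(\epsilon)$ and verify the equivalence uniformly for both the linear and affine cases.
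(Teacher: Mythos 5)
Your reduction breaks in the converse direction, precisely at the step you yourself flag as the main obstacle. Under dichotomous preferences an unsatisfied player destabilizes a profile only if one of \emph{her own} deviations makes her satisfied; dissatisfaction alone is not an incentive to move. Since your target outcome is the whole pool $\Gamma$, any profile realizing it has every player contributing her entire endowment, so deviations can only shrink the outcome. In the affine case this is immediately fatal: by Prop.~\ref{prop:nonempty} the full-contribution profile is a Nash equilibrium in \emph{every} redistribution, so {\sf RC} on your instance always answers yes, whether or not $\Gamma \vdash \delta$. The linear case fails as well: take $\Gamma = \{\delta, p\}$ with $p$ atomic and not vacuous, so that $\delta, p \not\vdash \delta$; the redistribution $\epsilon'_1 = \{\delta\}$, $\epsilon'_2 = \{p\}$ makes the profile $(\{\delta\},\{p\})$ a Nash equilibrium with outcome $\Gamma$ (Player~$1$'s only deviation leaves the outcome $\{p\} \not\vdash \delta$, and Player~$2$, with an inert goal, has no satisfying deviation either), although $\Gamma \not\vdash \delta$. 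Note also that adding Player~$2$ ``to make the redistribution non-trivial'' works against you: {\sf RC} quantifies existentially over $\redis(\epsilon)$, so every additional way of splitting the endowment is additional power for the adversary your contrapositive must defeat.

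Your closing idea --- make the endowment the single indivisible formula $\phi = \Gamma^* \multimap \delta$ --- is the right instinct, but it is not carried out, and on its own it does not suffice: with goal $\delta$ and target outcome $\{\phi\}$ you would be testing $\Gamma^* \multimap \delta \vdash \delta$, which is not equivalent to $\Gamma \vdash \delta$ (take $\Gamma = \{p\}$, $\delta = p$). The paper's proof changes the goal as well: it uses the \emph{one-player} game $G = (\{1\}, \phi, \epsilon_1 = \{\phi\})$ and the target profile $(\emptyset)$. With one player, $\redis(\epsilon) = \{\epsilon\}$ and $(\emptyset)$ is the unique profile with empty outcome, so {\sf RC} collapses to the question $(\emptyset) \in NE(G)$; and $(\emptyset) \in NE(G)$ iff $\emptyset \vdash \phi$, because when $\emptyset \not\vdash \phi$ Player~$1$ profitably deviates to $\{\phi\}$ (as $\phi \vdash \phi$ by (ax)). By Equation~\ref{eq:deduction} this is exactly $\Gamma \vdash \delta$, uniformly in the linear and affine cases.
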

\begin{proof}
Let $\phi = \Gamma^* \multimap \delta$ and $G = (\{1\}, \phi,
\epsilon_1 = \{\phi\})$. We can see that $(\emptyset) \in NE(G)$ iff
$\emptyset \vdash \phi$, that is $\Gamma \vdash \delta$. As
$\redis(\epsilon) = \{\epsilon\}$, we conclude that: for every sequent
$\Gamma \vdash \delta$, $(\emptyset)$ is rationally constructible in
$G$ iff $\Gamma \vdash \delta$ is provable.
\end{proof}

\subsubsection{Algorithms}\label{sec:algo:dicho-constr}

Let $G^\epsilon$ be an individual resource game, and let $P \in \ch(G^\epsilon)$.
To decide whether the profile $P$ can be rationally constructed we can use
Algorithm~\ref{alg:RAT-CONSTR}. This algorithm will serve for all
cases of rational construction in this paper.

\begin{algorithm}
\caption{General algorithm for {\sf RC}}
\label{alg:RAT-CONSTR}
\begin{algorithmic}[1]
\\non-deterministically guess $(\epsilon', P') \in \redis(\epsilon)
  \times \ch(G^{\epsilon'})$.
\\return $\out(P') = \out(P)$ and $P' \in NE(G^{\epsilon'})$.
\end{algorithmic}
\end{algorithm}

The algorithmic analysis is rather simple: we use the problem {\sf NE} as a blackbox, for which complexity upper bounds have been established in Proposition~\ref{prop:NP-Pi2P} and Proposition~\ref{prop:ine-dicho-weakening-easy}.
\pcase{Linear case.}
\begin{proposition}\label{prop:rc-dicho-linear}
When \logica is in $\mathsf{PTIME}$, {\sf RC} is in
$\mathsf{\Sigma_2^p}$.
  When \logica is in $\mathsf{NP}$, {\sf RC} is in
$\mathsf{\Sigma_3^p}$.  When \logica is in $\mathsf{PSPACE}$, {\sf
  RC} is in $\mathsf{PSPACE}$.
\end{proposition}
\begin{proof}
  When \logica is in $\mathsf{PTIME}$, from Proposition~\ref{prop:NP-Pi2P}, we
know that the test of line~$2$ is in $\mathsf{coNP}$.  So {\sf
  RC} is in $\mathsf{NP^{coNP}} =
\mathsf{\Sigma_2^p}$.
Similarly, when \logica is in $\mathsf{NP}$, from Proposition~\ref{prop:NP-Pi2P}, we
know that the test of line~$2$ is in $\mathsf{\Pi_2^p}$.  So {\sf
  RC} is in $\mathsf{NP^{\Pi_2^p}} =
\mathsf{\Sigma_3^p}$. The case for \logica in $\mathsf{PSPACE}$ is
analogous.
\end{proof}

\pcase{Affine case.}
Again, an affine \logica seems to bring some relative algorithmic ease.
\begin{proposition}\label{prop:rc-dicho-affine}
  If \logica is affine, when provability in \logica is in $\mathsf{PTIME}$, then {\sf RC} is in $\mathsf{NP}$.
  When \logica is in $\mathsf{NP}$, {\sf
  RC} is in $\mathsf{\Sigma_2^p}$.  When \logica is in
$\mathsf{PSPACE}$, {\sf RC} is in $\mathsf{PSPACE}$.
\end{proposition}
\begin{proof}
The proof is similar to the one of Proposition~\ref{prop:rc-dicho-linear}, using
the result of Proposition~\ref{prop:ine-dicho-weakening-easy} and, for the case of $\mathsf{NP}$ the fact that $\mathsf{NP^{P^{NP||}}} \subseteq \mathsf{NP^{\Delta_2^p}} = \mathsf{\Sigma_2^p}$.
\end{proof}

\section{Parsimonious preferences}
\label{sec:parsi-pref}

Weakening $(W)$ is sometimes a desirable property of \logica and of
our preferences of resources. However, it has the untoward consequence of incentivizing
players to spend all their resources in individual resource games with
dichotomous preferences. This is well exemplified for instance by
Proposition~\ref{prop:nonempty}.

We can teach our players parsimony by attaching to them finer
preferences that take into account the realization of their objective,
but also the optimality of their contribution.

In an individual resource game $G = (N,\gamma_1, \ldots, \gamma_n,
\epsilon_1, \ldots, \epsilon_n)$, we now say that player $i \in N$
(parsimoniously) strongly prefers $P \in \ch(G)$ over $Q \in \ch(G)$
(noted $Q \prec_i P$) iff one of the following conditions is
satisfied:
\begin{enumerate}
\item not $\out(P) \vdash \gamma_i$ and not $\out(Q) \vdash \gamma_i$
  and $P_i \subset Q_i$;
\item $\out(P) \vdash \gamma_i$ and not $\out(Q) \vdash \gamma_i$;
\item $\out(P) \vdash \gamma_i$ and $\out(Q) \vdash \gamma_i$ and $P_i
  \subset Q_i$.
\end{enumerate}
Similar preferences have been called pseudo-dichotomous in the literature.

We recognise that the second condition corresponds to profile $P$ being dichotomously strongly preferred by Player~$i$ to profile $Q$. The following proposition is a simple consequence.
\begin{lemma}\label{lem:d-p-pref}
If Player~$i$ dichotomously strongly prefers $P$ over $Q$ then Player~$i$ parsimoniously strongly prefers $P$ over $Q$.
\end{lemma}
This has another immediate consequence on Nash equilibria.
\begin{lemma}\label{lem:NE-dicho-parsi}
If a profile $P$ is a Nash equilibrium in presence of parsimonious preferences, then $P$ is a Nash equilibrium in presence of dichotomous preferences.
\end{lemma}
\begin{proof}
Let $\prec_i^d$ (resp., $\prec_i^p$) denote Player~$i$'s parsimonious (resp., dichotomous) preferences; Let $NE_d(G)$ (resp., $NE_p(G)$) denote the set of Nash equilibria in $G$ when considering dichotomous (resp., parsimonious) preferences. Now suppose that $P \in NE_p(G)$. That is, for every $i \in N$ and for every $C_i \in \ch_i(G)$ we have not $P \prec_i^p (C_i, P_{-i})$, and by Lemma~\ref{lem:d-p-pref}, we have not $P \prec_i^d (C_i, P_{-i})$. So $P \in NE_d(G)$.
\end{proof}
Lemma~\ref{lem:NE-dicho-parsi} indicates that every Nash equilibrium in presence of parsimonious preferences is also a Nash equilibrium in presence of dichotomous preferences. The next proposition, which will help us later to prove some hardness result, says that the other way around holds when the profile is the one where every player plays the empty set of resources.
\begin{lemma}\label{lem:NE-dicho-parsi-empty}
The profile $(\emptyset,\ldots,\emptyset)$ is a Nash equilibrium in presence of parsimonious preferences iff it is a Nash equilibrium in presence of dichotomous preferences.
\end{lemma}
\begin{proof}
  Left to right is a consequence of Lemma~\ref{lem:NE-dicho-parsi}. For right to left, assume $(\emptyset,\ldots,\emptyset)$ is in $NE_d(G)$. With parsimonious preferences, the only incentive to deviate from a Nash equilibrium  in presence of dichotomous preferences, would be to play a smaller multiset of resources. This is impossible in $(\emptyset,\ldots,\emptyset)$.
\end{proof}

We now address the complexity of the decision problem of deciding whether a player parsimoniously strongly prefers a profile over another profile.
\begin{proposition}\label{prop:pars-pref-compl} Let $G = (N, \gamma_1,
  \ldots, \gamma_n, \epsilon_1, \ldots, \epsilon_n)$ be an individual
  resource game. Let also $P \in \ch(G)$ and $Q \in \ch(G)$ be two
  profiles, and $i \in N$ be a player.
  The problem of deciding whether $Q \prec_i P$ is: in $\mathsf{PTIME}$ when provability in \logica is in $\mathsf{PTIME}$.
It is in $\mathsf{P^{NP||[2]}}$ when provability in \logica is $\mathsf{NP}$-complete.
It is in $\mathsf{PSPACE}$ when provability in \logica is $\mathsf{PSPACE}$-complete.
\end{proposition}
\begin{proof}
  First, we can evaluate $P_i \subseteq Q_i$ efficiently. We store the
  result in the Boolean variable $v_{\subseteq}$.
 
  We can then perform two non-adaptive queries to an oracle to solve
  sequent validity in \logica on $\out(P) \vdash \gamma_i$ and
  on $\out(Q) \vdash \gamma_i$, and store the results in the Boolean
  variables $v_P$ and $v_Q$ respectively. The formula 
  $((\lnot v_p \land \lnot v_q \land v_{\subseteq})
  \lor (v_p \land \lnot v_q)
  \lor (v_p \land v_q \land v_{\subseteq}))$ is true iff $Q \prec_i P$.

  This yields a correct algorithm for deciding $Q \prec_i P$ in
  $\mathsf{PTIME}$ when \logica is in $\mathsf{PTIME}$, in
  $\mathsf{P^{NP||[2]}}$ when \logica is in $\mathsf{NP}$, and in
  $\mathsf{PSPACE}$ when \logica is in $\mathsf{PSPACE}$.
\end{proof}

To compare the complexity of dichotomous and parsimonious preferences, remember from Proposition~\ref{prop:compl-dicho-pref} that when \logica is in $\mathsf{NP}$, the same problem for dichotomous preferences is in $\mathsf{B\var{H_2}}$. From~\cite{kobler1987} we know that $\mathsf{P^{NP||[1]}} \subseteq \mathsf{B\var{H_2}} \subseteq \mathsf{P^{NP||[2]}}$. It is not known whether these inclusions are strict.

\subsection{Illustration of redistribution and parsimony}\label{sec:illustr-pasimony}
Consider again the individual resource game of Section~\ref{sec:elim:illustr}. (Unless stated otherwise, suppose we are in the affine case.)
With parsimonious preferences, we have $NE(G) = \{
(\emptyset, \emptyset)\}$.  The profile $(\{A\}, \{B\})$ is not a
Nash equilibrium as it was with dichotomous preferences. It would be
more desirable from a social welfare point of view than any other
outcome (it satisfies both players), but the players would nonetheless
not be individually rational by choosing it. They have indeed no
bearing upon the outcome that satisfies them and thus are rational in
withholding their resources.

Nonetheless, like in the case of dichotomous preference, we can
effectively eliminate the current Nash equilibrium in $G^{\epsilon}$
\emph{and} construct the Nash equilibrium yielding $\{A, B\}$ by
redistributing the resources present in $G^{\epsilon}$ so as to obtain
$G^{\epsilon'} = (\{1,2\}, \gamma_1 = B, \gamma_2 = A, \{B\}, \{A\})$.
The only Nash equilibrium is now $(\{B\}, \{A\})$.

\medskip

Unlike dichotomous preferences, parsimonious preferences do not ensure the existence of a Nash equilibrium in the affine case.
Consider the individual resource game 
 $H^\epsilon =
 (\{1,2\}, \gamma_1 = A, \gamma_2 = A \otimes A, \{A\}, \{A\})$. There are two players. The game $H^\epsilon$ can be represented as on Figure~\ref{fig:Hepsilon}.

\begin{figure}[ht]
\begin{center}
\bgroup
\def\arraystretch{1.5}
        \begin{tabular}{|c|*{2}{>{$}C{25mm}<{$}}|}
\hline
           \cellcolor[gray]{\transparency}{\diagbox[linewidth=0.2pt, width=\dimexpr \textwidth/10+2\tabcolsep\relax, height=0.8cm]{$1$}{$2$}}
   & \cellcolor[gray]{\transparency}{\emptyset} & \cellcolor[gray]{\transparency}{\{A\}}\\
\hline
          \cellcolor[gray]{\transparency}{$ \emptyset $} & \emptyset & \{A\}: \gamma_1\\
          \cellcolor[gray]{\transparency}{$ \{A\} $} & \{A\}: \gamma_1 & \{A, A\}: \gamma_1, \gamma_2\\
\hline
         \end{tabular}
\egroup
\end{center}
\caption{\label{fig:Hepsilon} The game $H^\epsilon$. There is no Nash equilibrium under parsimonious preferences.}
\end{figure}

The game $H^\epsilon$ has no Nash equilibrium: At $(\emptyset,\emptyset)$, Player~$1$ does not realize her objective, but she can deviate and play $\{A\}$ to satisfy it. At $(\{A\},\emptyset)$, Player~$2$ has an incentive to deviate and play $\{A\}$ to realize her objective. At $(\{A\},\{A\})$ Player~$1$ has an incentive to deviate and play $\emptyset$. (In the affine case this is because she can still satisfy her objective by contributing less. In the linear case, this is because she can satisfy her objective while she does not before deviating.) At $(\emptyset,\{A\})$, Player~$2$ does not satisfy her objective and thus has an incentive to deviate to play $\emptyset$.

However, we can construct the Nash equilibrium yielding $\{A, A\}$. Let $\epsilon'$ be the redistribution of $\epsilon$ such that $\epsilon'(2) = \{A,A\}$ and $\epsilon'(1) = \emptyset$. We obtain the game depicted on Figure~\ref{fig:Hepsilonprime}.

\begin{figure}[ht]
\begin{center}
\bgroup
\def\arraystretch{1.5}
        \begin{tabular}{|c|>{$}C{15mm}<{$}|>{$}C{15mm}<{$}|>{$}C{25mm}<{$}|}
\hline
           \cellcolor[gray]{\transparency}{\diagbox[linewidth=0.2pt, width=\dimexpr \textwidth/10+2\tabcolsep\relax, height=0.8cm]{$1$}{$2$}}
   & \cellcolor[gray]{\transparency}{\emptyset} & \cellcolor[gray]{\transparency}{\{A\}} & \cellcolor[gray]{\transparency}{\{A, A\}}\\
\hline
          \cellcolor[gray]{\transparency}{$ \emptyset $} & \emptyset & \{A\}: \gamma_1 & \{A, A\}\paNE{}: \gamma_1, \gamma_2\\
\hline
         \end{tabular}
\egroup
\end{center}
\caption{\label{fig:Hepsilonprime} The game $H^{\epsilon'}$. The symbol ${}\paNE$ denotes a Nash equilibrium.}
\end{figure}

In $H^{\epsilon'}$, by assigning all the resources to Player~$2$, the
profile $(\emptyset, \{A,A\})$ is a Nash equilibrium and the only one.
In affine logics, both players satisfy their objectives, but only
Player~$2$ does when the logic is linear.

\subsection{Finding Nash equilibria}

We study the complexity of {\sf NASH EQUILIBRIUM} with parsimonious preferences.

\subsubsection{Hardness}

We are now getting used to many-to-one reductions from sequent (non-)provability. It was a fruitful problem in presence of dichotomous preference, and it will remain one in presence of parsimonious preferences. We prove a complexity lower bound for the problem of {\sf NE} in presence of parsimonious preferences.
\begin{proposition}\label{prop:iNE-hardness-parsimonious} 
The problem {\sf NE} is as hard as the problem of
checking sequent \emph{non-provability} in \logica, even when there is only
one player.
\end{proposition}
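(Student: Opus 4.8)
The plan is to reduce sequent \emph{invalidity} in \logica to {\sf NE} with parsimonious preferences, using a one-player game whose only candidate equilibrium is the empty profile. The pivotal tool is Prop.~\ref{prop:NE-dicho-parsi-empty}: since the profile $(\emptyset)$ is a parsimonious Nash equilibrium exactly when it is a dichotomous one, it suffices to build, from an arbitrary sequent $\Gamma \vdash \delta$ (w.l.o.g.\ intuitionistic, as noted before Prop.~\ref{prop:iNE-hardness}), a game $G$ in which $(\emptyset)$ is a \emph{dichotomous} Nash equilibrium iff $\Gamma \not\vdash \delta$. Recall that $(\emptyset)$ fails to be a dichotomous equilibrium for Player~$1$ precisely when some deviation $C_1 \subseteq \epsilon_1$ satisfies $C_1 \vdash \gamma_1$ while $\emptyset \not\vdash \gamma_1$. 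So I want the unique non-trivial deviation to succeed exactly when $\Gamma \vdash \delta$, and I want $\emptyset \not\vdash \gamma_1$ to hold unconditionally.

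For the construction, let $a$ be an atom occurring neither in $\Gamma$ nor in $\delta$, and set
\[
G = (\{1\},\ \gamma_1 = a \otimes \delta,\ \epsilon_1 = \{a \otimes \Gamma^*\}) \enspace .
\]
The endowment is a singleton, so the only profiles are $(\emptyset)$ and $(\{a \otimes \Gamma^*\})$, and the sole possible deviation from $(\emptyset)$ is to contribute the whole endowment. By the characterisation above, $(\emptyset)$ is \emph{not} a dichotomous equilibrium iff $\emptyset \not\vdash a \otimes \delta$ and $a \otimes \Gamma^* \vdash a \otimes \delta$. Hence everything reduces to two provability facts, which I would then establish uniformly for the linear and the affine case.

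The two facts are as follows. (i) $\emptyset \not\vdash a \otimes \delta$ \emph{always}: the fresh atom $a$ cannot be produced from the empty context, since weakening only discards resources and no rule introduces a right-hand atom without a matching left-hand occurrence. One rigorous handle, when \logica contains $\mathbf{0}$, is closure of provability under substitution of a non-occurring atom: replacing $a := \mathbf{0}$ (which leaves $\Gamma,\delta$ untouched by freshness) would turn $\emptyset \vdash a \otimes \delta$ into $\emptyset \vdash \mathbf{0} \otimes \delta$, hence $\emptyset \vdash \mathbf{0}$, contradicting consistency. (ii) $a \otimes \Gamma^* \vdash a \otimes \delta$ iff $\Gamma \vdash \delta$: the forward direction is an easy derivation from $\Gamma^* \vdash \delta$ using $\otimes$L, the axiom $a \vdash a$, and $\otimes$R; for the converse I would substitute $a := \mathbf{1}$, again using freshness, collapsing the sequent to $\Gamma^* \vdash \delta$, i.e.\ $\Gamma \vdash \delta$. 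Combining (i) and (ii), $(\emptyset)$ is a dichotomous---and thus, by Prop.~\ref{prop:NE-dicho-parsi-empty}, a parsimonious---Nash equilibrium iff $\Gamma \not\vdash \delta$, which is the required single-player reduction from invalidity.

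The main obstacle is precisely the ``fresh-atom cancellation'' underlying fact~(i) and the converse of~(ii), and making it rigorous across all fragments that \logica may denote (linear vs.\ affine, MLL vs.\ MALL). The cleanest lever is closure of provability under uniform substitution of the non-occurring atom $a$ by a formula (here $\mathbf{1}$ and $\mathbf{0}$); if one wishes to avoid $\mathbf{0}$ in the purely multiplicative fragments, fact~(i) can instead be recovered from the standard atom-balance necessary condition (a provable sequent matches each atom's positive and negative occurrences), which a lone positive $a$ violates, and this argument survives the addition of weakening. I would also verify that the construction stays inside the chosen fragment: $\otimes$ and the atom $a$ are available everywhere considered, and should one prefer to lean on $\multimap$ and cut-elimination rather than substitution, the analogous game with $\gamma_1 = a$ and $\epsilon_1 = \{(\Gamma^* \multimap \delta) \multimap a\}$ yields the same equivalence.
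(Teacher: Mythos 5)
Your reduction is correct, but it takes a genuinely different route from the paper's. The paper works at the \emph{full} profile rather than the empty one: it sets $\phi = \Gamma^* \multimap \delta$, builds $G = (\{1\}, \phi, \{\phi\})$, and shows $(\{\phi\}) \in NE(G)$ iff $\Gamma \not\vdash \delta$, arguing directly with the parsimonious preference clauses. Since $\phi \vdash \phi$ holds by (ax), Player~$1$ is always satisfied at $(\{\phi\})$, so her only possible incentive is parsimony itself: she prefers $(\emptyset)$ exactly when $\emptyset \vdash \phi$, which (by Equation~\ref{eq:deduction} from the proof of Prop.~\ref{prop:hard-rat-elim-dicho}) holds iff $\Gamma \vdash \delta$. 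This makes the paper's proof short and entirely internal to the calculus: no fresh atoms, no substitution closure, no consistency or ``freshness'' lemmas. Your proof instead works at $(\emptyset)$, tags the goal and endowment with a fresh atom $a$ so that $\emptyset \not\vdash \gamma_1$ holds unconditionally, establishes the equivalence at the level of \emph{dichotomous} preferences, and then transfers it to parsimonious preferences via Prop.~\ref{prop:NE-dicho-parsi-empty} --- in fact this is exactly the bridging strategy the paper itself uses later for {\sf RE} hardness (Prop.~\ref{prop:hardness-elimination-parsimonious}). What your route buys is reuse of the dichotomous analysis and a hardness witness at the empty profile; what it costs is reliance on meta-theoretic facts (closure of provability under uniform substitution of a non-occurring atom, unprovability of $\vdash \mathbf{0}$, cut elimination) that the paper's argument never needs.

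One caveat on your fact~(i): the fallback you invoke for the multiplicative fragments --- that the exact atom-balance condition ``survives the addition of weakening'' --- is false as stated. In Affine MLL the sequent $a, \lineg b \vdash a$ is provable (axiom plus left weakening) although $b$ occurs only positively, so exact matching of polarities is not a necessary condition once weakening is present. Your fact~(i) itself is still true in all the fragments considered: for Affine MLL one argues by cut-free proof analysis that $\vdash a \otimes \delta$ could only end in $\otimes$R (requiring the unprovable $\vdash a$) or right weakening (requiring the unprovable empty sequent); and the $a := \mathbf{0}$ substitution handles MALL. So the gap is in the cited lever, not in the conclusion, and it is routinely patchable --- but as written that sentence would not stand in the affine case.
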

\begin{proof}
As before, we consider w.l.o.g.~only the intuitionistic sequents of
\logica in the following reduction.

Let $\Gamma \vdash \delta$ be an intuitionistic sequent of \logica.
We define $\phi = \Gamma^* \multimap \delta$.
We can construct the individual resource game $G$ such that $G =
(\{1\},\phi, \{ \phi\})$.
In $G$, Player~1 has exactly two choices: $\ch_i(G) = \{\emptyset, \{\phi\}\}$.

We show that $\Gamma \vdash \delta$ iff $\phi \not\in NE(G)$.

Suppose $(\{\phi\}) \not\in NE(G)$. So $(\{\phi\}) \prec_1 (\emptyset)$. Since by (ax) $\phi \vdash \phi$ (the profile $(\{\phi\})$ satisfies Player~$1$'s objectives) and $\emptyset \subset \{\phi\}$ (Player~$1$'s contribution is strictly less in the profile $(\emptyset)$ than it is in $(\{\phi\})$), it must be that $\emptyset \vdash \phi$. We infer $\Gamma \vdash \delta$, as we did in part of the proof of Proposition~\ref{prop:hard-rat-elim-dicho}.

Suppose $\Gamma \vdash \delta$. We obtain $\Gamma^* \vdash \delta$ by using ($\otimes$L) enough times, and we deduce $\vdash \phi$ with ($\multimap$R). We thus have $\emptyset \vdash \phi$ and $\emptyset \subset \{\phi\}$. So $(\{\phi\}) \prec_1 (\emptyset)$ and $(\{\phi\}) \not\in NE(G)$.
\end{proof}

\subsubsection{Algorithms}

\pcase{Linear case.}
In the individual resource game $G = (N, \gamma_1, \ldots, \gamma_n, \epsilon_1, \ldots, \epsilon_n)$, we can use Algorithm~\ref{GA-co-iNE} to check whether a profile $P \not\in NE(G)$, even for parsimonious preferences.
We have a result analogous to Proposition~\ref{prop:NP-Pi2P} for parsimonious preferences.

\begin{proposition}\label{prop:ine-easy-NP-pars-linear}\label{prop:ine-easy-pspace-pars}
If the problem of sequent provability in \logica is in $\mathsf{PTIME}$ then \iNE is in $\mathsf{coNP}$.  
If the problem of sequent provability in \logica is in $\mathsf{NP}$ then \iNE is in $\mathsf{\Pi_2^p}$. If the problem of sequent provability in
\logica is in $\mathsf{PSPACE}$ then \iNE is in $\mathsf{PSPACE}$.

\begin{proof}We use Proposition~\ref{prop:pars-pref-compl} and, for the case of $\mathsf{NP}$, the fact that
  $\mathsf{coNP^{{P}^{NP||[2]}[1]}}  
\subseteq \mathsf{coNP^{{P}^{NP}}}
= \mathsf{coNP^{\Delta_2^p}}
= \mathsf{co\Sigma_2^p}
= \mathsf{\Pi_2^p}$.
\end{proof}
\end{proposition}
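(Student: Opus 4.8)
The plan is to mirror the argument already used for dichotomous preferences in Prop.~\ref{prop:NP-Pi2P}, since the complement algorithm, Algorithm~\ref{GA-co-iNE}, does not depend on which preference relation $\prec_i$ is in force: it non-deterministically guesses a player $i$ and an alternative contribution $C'_i \in \ch_i(G)$, and then reduces to a single instance of the preference test $P \prec_i (P_{-i}, C'_i)$. The only change in moving from dichotomous to parsimonious preferences is the cost of that test, and this is precisely what Prop.~\ref{prop:pars-pref-compl} supplies.

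First I would record that, when sequent validity in \logica is in $\mathsf{NP}$, one call to the parsimonious preference test $P \prec_i (P_{-i}, C'_i)$ is a $\mathsf{P^{NP||[2]}}$ computation. Plugging this into Algorithm~\ref{GA-co-iNE} places the complement of \iNE in $\mathsf{NP^{P^{NP||[2]}[1]}}$: a non-deterministic polynomial-time guess of $(i, C'_i)$ followed by one query to a $\mathsf{P^{NP||[2]}}$ oracle. Taking complements gives \iNE in $\mathsf{coNP^{P^{NP||[2]}[1]}}$.

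The main work --- and where I expect the only real care to be needed --- is then the purely complexity-theoretic collapse of $\mathsf{coNP^{P^{NP||[2]}[1]}}$ down into $\mathsf{\Pi_2^p}$. I would argue in stages, using standard facts about the polynomial hierarchy: a constant number of non-adaptive $\mathsf{NP}$ queries is absorbed by adaptive access, so $\mathsf{P^{NP||[2]}} \subseteq \mathsf{P^{NP}} = \mathsf{\Delta_2^p}$, which lets us replace the single oracle call by a $\mathsf{\Delta_2^p}$ oracle and obtain \iNE in $\mathsf{coNP^{\Delta_2^p}}$. Then the collapse $\mathsf{NP^{\Delta_2^p}} = \mathsf{\Sigma_2^p}$ --- itself justified by the fact that a $\mathsf{\Sigma_2^p}$ machine can run the deterministic $\mathsf{P^{NP}}$ procedure defining the oracle inline using its own $\mathsf{NP}$ oracle --- yields $\mathsf{coNP^{\Delta_2^p}} = \mathsf{co\Sigma_2^p} = \mathsf{\Pi_2^p}$, as required.

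For the $\mathsf{PSPACE}$ case the same skeleton applies but collapses immediately: Prop.~\ref{prop:pars-pref-compl} gives the preference test in $\mathsf{PSPACE}$, the guess in Algorithm~\ref{GA-co-iNE} is of polynomial length, and enumerating all guesses in reusable polynomial space (together with Savitch's theorem, $\mathsf{NPSPACE} = \mathsf{PSPACE}$) places the complement of \iNE in $\mathsf{PSPACE}$; closure of $\mathsf{PSPACE}$ under complement finishes the argument.
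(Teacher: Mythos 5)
Your proposal is correct and follows essentially the same route as the paper: reuse Algorithm~\ref{GA-co-iNE} with the parsimonious preference test of Prop.~\ref{prop:pars-pref-compl} as the oracle, place the complement of \iNE in $\mathsf{NP^{P^{NP||[2]}[1]}}$, and collapse $\mathsf{coNP^{P^{NP||[2]}[1]}} \subseteq \mathsf{coNP^{P^{NP}}} = \mathsf{coNP^{\Delta_2^p}} = \mathsf{co\Sigma_2^p} = \mathsf{\Pi_2^p}$, with the $\mathsf{PSPACE}$ case following from closure of $\mathsf{PSPACE}$ under nondeterminism and complement. Your added justifications (absorbing non-adaptive queries into adaptive ones, simulating the $\mathsf{P^{NP}}$ oracle inline) merely spell out steps the paper leaves implicit.
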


\pcase{Affine case.}
When \logica is affine, we can do better than using Algorithm~\ref{GA-co-iNE}.
We first state a technical lemma which is analogous to Lemma~\ref{lemma:ne-dich-affin}.
  \begin{lemma}\label{lemma:ne-parsi-affin}
    Let $G = (N, \gamma_1, \ldots, \gamma_n, \epsilon_1, \ldots,
    \epsilon_n)$ be an individual resource game. When \logica is affine, $P
    \not \in NE(G)$ iff $\exists i \in N:$ s.t.\ either:
    \begin{enumerate}
    \item $\out(P) \not\vdash \gamma_i$ and $P_i \not = \emptyset$;
    \item $\out(P) \not\vdash \gamma_i$ and
      $\out((P_{-i},\epsilon_i)) \vdash \gamma_i$;
    \item $\out(P) \vdash \gamma_i$ and $\exists A \in P_i$:
      $\out((P_{-i}, P_i \setminus \{A\})) \vdash \gamma_i$.
    \end{enumerate}
    \begin{proof}
      Right to left is immediate. From left to right, suppose $P \not
      \in NE(G)$. So there exists $i \in N$ and $C_i \in \ch_i(G)$
      such that $P\prec_i (P_{-i}, C_i)$. There are three
      cases to consider:
\begin{enumerate}
\item not $\out((P_{-i}, C_i)) \vdash \gamma_i$ and not $\out(P)
  \vdash \gamma_i$ and $C_i \subset P_i$;
\item $\out((P_{-i}, C_i)) \vdash \gamma_i$ and not $\out(P) \vdash
  \gamma_i$;
\item $\out((P_{-i}, C_i)) \vdash \gamma_i$ and $\out(P) \vdash
  \gamma_i$ and $C_i \subset P_i$.
\end{enumerate}
Suppose~(1) is the case. It implies that there is $C_i \subset P_i$
and thus that $P_i \not = \emptyset$.  Suppose~(2) is the case.
We essentially use the same argument as the one used in the proof of
Lemma~\ref{lemma:ne-dich-affin}.
We have $\out((P_{-i}, C_i)) \vdash \gamma_i$. By applying weakening
$(|\epsilon_i| - |C_i|)$ times, we easily obtain that $\out((P_{-i},
\epsilon_i)) \vdash \gamma_i$. Suppose~(3) is the case. We thus have
$\out((P_{-i}, C_i)) \vdash \gamma_i$ with $C_i \subset P_i$. Take a
formula $A \in P_i \setminus C_i$. Then, by applying weakening $(|P_i|
- |C_i| - 1)$ times, we easily obtain that $\out((P_{-i}, P_i
\setminus \{A\})) \vdash \gamma_i$.
    \end{proof}
  \end{lemma}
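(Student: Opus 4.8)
The plan is to exploit the same monotonicity that drives Lemma~\ref{lemma:ne-dich-affin}: in an affine \logica the weakening rule $(W)$ makes goal satisfaction \emph{upward closed} in a player's contribution, i.e.\ if $\out((P_{-i}, C_i)) \vdash \gamma_i$ and $C_i \subseteq C_i'$ then $\out((P_{-i}, C_i')) \vdash \gamma_i$, and dually its contrapositive gives downward closure of non-satisfaction. The characterisation to prove simply records that, once this monotonicity is available, an arbitrary profitable deviation $C_i$ can always be replaced by a canonical one, so only three ``witnesses'' need to be inspected.

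First I would unfold the definition: $P \notin NE(G)$ holds exactly when there is a player $i$ and a deviation $C_i \in \ch_i(G)$ with $P \prec_i (P_{-i}, C_i)$. Instantiating the three clauses of the parsimonious preference $\prec_i$ with the dispreferred profile $P$ and the preferred profile $(P_{-i}, C_i)$ yields three exhaustive cases for such a deviation: (a) neither $\out(P)$ nor $\out((P_{-i},C_i))$ proves $\gamma_i$ and $C_i \subset P_i$; (b) $\out((P_{-i},C_i)) \vdash \gamma_i$ while $\out(P) \not\vdash \gamma_i$; (c) both prove $\gamma_i$ and $C_i \subset P_i$. These will map, respectively, onto clauses~1, 2 and~3 of the statement.

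For the left-to-right direction I would treat each case. Case (a) immediately gives clause~1, since $C_i \subset P_i$ forces $P_i \neq \emptyset$ and $\out(P) \not\vdash \gamma_i$ is given. Case (b) gives clause~2: from $C_i \subseteq \epsilon_i$ and $\out((P_{-i},C_i)) \vdash \gamma_i$, applying $(W)$ on the formulas of $\epsilon_i \setminus C_i$ (exactly as in Lemma~\ref{lemma:ne-dich-affin}) yields $\out((P_{-i},\epsilon_i)) \vdash \gamma_i$. Case (c) gives clause~3: since $C_i \subset P_i$ is proper, pick any $A \in P_i \setminus C_i$; then $C_i \subseteq P_i \setminus \{A\}$, so $(W)$ turns $\out((P_{-i},C_i)) \vdash \gamma_i$ into $\out((P_{-i}, P_i \setminus \{A\})) \vdash \gamma_i$, while $\out(P) \vdash \gamma_i$ carries over unchanged. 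The right-to-left direction is the easy one: each clause exhibits an explicit profitable deviation — from clause~1 deviate to $\emptyset$ (here one still uses monotonicity to infer $\out((P_{-i},\emptyset)) \not\vdash \gamma_i$ from $\out(P) \not\vdash \gamma_i$, which is what matches parsimony clause~1), from clause~2 deviate to $\epsilon_i$, and from clause~3 deviate to $P_i \setminus \{A\}$.

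I expect the only genuine content to be the two applications of weakening that shrink an arbitrary witness to a canonical one (the move up to $\epsilon_i$ in case (b) and up to $P_i \setminus \{A\}$ in case (c)); everything else is bookkeeping over the three disjuncts. The point to get right is case~(c): weakening must push $C_i$ up to $P_i \setminus \{A\}$ rather than all the way to $P_i$, because the deviation has to remain a \emph{strict} subset of $P_i$ for parsimony to register an improvement — and choosing $A$ outside $C_i$ is exactly what preserves $C_i \subseteq P_i \setminus \{A\}$ so that $(W)$ applies.
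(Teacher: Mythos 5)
Your proposal is correct and follows essentially the same route as the paper's proof: the identical three-way case split on the parsimonious preference clauses, with weakening pushing an arbitrary profitable deviation up to the canonical witnesses $\epsilon_i$ (case b) and $P_i \setminus \{A\}$ for $A \in P_i \setminus C_i$ (case c). Your only addition is spelling out the right-to-left direction (which the paper dismisses as immediate), including the correct observation that downward closure of non-satisfaction under $(W)$ is what makes the deviation to $\emptyset$ match parsimony clause~1.
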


Algorithm~\ref{alg:PA-iNE} can then be used to check whether $P \in NE(G)$.\footnote{Algorithm~\ref{alg:PA-iNE} corrects an omission in~\cite[Algo.~$5$]{Tr16ijcai} by adding ``if ($P_i \not = \emptyset$): return false'' lines $9$ and $10$.}

\begin{algorithm}
\caption{Algorithm for \iNE with parsimonious preferences and affine \logica}
\label{alg:PA-iNE}
\begin{algorithmic}[1]
\\for each $i \in N$ do:
\\\indent\indent if ($\out(P) \vdash \gamma_i$) : \{
\\\indent\indent\indent for each $A \in P_i$ do:
\\\indent\indent\indent\indent if ($\out((P_{-i},P_i\setminus\{A\})) \vdash \gamma_i$):
\\\indent\indent\indent\indent\indent return false.
\\\indent\indent \} else \{
\\\indent\indent\indent if ($\out((P_{-i},\epsilon_i)) \vdash \gamma_i$):
\\\indent\indent\indent\indent return false.
\\\indent\indent\indent if ($P_i \not = \emptyset$):
\\\indent\indent\indent\indent return false.  
\\\indent \indent \}
\\return true.
\end{algorithmic}
\end{algorithm}

\begin{proposition}\label{prop:ine-nondicho-NP}
  When \logica is affine, if the problem of sequent provability in \logica is in $\mathsf{PTIME}$, then \iNE is in $\mathsf{PTIME}$.
  If the problem of sequent provability in \logica is in $\mathsf{NP}$, then \iNE is in $\mathsf{P^{NP||}}$.
If the problem of sequent provability in \logica is in $\mathsf{PSPACE}$, then \iNE is in $\mathsf{PSPACE}$.
\end{proposition}
\begin{proof}
Lemma~\ref{lemma:ne-parsi-affin} justifies the correctness of Algorithm~\ref{alg:PA-iNE}.
The algorithm can be simulated by a deterministic oracle Turing
  machine in polynomial time with less than $\Sigma_{i\in N} (1 + |P_i|)$ non-adaptive queries to an oracle for sequent provability in \logica.
When the complexity of sequent provability in \logica is in $\mathsf{NP}$ it yields a complexity of $\mathsf{P^{NP||}}$.
\end{proof}

\subsection{Elimination}

We study the complexity of {\sf RATIONAL ELIMINATION} with parsimonious preferences.

\subsubsection{Algorithms}

Lemma~\ref{lemma:rat-elimination} also holds for parsimonious
preferences. It is easy to see that the proof carries over.

\pcase{Linear case.}
Algorithm~\ref{GA-RE} can still be used in the case of parsimonious preferences
because Lemma~\ref{lemma:rat-elimination} is still granted.
We thus have the analog to Proposition~\ref{prop:easyness-rat-elim-linear} for
parsimonious preferences.
\begin{proposition}\label{prop:easyness-rat-elim-linear-pars}
  When \logica is linear, {\sf RE} is in $\mathsf{NP}$ when sequent provability in \logica is in $\mathsf{PTIME}$, in $\mathsf{\Sigma_2^p}$ when \logica is in $\mathsf{NP}$, and in $\mathsf{PSPACE}$ when \logica is in $\mathsf{PSPACE}$.
  \begin{proof}We use Proposition~\ref{prop:pars-pref-compl} and, in the case of $\mathsf{NP}$, the fact that $\mathsf{NP^{{P}^{NP||[2]}[1]}}
\subseteq \mathsf{NP^{{P}^{NP}}}
= \mathsf{NP^{\Delta_2^p}}
= \mathsf{\Sigma_2^p}$.
  \end{proof}
\end{proposition}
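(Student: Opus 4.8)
The plan is to reuse Algorithm~\ref{GA-RE}, whose applicability to parsimonious preferences is already secured by the fact---noted just above---that Lemma~\ref{lemma:rat-elimination} still holds in this setting. Thus the only thing left to establish is the complexity bound, and I would obtain it by the same guess-then-oracle-call analysis as in the dichotomous linear case (Prop.~\ref{prop:easyness-rat-elim-linear}), but substituting the parsimonious preference oracle of Prop.~\ref{prop:pars-pref-compl} for the dichotomous one of Prop.~\ref{prop:compl-dicho-pref}. Concretely, for an input $P \in \ch(G^\epsilon)$, Algorithm~\ref{GA-RE} nondeterministically guesses a player $i$ and a deviation $C'_i \in \ch_i(G^{[\epsilon\arrow i]})$ and then performs the single preference test $P \prec_i (P_{-i}, C'_i)$; by Lemma~\ref{lemma:rat-elimination} this accepts iff $P$ is rationally eliminable, so correctness is inherited unchanged.

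For the resource analysis I would treat the preference test as a black box invoked once. When sequent validity in \logica is in $\mathsf{NP}$, Prop.~\ref{prop:pars-pref-compl} places that test in $\mathsf{P^{NP||[2]}}$, so {\sf RE} lands in $\mathsf{NP^{{P}^{NP||[2]}[1]}}$. It then remains to collapse this class to $\mathsf{\Sigma_2^p}$. Since a two-query parallel $\mathsf{NP}$ oracle is a special case of a general one, $\mathsf{P^{NP||[2]}} \subseteq \mathsf{P^{NP}} = \mathsf{\Delta_2^p}$, giving the chain $\mathsf{NP^{{P}^{NP||[2]}[1]}} \subseteq \mathsf{NP^{{P}^{NP}}} = \mathsf{NP^{\Delta_2^p}} = \mathsf{\Sigma_2^p}$. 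For the $\mathsf{PSPACE}$ regime I would instead read off the $\mathsf{PSPACE}$ bound on the preference test from Prop.~\ref{prop:pars-pref-compl} and use that a nondeterministic polynomial-time machine equipped with a $\mathsf{PSPACE}$ oracle can be simulated in polynomial space, i.e.\ $\mathsf{NP^{PSPACE}} = \mathsf{PSPACE}$.

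The step that needs genuine care, rather than a routine citation, is the equality $\mathsf{NP^{\Delta_2^p}} = \mathsf{\Sigma_2^p}$: naively bounding $\mathsf{\Delta_2^p}$ from above by $\mathsf{\Sigma_2^p}$ would only yield $\mathsf{NP^{\Sigma_2^p}} = \mathsf{\Sigma_3^p}$, one level too high. The correct justification inlines the $\mathsf{P^{NP}}$ oracle into the outer machine: the deterministic polynomial-time part of each oracle computation runs inside the nondeterministic computation at no cost, and its adaptive $\mathsf{NP}$ queries simply become adaptive queries of the combined machine, leaving a single nondeterministic polynomial-time machine with an $\mathsf{NP}$ oracle, i.e.\ $\mathsf{NP^{NP}} = \mathsf{\Sigma_2^p}$ (the reverse inclusion $\mathsf{\Sigma_2^p} \subseteq \mathsf{NP^{\Delta_2^p}}$ is immediate from $\mathsf{NP} \subseteq \mathsf{\Delta_2^p}$). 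It is worth noting that this is precisely the collapse the affine analog (Prop.~\ref{prop:easyness-rat-elim-affine}) avoids, since there the nondeterministic guess can be replaced by a deterministic sweep over the players, keeping the problem inside $\mathsf{P^{NP||}}$; the indispensability of the guess in the linear case is what pushes the bound up to $\mathsf{\Sigma_2^p}$.
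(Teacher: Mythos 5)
Your proposal is correct and takes essentially the same route as the paper's proof: reuse Algorithm~\ref{GA-RE}, whose correctness is inherited because Lemma~\ref{lemma:rat-elimination} carries over to parsimonious preferences, bound the single preference test via Prop.~\ref{prop:pars-pref-compl}, and collapse $\mathsf{NP^{{P}^{NP||[2]}[1]}} \subseteq \mathsf{NP^{{P}^{NP}}} = \mathsf{NP^{\Delta_2^p}} = \mathsf{\Sigma_2^p}$ in the $\mathsf{NP}$ case and use $\mathsf{NP^{PSPACE}} = \mathsf{PSPACE}$ otherwise. Your explicit oracle-inlining justification of $\mathsf{NP^{\Delta_2^p}} = \mathsf{\Sigma_2^p}$ is a correct elaboration of a step the paper merely cites as a standard fact.
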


\pcase{Affine case.} Let $G = (N,\gamma_1, \ldots, \gamma_n, \epsilon_1, \ldots, \epsilon_n)$ be an individual resource game and let $P \in \ch(G)$ be a profile.
We can use Algorithm~\ref{alg:PA-RE} to check whether a profile $P \in \ch(G)$ is rationally eliminable.

\begin{algorithm}
\caption{Algorithm for {\sf RE} with parsimonious preferences and affine \logica}
\label{alg:PA-RE}
\begin{algorithmic}[1]
\\for each $i \in N$ do:
\\\indent\indent if ($(\out(P)) \prec_i ([\epsilon\arrow i](i))$):\label{line-PA-RE-test1}
\\\indent\indent\indent return true.
\\\indent\indent for each $A \in \out(P)$:
\\\indent\indent \indent if ($(\out(P))  \prec_i (\out(P)\setminus \{A\})$):\label{line-PA-RE-test2}
\\\indent\indent \indent \indent return true.
\\return false.
\end{algorithmic}
\end{algorithm}


\begin{proposition}\label{prop:easyness-rat-elim-affine-pars}
  When \logica is affine, {\sf RE} is in
  $\mathsf{PTIME}$ when provability in \logica is in $\mathsf{PTIME}$.
  It is in $\mathsf{P^{NP||}}$
  when \logica is in $\mathsf{NP}$. It is in $\mathsf{PSPACE}$ when \logica is in $\mathsf{PSPACE}$.

\begin{proof}
%
    Lemma~\ref{lemma:rat-elimination} which still holds with
    parsimonious preferences ensures that it is enough to
    consider the redistributions $[\epsilon\arrow i]$ for some player
    $i$.
    Algorithm~\ref{alg:PA-RE}, then checks for each of these redistributions whether Player~$i$ has an incentive to deviate in the game $G^{[\epsilon\arrow i]}$ from the profile $(\out(P)) \in \ch(G^{[\epsilon\arrow i]})$ to any one of $([\epsilon\arrow i](i)) \in \ch(G^{[\epsilon\arrow i]})$ and $(\out(P) \setminus \{A\}) \in \ch(G^{[\epsilon\arrow i]})$ for some $A \in \out(P)$. It is weakening $(W)$ that justifies that it is enough to consider these profiles, because $X \not \vdash \gamma_i$ implies $Y \not \vdash \gamma_i$ for any couple of multisets $Y \subseteq X$.
The correctness of Algorithm~\ref{alg:PA-RE} follows.
    %


    
The tests of line~\ref{line-PA-RE-test1} and line~\ref{line-PA-RE-test2} only involve the following instances of the sequent provability decision problem: 
$(\out(P)) \vdash \gamma_i$ and
$([\epsilon\arrow i](i)) \vdash \gamma_i$ for very Player~$i \in N$, and 
$(\out(P)\setminus \{A\}) \vdash \gamma_i$, for every Player~$i \in N$ and every formula $A \in \out(P)$.
The algorithm can thus be simulated by a deterministic oracle Turing machine in polynomial time with at most $|N|(|\out(P)|+2)$ non-adaptive calls to an oracle for sequent provability.
\end{proof}
\end{proposition}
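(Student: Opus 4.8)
The plan is to establish correctness of Algorithm~\ref{alg:PA-RE} and then read off its resource usage, leaning entirely on the two characterisation lemmas already in hand. First I would invoke Lemma~\ref{lemma:rat-elimination}, which still holds under parsimonious preferences, to reduce the rational eliminability of $P$ to the existence of a player $i \in N$ for which the profile $(\out(P))$ fails to be a Nash equilibrium in the concentrated game $G^{[\epsilon\arrow i]}$. Since $\out(P) \subseteq \biguplus_{k\in N}\epsilon_k = [\epsilon\arrow i](i)$, this profile genuinely exists, and it is the unique profile of $G^{[\epsilon\arrow i]}$ with outcome $\out(P)$. This step converts a quantification over all redistributions into a loop over the $|N|$ single-active-player games $G^{[\epsilon\arrow i]}$.

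Next I would apply Lemma~\ref{lemma:ne-parsi-affin} to $G^{[\epsilon\arrow i]}$ at the profile $(\out(P))$, where $P_i = \out(P)$ and the active endowment is $[\epsilon\arrow i](i)$. Its three disjuncts describe the profitable deviations: goal unmet with $\out(P)\neq\emptyset$; goal unmet but achievable by adding resources; goal met yet still met after dropping a single formula. The task is then to match these to the two tests performed by the algorithm, namely the upward test $(\out(P)) \prec_i ([\epsilon\arrow i](i))$ and the family of downward tests $(\out(P)) \prec_i (\out(P)\setminus\{A\})$ over $A \in \out(P)$.

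The main obstacle, and the step I would treat most carefully, is showing that these finitely many tests already capture every profitable parsimonious deviation, even though $i$ may in principle deviate to an arbitrary subset of $[\epsilon\arrow i](i)$. Here affine weakening is decisive in the monotone form: if $Y \subseteq X$ and $X \not\vdash \gamma_i$ then $Y \not\vdash \gamma_i$ (equivalently, provability is preserved when passing to a superset). For a condition-2 deviation (unmet goal made met by adding resources), weakening lets me replace any goal-achieving choice by the maximal one, so the single upward test against $[\epsilon\arrow i](i)$ suffices. For a condition-1 deviation (both outcomes fail the goal, smaller contribution), any single-element removal already yields a strict subset that still fails the goal, so $\out(P)\neq\emptyset$ is witnessed by some downward test. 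For a condition-3 deviation (goal met, strictly smaller contribution still meeting it), if some $C \subset \out(P)$ has $C \vdash \gamma_i$, then choosing $A \in \out(P)\setminus C$ gives $C \subseteq \out(P)\setminus\{A\}$, and upward preservation upgrades this to $\out(P)\setminus\{A\} \vdash \gamma_i$; so again a single-element removal witnesses it. The converse is immediate, since any test that fires directly instantiates one of the lemma's disjuncts, whence $(\out(P)) \not\in NE(G^{[\epsilon\arrow i]})$ and $P$ is eliminable by Lemma~\ref{lemma:rat-elimination}.

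Finally, for the complexity bound I would appeal to Prop.~\ref{prop:pars-pref-compl}: each preference test decomposes into polynomial-time subset comparisons together with validity queries of the forms $\out(P) \vdash \gamma_i$, $[\epsilon\arrow i](i) \vdash \gamma_i$, and $\out(P)\setminus\{A\} \vdash \gamma_i$. Collecting the distinct queries over all $i \in N$ and all $A \in \out(P)$ yields at most $|N|(|\out(P)|+2)$ of them, and since the final verdict is a fixed Boolean combination of their outcomes, they may be issued non-adaptively. This places {\sf RE} in $\mathsf{P^{NP||}}$ when sequent validity in \logica is in $\mathsf{NP}$, and in $\mathsf{P}^{\mathsf{PSPACE}} = \mathsf{PSPACE}$ when it is in $\mathsf{PSPACE}$.
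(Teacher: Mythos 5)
Your proposal is correct and follows essentially the same route as the paper's proof: Lemma~\ref{lemma:rat-elimination} reduces eliminability to non-equilibrium of $(\out(P))$ in the concentrated games $G^{[\epsilon\arrow i]}$, affine weakening (in its monotone form $Y \subseteq X$ and $X \not\vdash \gamma_i$ implies $Y \not\vdash \gamma_i$) justifies that the single upward test and the single-element-removal downward tests of Algorithm~\ref{alg:PA-RE} capture all profitable deviations, and counting the $|N|(|\out(P)|+2)$ validity queries, issued non-adaptively, gives $\mathsf{P^{NP||}}$ resp.\ $\mathsf{PSPACE}$. The only difference is presentational: you route the case analysis explicitly through the three disjuncts of Lemma~\ref{lemma:ne-parsi-affin}, which the paper's proof of this proposition leaves implicit, arguing directly from weakening.
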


\subsubsection{Hardness}

After Lemma~\ref{lem:NE-dicho-parsi-empty} and the proof of Proposition~\ref{prop:hard-rat-elim-dicho}, the following proposition does not come as a surprise.
\begin{proposition}\label{prop:hardness-elimination-parsimonious}
{\sf RE} is as hard as the problem of checking
sequent \emph{non-provability} in \logica.
\end{proposition}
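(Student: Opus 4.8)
The plan is to reuse, without change, the reduction constructed in the proof of Prop.~\ref{prop:hard-rat-elim-dicho}, and to argue that it remains correct once dichotomous preferences are replaced by parsimonious ones. Given an arbitrary intuitionistic sequent $\Gamma \vdash \delta$, I would again set $\phi = \Gamma^* \multimap \delta$, build the game $G^\epsilon = (\{1,2\}, \phi, \mathbf{1}, \emptyset, \{\phi\})$, and take the candidate profile to be $(\emptyset,\emptyset)$. The goal is to show that $\Gamma \not\vdash \delta$ iff $(\emptyset,\emptyset)$ is rationally eliminable in $G^\epsilon$ under parsimonious preferences; since $\redis(\epsilon) = \{\epsilon, \epsilon'\}$ is computable and the whole construction is polynomial, this yields a many-to-one reduction from sequent invalidity to {\sf RE}.

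The first step is the elementary observation that the empty outcome determines the profile uniquely: because $\out(P') = \biguplus_i C_i$, having $\out(P') = \emptyset$ forces every contribution $C_i$ to be empty, so in any redistribution the all-empty profile is the only profile with outcome $\emptyset$. Hence, for $P = (\emptyset,\emptyset)$, the condition ``there is a redistribution $\epsilon''$ such that every $P'$ with $\out(P') = \out(P)$ satisfies $P' \notin NE(G^{\epsilon''})$'' simplifies to ``there is $\epsilon'' \in \redis(\epsilon)$ with $(\emptyset,\emptyset) \notin NE(G^{\epsilon''})$''. I therefore never need to reason about distinct profiles sharing the same outcome.

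The second step is to invoke Prop.~\ref{prop:NE-dicho-parsi-empty}, applied to each of the two redistributed games $G^\epsilon$ and $G^{\epsilon'}$: for the all-empty profile, being a Nash equilibrium under parsimonious preferences is equivalent to being one under dichotomous preferences. Combining the two steps, $(\emptyset,\emptyset)$ is rationally eliminable under parsimonious preferences iff it is rationally eliminable under dichotomous preferences. The latter equivalence with $\Gamma \not\vdash \delta$ is exactly what the proof of Prop.~\ref{prop:hard-rat-elim-dicho} established (recalling that $(\emptyset,\emptyset)$ is always a Nash equilibrium in $G^\epsilon$, that in $G^{\epsilon'}$ it is a Nash equilibrium iff $\emptyset \vdash \phi$, and that $\emptyset \vdash \phi$ iff $\Gamma \vdash \delta$ by Equation~\ref{eq:deduction}), so I can cite that analysis directly. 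The only point requiring care --- and it is a mild one --- is checking that Prop.~\ref{prop:NE-dicho-parsi-empty} genuinely applies to every redistributed game and not merely to $G^\epsilon$; since that proposition is stated for an arbitrary IRG and my argument only ever inspects the all-empty profile, nothing obstructs the transfer, and the reduction works uniformly whether \logica is linear or affine.
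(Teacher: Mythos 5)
Your proposal is correct and follows essentially the same route as the paper: reuse the game $G^\epsilon = (\{1,2\}, \phi, \mathbf{1}, \emptyset, \{\phi\})$ from Prop.~\ref{prop:hard-rat-elim-dicho}, and transfer its dichotomous-preference analysis to parsimonious preferences via Prop.~\ref{prop:NE-dicho-parsi-empty} applied in both $G^\epsilon$ and $G^{\epsilon'}$. Your explicit remark that the empty outcome admits a unique profile in every redistribution (so eliminability reduces to checking the all-empty profile) is only implicit in the paper's argument, but it is the same proof.
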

\begin{proof}
Let $\Gamma \vdash \delta$ be an arbitrary intuitionistic sequent. We construct the same game as in the proof of Proposition~\ref{prop:hard-rat-elim-dicho}.
Let $\phi = \Gamma^* \multimap \delta$. Let $G^\epsilon = (\{1,2\}, \phi, \mathbf{1}, \emptyset, \{\phi\})$. 

In the proof of Proposition~\ref{prop:hard-rat-elim-dicho},
we showed that, in presence of dichotomous preferences, both in the case of linear and of affine logics, we have $\Gamma \not\vdash \delta$ iff $(\emptyset, \emptyset)$ is rationally eliminable in $G^\epsilon$.

Now with Lemma~\ref{lem:NE-dicho-parsi-empty}, we know that $(\emptyset, \emptyset)$ is a Nash equilibrium in presence of dichotomous preferences iff it is a Nash equilibrium in presence of parsimonious preferences (both in $G^{\epsilon}$ and $G^{\epsilon'}$, and no matter if \logica is linear or affine, or if $\Gamma \vdash \delta$ or $\Gamma \not\vdash \delta$).

Hence, we have $\Gamma \not\vdash \delta$ iff $(\emptyset, \emptyset)$ is rationally eliminable in $G^\epsilon$, also in presence of parsimonious preferences.
\end{proof}



\subsection{Construction}

Finally, we tackle the complexity of {\sf RATIONAL CONSTRUCTION} with parsimonious preferences.

\subsubsection{Hardness}
We establish a complexity lower bound for the problem of {\sf RC} in presence of parsimonious preferences.
\begin{proposition}\label{prop:RC-parsim-hard}
{\sf RC} is as hard as the problem of checking
sequent \emph{non-provability} in \logica.
\end{proposition}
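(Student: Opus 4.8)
The plan is to recycle exactly the single-player game built for the Nash-equilibrium hardness result under parsimonious preferences (Prop.~\ref{prop:iNE-hardness-parsimonious}), and to observe that in a one-player game rational constructibility collapses to plain membership in $NE$. This lets the invalidity-hardness already proved there propagate directly to {\sf RC}.

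Concretely, given an arbitrary intuitionistic sequent $\Gamma \vdash \delta$ of \logica, I set $\phi = \Gamma^* \multimap \delta$ and take $G = (\{1\}, \phi, \{\phi\})$, the very game of Prop.~\ref{prop:iNE-hardness-parsimonious}. I claim that the profile $P = (\{\phi\})$ is rationally constructible iff $\Gamma \not\vdash \delta$, which yields the desired many-to-one reduction from sequent invalidity; the construction of $\phi$ and of $G$ is plainly polynomial time.

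The two structural facts driving the argument are: first, $G$ has a single player, so the total endowment $\{\phi\}$ admits only the trivial redistribution and $\redis(\epsilon) = \{\epsilon\}$; second, with a single player the outcome coincides with that player's contribution, so $(\{\phi\})$ is the unique profile whose outcome is $\{\phi\}$. Together these force $P$ to be rationally constructible precisely when $(\{\phi\}) \in NE(G)$. I then invoke the contrapositive of Prop.~\ref{prop:iNE-hardness-parsimonious}, which established $\Gamma \vdash \delta$ iff $(\{\phi\}) \notin NE(G)$; hence $(\{\phi\}) \in NE(G)$ iff $\Gamma \not\vdash \delta$, completing the reduction.

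I expect no real obstacle, since the combinatorial heart of the matter was already settled in Prop.~\ref{prop:iNE-hardness-parsimonious}. The only point deserving a moment's care is ruling out any spurious redistribution or alternative profile realizing the outcome $\{\phi\}$, and both are excluded by the single-player restriction. It is worth flagging the contrast with the dichotomous case (Prop.~\ref{prop:hard-constr-dicho}): there, constructing $(\emptyset)$ tracked sequent \emph{validity}, whereas here constructing the full-contribution profile $(\{\phi\})$ tracks sequent \emph{invalidity}, because under parsimonious preferences Player~$1$ is tempted to shed the resource $\phi$ exactly when $\emptyset \vdash \phi$, i.e.\ exactly when $\Gamma \vdash \delta$.
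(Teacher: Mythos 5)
Your proof is correct, but it takes a different route from the paper's. The paper recycles the \emph{two-player} games of Prop.~\ref{prop:hardness-elimination-parsimonious} (i.e., $G^{\epsilon} = (\{1,2\}, \phi, \mathbf{1}, \emptyset, \{\phi\})$ and its redistribution $\epsilon'$) and asserts that $\Gamma \not\vdash \delta$ iff $(\{\phi\},\emptyset)$ is rationally constructible in $G^{\epsilon'}$; this keeps a single uniform family of games serving both the {\sf RE} and {\sf RC} hardness results, but it leaves implicit the check that under the other redistribution $\epsilon$ the profile $(\emptyset,\{\phi\})$ is never a parsimonious Nash equilibrium (Player~$2$, whose goal is $\mathbf{1}$, always prefers to shed $\phi$). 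You instead recycle the \emph{one-player} game $G = (\{1\},\phi,\{\phi\})$ of Prop.~\ref{prop:iNE-hardness-parsimonious}, where the two structural collapses you identify --- $\redis(\epsilon)=\{\epsilon\}$ and the one-to-one correspondence between profiles and outcomes --- make rational constructibility of $(\{\phi\})$ literally equivalent to $(\{\phi\}) \in NE(G)$, so the iff already established in that earlier proof ($\Gamma \vdash \delta$ iff $(\{\phi\}) \notin NE(G)$, which uses no weakening and hence holds in both the linear and affine cases) transfers verbatim. Your version is more self-contained, mirrors the paper's own strategy for the dichotomous case (Prop.~\ref{prop:hard-constr-dicho}), and yields the slightly stronger statement that hardness holds even with a single player; your closing remark correctly pinpoints why the polarity flips from validity (dichotomous, constructing $(\emptyset)$) to invalidity (parsimonious, constructing $(\{\phi\})$).
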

\begin{proof}
Consider the games in the proof of Proposition~\ref{prop:hardness-elimination-parsimonious}. We can see that both for linear and affine logics we have that
$\Gamma \not\vdash \delta$ iff $(\{\phi\}, \emptyset)$ can be rationally constructed in $G^{\epsilon'}$.
\end{proof}

\subsubsection{Algorithms}

Our algorithmic analysis is very similar to the analysis we made when the preferences are dichotomous in Section~\ref{sec:algo:dicho-constr}. Let $G^\epsilon$ be an individual resource game and $P \in \ch(G^\epsilon)$. To decide whether $P$ can be rationally constructed we can reuse Algorithm~\ref{alg:RAT-CONSTR}.

Again, we use the problem {\sf NE} as a blackbox, for which complexity upper bounds have been established in Proposition~\ref{prop:ine-easy-NP-pars-linear} and Proposition~\ref{prop:ine-nondicho-NP}.
\pcase{Linear case.}
\begin{proposition}\label{prop:rc-parsi-linear}
  When sequent provability in \logica is in $\mathsf{PTIME}$, {\sf RC} is in
  $\mathsf{\Sigma_2^p}$.
  When \logica is in $\mathsf{NP}$, {\sf RC} is in
$\mathsf{\Sigma_3^p}$.  When \logica is in $\mathsf{PSPACE}$, {\sf
  RC} is in $\mathsf{PSPACE}$.
\end{proposition}
\begin{proof}
The proof is similar to the one of Proposition~\ref{prop:rc-dicho-linear},
using the result of Proposition~\ref{prop:ine-easy-NP-pars-linear}.
\end{proof}

The next proposition also comes without surprise.
\pcase{Affine case.}
\begin{proposition}\label{prop:rc-parsi-affine}
  If \logica is affine, when \logica is in $\mathsf{PTIME}$, {\sf
  RC} is in $\mathsf{NP}$.
  When \logica is in $\mathsf{NP}$, {\sf
  RC} is in $\mathsf{\Sigma_2^p}$.  When \logica is in
$\mathsf{PSPACE}$, {\sf RC} is in $\mathsf{PSPACE}$.
\end{proposition}
\begin{proof}
The proof is similar to the one of Proposition~\ref{prop:rc-dicho-affine},
using the result of Proposition~\ref{prop:ine-nondicho-NP}.
\end{proof}

\section{Examples}
\label{sec:examples}

We present more thorough examples. They involve several resources and objectives that are modeled with a variety of logical operands. We take the opportunity to present fully the important formal proofs of the realized objectives.

We start with a toy example, simple but rich enough, upon which we can demonstrate all the frameworks and problems addressed in the paper.

Then, we formally study the divorce arbitration scenario of Example~\ref{ex:divorce}, as well as a three-player variant of the scenario of interconnected economies from Example~\ref{example:telecom1}.

\subsection{Alan and the fish}\label{sec:alan-fish}

We first introduce the resources involved and how they are built in
the logical language.
\begin{itemize}
\item Basic resources:
\begin{itemize}
\item one mole of dioxygen: $\var{O_2}$
\item one mole of dihydrogen: $\var{H_2}$
\item one mole of water: $\var{H_2O}$
\item one `token' of thirst: $\var{thirst}$
\end{itemize}
\item Anti-resources can be captured via the linear negation:
\begin{itemize}
\item one thirst quencher: $\lineg \var{thirst}$
\end{itemize}
\item Resource transformation processes:
\begin{itemize}
\item one process of electrolysis: $\var{elec} = \var{H_2O} \otimes \var{H_2O} \multimap \var{H_2} \otimes \var{H_2} \otimes \var{O_2}$
\item one process of drinking water: $\var{drink} = \var{H_2O} \multimap \lineg \var{thirst}$
\end{itemize}
\end{itemize}

\paragraph{Game definition.}
Let $G^\epsilon_{\mathit{af}} = (\{a,f\},\gamma_a, \gamma_f, \epsilon_a,
\epsilon_f)$ be the individual resource game with two players, Alan $a$ and
the Fish $f$. The fish wants one mole of dioxygen: $\gamma_f =
\var{O_2}$. Alan wants one mole of dioxygen for his fish and wants to
quench his thirst: $\gamma_a = \var{O_2} \otimes \lineg \var{thirst}$.

In the game $G^\epsilon_{\mathit{af}}$, Alan is endowed with $\epsilon_a = \{\var{drink}, \var{elec}\}$. He can
drink once and can electrolyse water once. The fish is endowed with
three tokens of water $\epsilon_f = \{\var{H_2O},\var{H_2O},\var{H_2O}\}$.

\medskip
We suppose that \logica is affine. For this example, we will consider both cases of dichotomous and parsimonious preferences.
\medskip

As we did before, we will represent a Nash equilibrium under dichotomous preferences with the symbol~${}\diNE$, and under parsimonious preferences with the symbol~${}\paNE$. By Lemma~\ref{lem:NE-dicho-parsi}, the latter implies the former. Then, when a profile is a Nash equilibrium under both dichotomous and parsimonious preferences we will use the symbol~${}\dipaNE$.
The game $G^\epsilon_{\mathit{af}}$ and the realized objectives can be depicted as on Figure~\ref{fig:Gaf-epsilon}.

\begin{figure}[ht]
\begin{center}
\resizebox{\textwidth}{!}{
\bgroup
\def\arraystretch{1.5}
        \begin{tabular}{|c|*{4}{>{$}c<{$}}|}
\hline
           \cellcolor[gray]{\transparency}{\diagbox[linewidth=0.2pt, width=\dimexpr \textwidth/10+2\tabcolsep\relax, height=0.8cm]{$a$}{$f$}}
   & \cellcolor[gray]{\transparency}{\emptyset} & \cellcolor[gray]{\transparency}{\{\var{H_2O}\}} & \cellcolor[gray]{\transparency}{\{\var{H_2O}, \var{H_2O}\}} & \cellcolor[gray]{\transparency}{\{\var{H_2O}, \var{H_2O}, \var{H_2O}\}}\\
\hline
\cellcolor[gray]{\transparency}{$\emptyset$} &  \emptyset\dipaNE &  \{\var{H_2O}\}\diNE      & \{\var{H_2O},\var{H_2O}\}\diNE     & \{\var{H_2O},\var{H_2O},\var{H_2O}\} \\
\cellcolor[gray]{\transparency}{$\{\var{drink}\}$}     &  \{\var{drink}\}\diNE     &  \{\var{drink},\var{H_2O}\}\diNE    & \{\var{drink},\var{H_2O},\var{H_2O}\}\diNE   & \{\var{drink},\var{H_2O},\var{H_2O},\var{H_2O}\}\\
\cellcolor[gray]{\transparency}{$\{\var{elec}\}$}     &  \{\var{elec}\}     &  \{\var{elec},\var{H_2O}\}    & \{\var{elec},\var{H_2O},\var{H_2O}\}\diNE:\gamma_f   & \{\var{elec},\var{H_2O},\var{H_2O},\var{H_2O}\}:\gamma_f  \\
\cellcolor[gray]{\transparency}{$\{\var{drink}, \var{elec}\}$}  &  \{\var{drink}, \var{elec}\}  &  \{\var{drink},\var{elec},\var{H_2O}\}  & \{\var{drink},\var{elec},\var{H_2O},\var{H_2O}\}\diNE:\gamma_f & 
\{\var{drink},\var{elec},\var{H_2O},\var{H_2O},\var{H_2O}\}\diNE:\gamma_a, \gamma_f\\
\hline
         \end{tabular}
\egroup
}
\end{center}
\caption{\label{fig:Gaf-epsilon} The game $G^{\epsilon}_{\mathit{af}}$. Alan plays rows, and the fish plays columns. \logica is affine. The symbol ${}\diNE$ marks the Nash equilibria under dichotomous preferences. The symbol ${}\dipaNE$ marks the profiles that are also Nash equilibria under both dichotomous and parsimonious preferences.}
\end{figure}
Appendix~\ref{sec:proofs-realized-obj} provides the detailed proofs of the realized objectives.

\paragraph{Dichotomous preferences: eliminations of bad equilibria.}
If the preferences are dichotomous, there are plenty Nash equilibria in
$G^\epsilon_{\mathit{af}}$. They are:
$(\emptyset, \emptyset)$, $(\emptyset, \{\var{H_2O}\})$, $(\emptyset, \{\var{H_2O}, \var{H_2O}\})$,
$(\{\var{drink}\}, \emptyset)$, $(\{\var{drink}\},\linebreak \{\var{H_2O}\})$, $(\{\var{drink}\}, \{\var{H_2O}, \var{H_2O}\})$,
$(\{\var{elec}\}, \{\var{H_2O},\var{H_2O}\})$, $(\{\var{drink},\var{elec}\}, \{\var{H_2O},\var{H_2O}\})$,
and $(\{\var{drink},\var{elec}\},\linebreak \{\var{H_2O},\var{H_2O},\var{H_2O}\})$.

However, only the profile $(\{\var{drink},\var{elec}\}, \{\var{H_2O},\var{H_2O},\var{H_2O}\})$, whose outcome is
$\{\var{drink},\var{H_2O},\var{H_2O},\var{H_2O},\linebreak \var{elec}\}$, satisfies the objectives of both players.
It would thus be desirable to eliminate the other profiles. To do so,
let $\epsilon'$ be the endowment such that $\epsilon'_a =
\{\var{drink},\var{elec},\var{H_2O},\var{H_2O},\var{H_2O}\}$ and $\epsilon'_f = \emptyset$. The game
$G^{\epsilon'}_{\mathit{af}}$ and the realized objectives can be (partially)
depicted as on Figure~\ref{fig:Gaf-epsilonprime}.

\begin{figure}[ht]
\begin{center}
\bgroup
\def\arraystretch{1.5}
        \begin{tabular}{|c|*{1}{>{$}c<{$}}|}
\hline
           \cellcolor[gray]{\transparency}{\diagbox[linewidth=0.2pt, width=\dimexpr \textwidth/4+2\tabcolsep\relax, height=0.8cm]{$a$}{$f$}}
   & \cellcolor[gray]{\transparency}{\emptyset}\\
\hline
  \cellcolor[gray]{\transparency}{$\emptyset$} & \emptyset\\
  \cellcolor[gray]{\transparency}{$\{\var{drink}\}$} & \{\var{drink}\}\\
  \cellcolor[gray]{\transparency}{$\vdots$} & \vdots\\
  \cellcolor[gray]{\transparency}{$\{\var{elec},\var{H_2O}, \var{H_2O}\}$} & \{\var{elec},\var{H_2O}, \var{H_2O}\}:\gamma_f\\
  \cellcolor[gray]{\transparency}{$\vdots$} & \vdots\\
  \cellcolor[gray]{\transparency}{$\{\var{elec},\var{H_2O}, \var{H_2O}, \var{H_2O}\}$} & \{\var{elec},\var{H_2O}, \var{H_2O}, \var{H_2O}\}:\gamma_f\\
  \cellcolor[gray]{\transparency}{$\vdots$} & \vdots\\
  \cellcolor[gray]{\transparency}{$\{\var{drink},\var{elec},\var{H_2O}, \var{H_2O}\}$} & \{\var{drink},\var{elec},\var{H_2O}, \var{H_2O}\}:\gamma_f\\
  \cellcolor[gray]{\transparency}{$\{\var{drink},\var{elec},\var{H_2O}, \var{H_2O}, \var{H_2O}\}$} & \{\var{drink},\var{elec},\var{H_2O}, \var{H_2O}, \var{H_2O}\}\dipaNE:\gamma_a, \gamma_f\\
\hline
         \end{tabular}
\egroup
\end{center}
\caption{\label{fig:Gaf-epsilonprime} The game $G^{\epsilon'}_{\mathit{af}}$.}
\end{figure}

It is readily seen that in $G^{\epsilon'}_{\mathit{af}}$, when preferences are
dichotomous, only the profile\linebreak $(\{\var{drink},\var{elec},\var{H_2O},\var{H_2O},\var{H_2O}\}, \emptyset)$
whose outcome is $\{\var{drink},\var{H_2O},\var{H_2O},\var{H_2O},\var{elec}\}$, is a Nash equilibrium.

\paragraph{Parsimonious preferences: construction of a good equilibrium.}
If the preferences are parsimonious, the profile $(\emptyset,
\emptyset)$ is a Nash equilibrium in the game $G^{\epsilon}_{\mathit{af}}$,
and is the only one. One can
nonetheless redistribute the resources so as to construct an
equilibrium where Alan and the fish both realize their
objectives. That is, one can construct the profile $(\{\var{drink}, \var{elec}\},
\{\var{H_2O},\var{H_2O},\var{H_2O}\})$.
To do so, let $\epsilon''$ be the endowment such that $\epsilon''_a =
\{\var{drink},\var{H_2O},\var{H_2O},\var{H_2O}\}$ and $\epsilon''_f = \{\var{elec}\}$. The game
$G^{\epsilon''}_{\mathit{af}}$ and the realized objectives can be depicted as
on Figure~\ref{fig:Gaf-epsilonsecond}.

\begin{figure}[ht]
\begin{center}
\bgroup
\def\arraystretch{1.5}
        \begin{tabular}{|c|*{2}{>{$}c<{$}}|}
\hline
           \cellcolor[gray]{\transparency}{\diagbox[linewidth=0.2pt, width=\dimexpr \textwidth/4+2\tabcolsep\relax, height=0.8cm]{$a$}{$f$}}
   & \cellcolor[gray]{\transparency}{\emptyset} & \cellcolor[gray]{\transparency}{\{\var{elec}\}}\\
\hline
  \cellcolor[gray]{\transparency}{$\emptyset$}   &  \emptyset\dipaNE   &     \{\var{elec}\}\\
\cellcolor[gray]{\transparency}{$\{\var{H_2O}\}$}       &  \{\var{H_2O}\}\diNE       &  \{\var{H_2O},\var{elec}\}\\ 
\cellcolor[gray]{\transparency}{$\{\var{H_2O},\var{H_2O}\}$}     &  \{\var{H_2O},\var{H_2O}\}     &  \{\var{H_2O},\var{H_2O},\var{elec}\}:\gamma_f\\ 
\cellcolor[gray]{\transparency}{$\{\var{H_2O},\var{H_2O},\var{H_2O}\}$}   &  \{\var{H_2O},\var{H_2O},\var{H_2O}\}   &  \{\var{H_2O},\var{H_2O},\var{H_2O},\var{elec}\}:\gamma_f\\ 
\cellcolor[gray]{\transparency}{$\{\var{drink}\}$}       &  \{\var{drink}\}\diNE       &  \{\var{drink},\var{elec}\}\\ 
\cellcolor[gray]{\transparency}{$\{\var{drink},\var{H_2O}\}$}     &  \{\var{drink},\var{H_2O}\}\diNE     &  \{\var{drink},\var{H_2O},\var{elec}\}\\ 
\cellcolor[gray]{\transparency}{$\{\var{drink},\var{H_2O},\var{H_2O}\}$}   &  \{\var{drink},\var{H_2O},\var{H_2O}\}   &  \{\var{drink},\var{H_2O},\var{H_2O},\var{elec}\}:\gamma_f\\ 
\cellcolor[gray]{\transparency}{$\{\var{drink},\var{H_2O},\var{H_2O},\var{H_2O}\}$} &  \{\var{drink},\var{H_2O},\var{H_2O},\var{H_2O}\} &  \{\var{drink},\var{H_2O},\var{H_2O},\var{H_2O},\var{elec}\}\dipaNE:\gamma_a, \gamma_f\\
\hline
         \end{tabular}
\egroup
\end{center}
\caption{\label{fig:Gaf-epsilonsecond} The game $G^{\epsilon''}_{\mathit{af}}$.}
\end{figure}

When preferences are parsimonious, the profiles $(\emptyset,
\emptyset)$ and $(\{\var{drink},\var{H_2O},\var{H_2O},\var{H_2O}\}, \{\var{elec}\})$ are Nash equilibria
in $G^{\epsilon''}_{\mathit{af}}$ and are the only ones.

Notice that, the redistribution $\epsilon'$ would also effectively construct the profile, although at the price of a more draconian redistribution. It would also eliminate $(\emptyset, \emptyset)$.

\subsection{Ann and Bernard get a divorce}\label{sec:divorce}

We formalize Example~\ref{ex:divorce}. We will only consider parsimonious preferences.
We also assume that \logica is Affine MLL.
We introduce the resources involved in the example.
\begin{itemize}
\item the lease agreement: $\var{shop}$ 
\item the resource of flour for a year: $\var{flour}$
\item the resource of one year worth of bread: $\var{bread}$
\item the bread making equipment is the resource transformation process: $\var{flour} \multimap \var{bread}$
\end{itemize}

Using these as basic resources, we formalize Example~\ref{ex:divorce} as the game $G^{\epsilon}_{\mathit{ab}}$.

\paragraph{Game definition.}
Let $G^\epsilon_{\mathit{ab}} = (\{a,b\},\gamma_a, \gamma_b, \epsilon_a,
\epsilon_b)$ be the individual resource game with two players, Ann $a$ and
Bernard $b$. Ann wants enough bread for a year: $\gamma_a = \var{bread}$. Bernard wants the lease agreement: $\gamma_b = \var{shop}$.
In the game $G^\epsilon_{\mathit{ab}}$, Ann is endowed with the lease agreement: $\epsilon_a = \{\var{shop}\}$. Bernard is endowed with enough flour to make bread for two years, and with the bread making equipment: $\epsilon_b = \{\var{flour}, \var{flour}, \var{\var{flour} \multimap \var{bread}}\}$.

The game $G^{\epsilon}_{\mathit{ab}}$ and the realized objectives can be depicted as
on Figure~\ref{fig:Gab-epsilon}.
\begin{figure}[ht]
\begin{center}
\resizebox{\textwidth}{!}{
\bgroup
\def\arraystretch{1.5}
        \begin{tabular}{|c|*{2}{>{$}c<{$}}|}
\hline
           \cellcolor[gray]{\transparency}{\diagbox[linewidth=0.2pt, width=\dimexpr \textwidth/3+2\tabcolsep\relax, height=0.8cm]{$b$}{$a$}}
   & \cellcolor[gray]{\transparency}{\emptyset} & \cellcolor[gray]{\transparency}{\{\var{shop}\}}\\
           \hline
           \cellcolor[gray]{\transparency}{$\emptyset$} & \emptyset \dipaNE & \{\var{shop}\}:\gamma_b\\
           \cellcolor[gray]{\transparency}{$\{\var{flour}\}$} & \{\var{flour}\} & \{\var{flour}, \var{shop}\}:\gamma_b\\
           \cellcolor[gray]{\transparency}{$\{\var{flour} \multimap \var{bread}\}$} & \{\var{flour} \multimap \var{bread}\} & \{\var{flour} \multimap \var{bread}, \var{shop}\}:\gamma_b\\
           \cellcolor[gray]{\transparency}{$\{\var{flour}, \var{flour}\}$} & \{\var{flour}, \var{flour}\} & \{\var{flour}, \var{flour}, \var{shop}\}:\gamma_b\\
           \cellcolor[gray]{\transparency}{$\{\var{flour}, \var{flour} \multimap \var{bread}\}$} & \{\var{flour}, \var{flour} \multimap \var{bread}\}:\gamma_a & \{\var{flour}, \var{flour} \multimap \var{bread}, \var{shop}\}:\gamma_b,\gamma_a\\
           \cellcolor[gray]{\transparency}{$\{\var{flour}, \var{flour}, \var{flour} \multimap \var{bread}\}$} & \{\var{flour}, \var{flour}, \var{flour} \multimap \var{bread}\}:\gamma_a & \{\var{flour}, \var{flour}, \var{flour} \multimap \var{bread}, \var{shop}\}:\gamma_b,\gamma_a\\\hline
         \end{tabular}
\egroup
}
\end{center}
\caption{\label{fig:Gab-epsilon} The game $G^{\epsilon}_{\mathit{ab}}$. Bernard plays rows, and Ann plays columns. The profile $(\emptyset,\emptyset) \in \ch_b \times \ch_a$ is the only Nash equilibrium in presence of parsimonious preferences.}
\end{figure}
All the formal proofs of the realized objectives are trivial.


\paragraph{An undesirable equilibrium.}
One can see on Figure~\ref{fig:Gab-epsilon}, that the profiles $(\{\var{flour}, \var{flour} \multimap \var{bread}\},\linebreak \{\var{shop}\})$ and $(\{\var{flour}, \var{flour}, \var{flour} \multimap \var{bread}\}, \{\var{shop}\})$ in $\ch_b \times \ch_a$ would satisfy both Ann and Bernard. However, in both of them, Bernard has an incentive to provide less resources from his endowment, and to deviate to $\emptyset \in \ch_b$. In turn, in $(\emptyset, \{\var{shop}\}) \in \ch_b \times \ch_a$, Ann is not satisfied, and so has an incentive to retain her resources as well, deviating to her choice $\emptyset \in \ch_a$. The profiles $(\{\var{flour}, \var{flour} \multimap \var{bread}\}, \emptyset)$ and $(\{\var{flour}, \var{flour}, \var{flour} \multimap \var{bread}\}, \emptyset\})$  in $\ch_b \times \ch_a$ satisfy Ann's objective but do not satisfy Bernard's. Hence, Bernard has an incentive to deviate to  $\emptyset \in \ch_b$. 

The profile $(\emptyset, \emptyset)$ is the only Nash equilibrium of $G^{\epsilon}_{\mathit{ab}}$, but it satisfies neither Ann's objective, nor Bernard's. On the other hand, the outcome of the profile $(\{\var{flour}, \var{flour} \multimap \var{bread}\}, \{\var{shop}\}) \in  \ch_b \times \ch_a$ would satisfy them both.

\paragraph{A desirable redistribution.}
So the arbitrator redistributes the resources that are available. He assigns the bread making equipment and half the flour to Ann. He assigns the lease agreement and half the flour to Bernard. That is, $\epsilon'_a = \{\var{flour}, \var{flour} \multimap \var{bread}\}$ and $\epsilon'_b = \{\var{flour}, \var{shop}\}$. This redistribution yields the game $G^{\epsilon'}_{\mathit{ab}}$. It can be depicted as on Figure~\ref{fig:Gab-epsilonprime}.
\begin{figure}[ht]
\begin{center}
\resizebox{\textwidth}{!}{
\bgroup
\def\arraystretch{1.5}
        \begin{tabular}{|c|*{4}{>{$}c<{$}}|}
\hline
           \cellcolor[gray]{\transparency}{\diagbox[linewidth=0.2pt, width=\dimexpr \textwidth/7+2\tabcolsep\relax, height=0.8cm]{$b$}{$a$}}
   & \cellcolor[gray]{\transparency}{\emptyset} & \cellcolor[gray]{\transparency}{\{\var{flour}\}} & \cellcolor[gray]{\transparency}{\{\var{flour} \multimap \var{bread}\}} & \cellcolor[gray]{\transparency}{\{\var{flour}, \var{flour} \multimap \var{bread}\}}\\
           \hline
           \cellcolor[gray]{\transparency}{$\emptyset$} & \emptyset & \{\var{flour}\} & \{\var{flour} \multimap \var{bread}\} & \{\var{flour}, \var{flour} \multimap \var{bread}\}:\gamma_a\\
           \cellcolor[gray]{\transparency}{$\{\var{flour}\}$} & \{\var{flour}\} & \{\var{flour}, \var{flour}\} & \{\var{flour}, \var{flour} \multimap \var{bread}\}:\gamma_a & \{\var{flour}, \var{flour}, \var{flour} \multimap \var{bread}\}:\gamma_a\\
           \cellcolor[gray]{\transparency}{$\{\var{shop}\}$} & \{\var{shop}\}:\gamma_b & \{ \var{flour}, \var{shop}\}:\gamma_b & \{ \var{flour} \multimap \var{bread}, \var{shop}\}:\gamma_b & \{\var{flour}, \var{flour} \multimap \var{bread}, \var{shop}\}\dipaNE :\gamma_b,\gamma_a\\
           \cellcolor[gray]{\transparency}{$\{\var{flour}, \var{shop}\}$} & \{\var{flour}, \var{shop}\}:\gamma_b & \{ \var{flour}, \var{flour}, \var{shop}\}:\gamma_b & \{\var{flour}, \var{flour} \multimap \var{bread},  \var{shop}\}:\gamma_b,\gamma_a & \{\var{flour}, \var{flour}, \var{flour} \multimap \var{bread}, \var{shop}\}:\gamma_b,\gamma_a\\\hline
        \end{tabular}
\egroup
}
\end{center}
\caption{\label{fig:Gab-epsilonprime} The game $G^{\epsilon'}_{\mathit{ab}}$. The profile $(\{\var{shop}\}, \{\var{flour}, \var{flour} \multimap \var{bread}\}) \in  \ch_b \times \ch_a$ is the only Nash equilibrium in presence of parsimonious preferences.}
\end{figure}
In $G^{\epsilon'}_{\mathit{ab}}$, the profile $(\emptyset, \emptyset)$ is not a Nash equilibrium, and so has been eliminated from $G^{\epsilon}_{\mathit{ab}}$. Indeed, it does not satisfy Bernard, and he has an incentive to deviate to the profile $(\{\var{shop}\}, \emptyset) \in  \ch_b \times \ch_a$ in which his objective is satisfied. But $(\{\var{shop}\}, \emptyset)$ is not a Nash equilibrium either. Indeed, it does not satisfy Ann, and she has an incentive to deviate to the profile $(\{\var{shop}\}, \{\var{flour}, \var{flour} \multimap \var{bread}\}) \in  \ch_b \times \ch_a$. From here, nobody has an incentive to deviate, and it is a Nash equilibrium. It is in fact the only Nash equilibrium in $G^{\epsilon'}_{\mathit{ab}}$.

One can readily see that the profile $(\{\var{flour}, \var{shop}\}, \{\var{flour}, \var{flour} \multimap \var{bread}\}) \in  \ch_b \times \ch_a$, even though it satisfies both Ann and Bernard, is not a Nash equilibrium. Bernard has an incentive to provide less resources. The same can be said about the profile $(\{\var{flour}, \var{shop}\}, \{\var{flour} \multimap \var{bread}\}) \in  \ch_b \times \ch_a$.

\subsection{An interconnected economy}
\label{sec:inter-econ}

We present a three-player variant of Example~\ref{example:telecom1}.
The setting, which we remind briefly, is analogous.
In a local telecom industry, three companies must by regulation accept traffic from each other's customers. Moreover, Activating a network at some capacity has a cost, and companies can privately activate and deactivate networks on the fly.

Company~$A$ manages a 3G network of comprised capacity~$3$ (bundled as capacities~$1$, and $2$). 
Company~$B$ manages a 4G network of capacity~$3$ (bundled as capacities~$1$, and $2$).
Company~$A$ need to offer their customers 3G at capacity~$2$ and 4G at capacity~$1$. Company~$B$ need to offer their customers 3G at capacity $2$ and 4G at capacity~$2$.

A new company, Company~$C$ is entering in this interconnected economy. It has some capital, say, two token of an arbitrary unit; one token being fair price for a mobile network antenna. However, Company~$C$ does not manage any network. Company~$C$ needs to offer their customers 3G at capacity $1$ and 4G at capacity~$1$.

Again, we will only consider parsimonious preferences and assume that \logica is MULT. Since we are using this modest fragment, we trust that formal proofs would be more than superfluous and will be omitted.

We introduce the resources involved in the scenario:
\begin{itemize}
\item the resource of one capacity of 3G network: $\var{3G}$
\item the resource of one capacity of 4G network: $\var{4G}$
\item the resource of one token of capital: $\var{cap}$
\end{itemize}

\paragraph{Game definition.}
Let $G^\epsilon_{\mathit{ie}} = (\{a,b,c\},\gamma_a, \gamma_b, \gamma_c, \epsilon_a, \epsilon_b, \epsilon_c)$ be the individual resource game with three players, Company~$A$, $B$, and $C$ being represented by $a$, $b$, and $c$, respectively .

In the game $G^\epsilon_{\mathit{ie}}$, we have $\epsilon_a = \{ \var{3G}, \var{3G} \otimes \var{3G}\}$. $\epsilon_b = \{ \var{4G}, \var{4G} \otimes \var{4G}\}$, and $\epsilon_c = \{\var{cap}, \var{cap}\}$ for endowments.
The objectives are as follows: $\gamma_a = \var{3G} \otimes\var{3G} \otimes \var{4G}$, $\gamma_b =  \var{3G} \otimes \var{3G} \otimes \var{4G} \otimes \var{4G}$, and $\gamma_c = \var{3G} \otimes \var{4G}$.

\paragraph{Two equilibria.}
%


The game $G^{\epsilon}_{\mathit{ie}}$ and the realized objectives can be depicted as on Figure~\ref{fig:Gie-epsilon}.
Company~$B$, Player~$b$, plays rows, Company~$A$, Player~$a$, plays column. For simplicity we do not represent all Company~$C$'s choices because they do not bear on the players' objectives. We only represent Player~$c$'s choice $\emptyset$. With other choices different from $\emptyset$, the realized objectives are exactly the same. Assuming parsimonious preferences, no profile where Company~$C$'s action is different from $\emptyset$ is a Nash equilibrium.

There are two Nash equilibria in the IRG $G^{\epsilon}_{\mathit{ie}}$, namely, $(\emptyset, \emptyset, \emptyset)$ and  $(\{\var{3G} \otimes \var{3G}\}, \{\var{4G} \otimes \var{4G}\}, \emptyset)$.
In the latter, all agents realize their objective. In the former, none of them do.

\begin{figure}[ht]
\begin{center}
\resizebox{\textwidth}{!}{
\bgroup
\def\arraystretch{1.5}
        \begin{tabular}{|c|*{4}{>{$}c<{$}}|}
\hline
\cellcolor[gray]{\transparency}{$c$} & \multicolumn{4}{c|}{\cellcolor[gray]{\transparency}{$\emptyset$}}\\
\hline
           \cellcolor[gray]{\transparency}{\diagbox[linewidth=0.2pt, width=\dimexpr \textwidth/7+2\tabcolsep\relax, height=0.8cm]{$b$}{$a$}}
   & \cellcolor[gray]{\transparency}{\emptyset} & \cellcolor[gray]{\transparency}{\{\var{3G}\}} & \cellcolor[gray]{\transparency}{\{\var{3G} \otimes \var{3G}\}} & \cellcolor[gray]{\transparency}{\{\var{3G}, \var{3G} \otimes \var{3G}\}}\\
           \hline
           \cellcolor[gray]{\transparency}{$\emptyset$} & \emptyset  \dipaNE & \{\var{3G}\} & \{\var{3G} \otimes \var{3G}\} & \{\var{3G}, \var{3G} \otimes \var{3G}\}\\
           \cellcolor[gray]{\transparency}{$\{\var{4G}\}$} & \{\var{4G}\} & \{\var{4G}, \var{3G}\}:\gamma_c & \{\var{4G}, \var{3G} \otimes \var{3G}\}:\gamma_a,\gamma_c & \{\var{4G}, \var{3G} , \var{3G} \otimes \var{3G}\}:\gamma_a, \gamma_c \\
           \cellcolor[gray]{\transparency}{$\{\var{4G} \otimes \var{4G}\}$} & \{\var{4G} \otimes \var{4G}\} & \{\var{4G} \otimes \var{4G}, \var{3G}\}:\gamma_c & \{\var{4G} \otimes \var{4G}, \var{3G} \otimes \var{3G}\}:\gamma_a,\gamma_b,\gamma_c \dipaNE & \{\var{4G} \otimes \var{4G}, \var{3G} , \var{3G} \otimes \var{3G}\}:\gamma_a, \gamma_b,\gamma_c \\
           \cellcolor[gray]{\transparency}{$\{\var{4G}, \var{4G} \otimes \var{4G}\}$} & \{\var{4G}, \var{4G} \otimes \var{4G}\} & \{\var{4G}, \var{4G} \otimes \var{4G}, \var{3G}\}:\gamma_c & \{\var{4G}, \var{4G} \otimes \var{4G}, \var{3G} \otimes \var{3G}\}:\gamma_a,\gamma_b,\gamma_c & \{\var{4G}, \var{4G} \otimes \var{4G}, \var{3G} , \var{3G} \otimes \var{3G}\}:\gamma_a, \gamma_b,\gamma_c \\
           \hline
        \end{tabular}
\egroup
}
\end{center}

  \caption{\label{fig:Gie-epsilon} Partial representation of the game $G^{\epsilon}_{\mathit{ie}}$.}
\end{figure}

\paragraph{Eliminating the bad equilibrium.}

In the IRG $G^{\epsilon}_{\mathit{ie}}$, the profile $(\emptyset, \emptyset, \emptyset)$ is an arguably undesirable equilibrium.
An arbitrator could however advise the three companies to redistribute their endowments to eliminate $(\emptyset, \emptyset, \emptyset)$.
The arbitrator could propose the redistribution $\epsilon'$ of $\epsilon$, where
$\epsilon'_a = \{ \var{3G} \otimes \var{3G}, \var{cap}\}$, $\epsilon'_b = \{ \var{4G} \otimes \var{4G}, \var{cap}\}$, and $\epsilon'_c = \{\var{3G}, \var{4G}\}$.


The game $G^{\epsilon'}_{\mathit{ie}}$ and the realized objectives can be depicted as on Figure~\ref{fig:Gie-epsilonprime}, when Player~$c$'s choice is $\emptyset$. The choices containing the resource $\var{cap}$ are not represented. The resource $\var{cap}$ has no bearing on the player's objectives, and no profile containing it is a Nash equilibrium.

\begin{figure}[ht]
\begin{center}
\bgroup
\def\arraystretch{1.5}
        \begin{tabular}{|c|*{3}{>{$}c<{$}}|}
\hline
\cellcolor[gray]{\transparency}{$c$} & \multicolumn{3}{c|}{\cellcolor[gray]{\transparency}{$\emptyset$}}\\
\hline
           \cellcolor[gray]{\transparency}{\diagbox[linewidth=0.2pt, width=\dimexpr \textwidth/7+2\tabcolsep\relax, height=0.8cm]{$b$}{$a$}}
           & \cellcolor[gray]{\transparency}{\emptyset} & \cellcolor[gray]{\transparency}{\{\var{3G} \otimes \var{3G} \}} & \cellcolor[gray]{\transparency}{\cdots} \\
           \hline

           \cellcolor[gray]{\transparency}{$\emptyset$} & \emptyset & \{\var{3G} \otimes \var{3G}\} & \cdots \\
           \cellcolor[gray]{\transparency}{$\{\var{4G} \otimes \var{4G}\}$} & \{\var{4G} \otimes \var{4G}\} & \{\var{4G} \otimes \var{4G}, \var{3G} \otimes \var{3G}\}:\gamma_a,\gamma_b,\gamma_c \dipaNE & \cdots \\
           \cellcolor[gray]{\transparency}{$\vdots$} & \cdots & \cdots & \ddots \\
           \hline
        \end{tabular}
\egroup
\end{center}

  \caption{\label{fig:Gie-epsilonprime} Partial representation of the game $G^{\epsilon'}_{\mathit{ie}}$.}
\end{figure}

After the redistribution, Company~$C$ manages a 3G and a 4G network, both at capacity~$1$. Activating both of them would be enough to satisfy Company~$C$'s objective. In $G^{\epsilon'}_{\mathit{ie}}$, Player~$c$ thus has an incentive to deviate from 
$(\emptyset, \emptyset, \emptyset)$. 
Hence, the arbitrator's advice permits the elimination of the bad equilibrium:
$(\emptyset, \emptyset, \emptyset)$ is not a Nash equilibrium
in $G^{\epsilon'}_{\mathit{ie}}$.

In the profile $(\emptyset, \emptyset, \{\var{3G}, \var{4G}\})$, Player~$1$ has an incentive to deviate and play $\{\var{3G} \otimes \var{3G}\}$, in order to realize its objective.

In turn, in the profile $(\{\var{3G} \otimes \var{3G}\}, \emptyset, \{\var{3G}, \var{4G}\})$, Player~$2$ has an incentive to deviate
and play $\{\var{4G} \otimes \var{4G}\}$ to realize its objective. (Player~$3$, by parsimony, has also an incentive to withdraw the resource $\var{3G}$.)

In the profile $(\{\var{3G} \otimes \var{3G}\}, \{\var{4G} \otimes \var{4G}\}, \{\var{3G}, \var{4G}\})$, by parsimony, Player~$3$ has an incentive to deviate to the choice $\emptyset$.

Every player is satisfied in $(\{\var{3G} \otimes \var{3G}\}, \{\var{4G} \otimes \var{4G}\}, \emptyset)$, and none of them have an incentive to withdraw any resources. Hence, the good equilibrium of $G^{\epsilon}_{\mathit{ie}}$,
$(\{\var{3G} \otimes \var{3G}\}, \{\var{4G} \otimes \var{4G}\}, \emptyset)$, is still a Nash equilibrium in $G^{\epsilon'}_{\mathit{ie}}$. In addition, this is the unique Nash equilibrium in $G^{\epsilon'}_{\mathit{ie}}$.

\section{Conclusions}\label{sec:conclusions}

We presented a class of games of resources that exploits the formalisms and reasoning methods for resource-sensitive logics.
The language of Linear Logic allows us to represent in an harmonious way simultaneous resources, deterministic and non-deterministic choice, and resource-transforming capacities.

In individual resource games, each player of a game is endowed with a multiset of resources and has an objective represented by a resource. In this context, we studied three decision problems, the first of which is to decide whether a profile is a Nash equilibrium.
Some profiles that are not equilibria can have desirable outcomes from the point of view of an external authority. Some equilibria can have outcomes that are undesirable.
We thus studied redistribution schemes which can be used by a central authority to enforce some behavior, either by disincentivizing a behavior or incentivizing a behavior. This yielded two related decision problems: rational elimination and rational construction of profiles.
We illustrated the models and the decision problems with two examples.

We considered dichotomous or parsimonious preferences, and showed striking algorithmic differences when the logic employed admits or not the weakening rule. 

\paragraph{Summary of the complexity results.}
For all decision problems, for both types of preferences, we have
studied six cases where \logica can have
the following properties along two dimensions: (1)~affine vs.\ linear, and
(2)~in $\mathsf{PTIME}$ vs.\ in $\mathsf{NP}$ vs.\
in $\mathsf{PSPACE}$.

When \logica is $\mathsf{NP}$-complete, we sum up precisely the
results in Table~\ref{tab:compl-summary-NP}.
\begin{table}
\begin{center}
\bgroup
\def\arraystretch{1.2}
\begin{tabular}{l@{\quad} c @{\quad}l@{\quad} @{\quad}l@{\quad}}\toprule
&
& {\sf linear} & {\sf affine} \\\midrule
\multirow{6}{*}{\rotatebox[origin=c]{90}{\sf dichotomous}} & 
\multirow{2}{*}{\iNE} &
  $\mathsf{NP}$-hard (Prop.~\ref{prop:iNE-hardness})
    &
  $\mathsf{NP}$-hard (Prop.~\ref{prop:iNE-hardness})\\

& & in 
$\mathsf{\Pi_2^p}$ (Prop.~\ref{prop:NP-Pi2P}) & in $\mathsf{P^{NP||}}$ 
  (Prop.~\ref{prop:ine-dicho-weakening-easy})\\\cmidrule{2-4}

 & \multirow{2}{*}{{\sf RE}} & $\mathsf{coNP}$-hard (Prop.~\ref{prop:hard-rat-elim-dicho}) & $\mathsf{coNP}$-hard (Prop.~\ref{prop:hard-rat-elim-dicho})\\

& & in 
$\mathsf{\Sigma_2^p}$
(Prop.~\ref{prop:easyness-rat-elim-linear}) & in $\mathsf{P^{NP||}}$ 
(Prop.~\ref{prop:easyness-rat-elim-affine})\\\cmidrule{2-4}

 & \multirow{2}{*}{{\sf RC}} &  $\mathsf{NP}$-hard (Prop.~\ref{prop:hard-constr-dicho}) & $\mathsf{NP}$-hard (Prop.~\ref{prop:hard-constr-dicho})\\
& & in $\mathsf{\Sigma_3^p}$ (Prop.~\ref{prop:rc-dicho-linear}) & in $\mathsf{\Sigma_2^p}$ (Prop.~\ref{prop:rc-dicho-affine}) \\

\midrule

\multirow{6}{*}{\rotatebox[origin=c]{90}{\sf parsimonious}} & 
\multirow{2}{*}{\iNE} & $\mathsf{coNP}$-hard (Prop.~\ref{prop:iNE-hardness-parsimonious})
 & $\mathsf{coNP}$-hard (Prop.~\ref{prop:iNE-hardness-parsimonious})\\
&  & in $\mathsf{\Pi_2^p}$ (Prop.~\ref{prop:ine-easy-NP-pars-linear}) & 
in $\mathsf{P^{NP||}}$ 
(Prop.~\ref{prop:ine-nondicho-NP})\\\cmidrule{2-4}

 & \multirow{2}{*}{{\sf RE}}  & $\mathsf{coNP}$-hard (Prop.~\ref{prop:hardness-elimination-parsimonious}) & $\mathsf{coNP}$-hard (Prop.~\ref{prop:hardness-elimination-parsimonious})\\

& &  in $\mathsf{\Sigma_2^p}$ (Prop.~\ref{prop:easyness-rat-elim-linear-pars}) &  in $\mathsf{P^{NP||}}$ 
(Prop~\ref{prop:easyness-rat-elim-affine-pars})\\\cmidrule{2-4}

 & \multirow{2}{*}{{\sf RC}} & $\mathsf{coNP}$-hard (Prop.~\ref{prop:RC-parsim-hard}) & $\mathsf{coNP}$-hard (Prop.~\ref{prop:RC-parsim-hard})\\
& & in $\mathsf{\Sigma_3^p}$ (Prop.~\ref{prop:rc-parsi-linear}) & in $\mathsf{\Sigma_2^p}$ (Prop.~\ref{prop:rc-parsi-affine}) \\
\bottomrule
\end{tabular}
\egroup
\end{center}
\caption{\label{tab:compl-summary-NP} Complexity results when the problem of provability in \logica is in $\mathsf{NP}$.}
\end{table}
For instance, one can quickly gather that when \logica is Affine MLL (whose sequent provability is 
$\mathsf{NP}$-complete) and we consider parsimonious preferences, {\sf RATIONAL ELIMINATION} is in $\mathsf{P^{NP||}}$. We proved the same problem to be in $\mathsf{\Sigma_2^p}$ when \logica is Linear MLL.
When \logica is in $\mathsf{PTIME}$, we sum up precisely the
results in Table~\ref{tab:compl-summary-PTIME}.
\begin{table}
\begin{center}
\bgroup
\def\arraystretch{1.4}
\begin{tabular}{l@{\quad} c @{\quad}l@{\quad} @{\quad}l@{\quad}}\toprule
& & {\sf linear} & {\sf affine} \\\midrule
\multirow{3}{*}{\rotatebox[origin=c]{90}{\sf dichotomous}} & \iNE
& in $\mathsf{coNP}$ (Prop.~\ref{prop:NP-Pi2P}) & in $\mathsf{PTIME}$
  (Prop.~\ref{prop:ine-dicho-weakening-easy})\\\cmidrule{2-4}
& {\sf RE} & in $\mathsf{NP}$
(Prop.~\ref{prop:easyness-rat-elim-linear}) & in $\mathsf{PTIME}$
(Prop.~\ref{prop:easyness-rat-elim-affine})\\\cmidrule{2-4}
& {\sf RC} & in $\mathsf{\Sigma_2^p}$ (Prop.~\ref{prop:rc-dicho-linear}) & in $\mathsf{NP}$ (Prop.~\ref{prop:rc-dicho-affine}) \\

\midrule

\multirow{3}{*}{\rotatebox[origin=c]{90}{\sf parsimonious}} & \iNE
& in $\mathsf{coNP}$ (Prop.~\ref{prop:ine-easy-NP-pars-linear}) & 
in $\mathsf{PTIME}$
(Prop.~\ref{prop:ine-nondicho-NP})\\\cmidrule{2-4}
& {\sf RE} &  in $\mathsf{NP}$ (Prop.~\ref{prop:easyness-rat-elim-linear-pars}) &  in $\mathsf{PTIME}$
(Prop~\ref{prop:easyness-rat-elim-affine-pars})\\\cmidrule{2-4}
& {\sf RC} & in $\mathsf{\Sigma_2^p}$ (Prop.~\ref{prop:rc-parsi-linear}) & in $\mathsf{NP}$ (Prop.~\ref{prop:rc-parsi-affine}) \\
\bottomrule
\end{tabular}
\egroup
\end{center}
\caption{\label{tab:compl-summary-PTIME} Complexity results when the problem of provability in \logica is in $\mathsf{PTIME}$.}
\end{table}
We thus obtained some positive results when the resources are expressed in the fragment MULT, which is suitable to represent and reason about multisets of resources.
\begin{theorem}
  When \logica is Affine MULT, with dichotomous or parsimonious preferences, the problems {\sf NASH EQUILIBRIUM} and {\sf RATIONAL ELIMINATION} can be solved in polynomial time.
\end{theorem}

It is interesting to note that, although weakening  usually does not change the complexity of the problem of sequent provability of the logics we considered,\footnote{We did not consider \emph{full} propositional Linear Logic, which also contains so-called `exponentials'. Weakening does make a difference: sequent provability in full propositional Linear Logic is undecidable~\cite{LMSS93}, while sequent provability in full propositional Affine Logic is decidable~\cite{KOPYLOV2001173}.} we have always been able to capitalize on its presence to simplify our solutions to the problems we studied here.

Putting the results of this paper together, it is also easy to see that we have this theorem.
\begin{theorem}
When \logica is MALL, linear or affine, with
dichotomous or with parsimonious preferences, all three decision
problems are $\mathsf{PSPACE}$-complete.
\end{theorem}
First-Order MLL is one of these logics whose complexity of sequent provability is in $\mathsf{NP}$. On the other hand, sequent provability for First-Order MALL is $\mathsf{NEXPTIME}$-complete. It is routine to adapt our proofs to show this theorem.
\begin{theorem}
When \logica is First-Order MALL, linear or affine, with
dichotomous or with parsimonious preferences, all three decision
problems are $\mathsf{NEXPTIME}$-complete.
\end{theorem}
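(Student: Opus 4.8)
The plan is to replay the argument behind the preceding $\mathsf{PSPACE}$-complete theorem, only with the base problem of sequent validity in \logica promoted from $\mathsf{PSPACE}$ to $\mathsf{NEXPTIME}$. First I would record the two facts we start from: when \logica is First-Order MALL (linear or affine) sequent validity is $\mathsf{NEXPTIME}$-complete, and hence sequent \emph{invalidity} is $\mathsf{coNEXPTIME}$-complete (see~\cite{LMSS93,K94}). Everything else is obtained by feeding these two facts into the lower-bound reductions and the upper-bound algorithms already proved in the $\mathsf{NP}$ and $\mathsf{PSPACE}$ settings, for each of the twelve combinations of linear/affine, dichotomous/parsimonious, and $\iNE$/{\sf RE}/{\sf RC}.

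For the lower bounds I would reuse the six hardness propositions verbatim. Each is a polynomial-time many-one reduction whose target instance is built from an arbitrary (intuitionistic) sequent and whose correctness does not depend on whether \logica admits weakening; hence each applies unchanged to First-Order MALL. Propositions~\ref{prop:iNE-hardness} and~\ref{prop:hard-constr-dicho} reduce \emph{from} sequent validity, so $\iNE$ and {\sf RC} with dichotomous preferences are $\mathsf{NEXPTIME}$-hard. Propositions~\ref{prop:hard-rat-elim-dicho}, \ref{prop:iNE-hardness-parsimonious}, \ref{prop:hardness-elimination-parsimonious} and~\ref{prop:RC-parsim-hard} reduce from sequent \emph{invalidity}, so the remaining cases are $\mathsf{coNEXPTIME}$-hard; as with the $\mathsf{PSPACE}$ theorem one regards this as hardness at the exponential level. (Proposition~\ref{prop:NE-dicho-parsi-empty} is what lets the dichotomous reductions double as the parsimonious ones.)

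For the upper bounds I would re-run the same algorithms, now using an oracle for First-Order MALL validity in place of the $\mathsf{NP}$/$\mathsf{PSPACE}$ oracle. Concretely, the preference tests of Propositions~\ref{prop:compl-dicho-pref} and~\ref{prop:pars-pref-compl} become a constant number of non-adaptive $\mathsf{NEXPTIME}$ and $\mathsf{coNEXPTIME}$ queries, and each of the twelve algorithms wraps a \emph{fixed} number of polynomial $\exists/\forall$ alternations around polynomially many such tests (the deepest being the $\Sigma_3^p$-style analyses of {\sf RC} in the linear case, the shallowest the $\mathsf{P}^{\mathsf{NP}||}$-style analyses in the affine case). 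Relativising these bounded-alternation computations to a $\mathsf{NEXPTIME}$ oracle, every case lands in some class $\mathcal{C}^{\mathsf{NEXPTIME}}$ with $\mathcal{C}$ at a fixed level of the polynomial hierarchy. Invoking the collapse of the strong exponential hierarchy---so that $\mathsf{NP}^{\mathsf{NEXPTIME}}=\mathsf{coNP}^{\mathsf{NEXPTIME}}=\Sigma_k^{p,\mathsf{NEXPTIME}}$ all coincide---these classes collapse, and one then argues they fall back into $\mathsf{NEXPTIME}$. Together with the lower bounds this yields $\mathsf{NEXPTIME}$-completeness in all twelve cases.

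The main obstacle is precisely this last collapse, and it is where the word ``routine'' hides the real content. Unlike $\mathsf{PSPACE}$, which absorbs its whole relativised hierarchy for free because $\mathsf{PSPACE}=\mathsf{coPSPACE}=\mathsf{P}^{\mathsf{PSPACE}}$, the class $\mathsf{NEXPTIME}$ is not known to be closed under complement, and the invalidity checks built into every $\prec_i$ test genuinely appeal to $\mathsf{coNEXPTIME}$. A mechanical oracle-machine analysis therefore only places the problems in $\mathsf{P}^{\mathsf{NEXPTIME}}$, a class that contains $\mathsf{coNEXPTIME}$ and is not known to equal $\mathsf{NEXPTIME}$. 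The step that needs genuine care is thus the argument that, because only polynomially many queries are ever made to an exponential-time oracle and the surrounding alternation depth is constant, the entire computation can be folded into a single nondeterministic exponential-time run; isolating and verifying this collapse lemma is the one non-clerical part of adapting the proofs.
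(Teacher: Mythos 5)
There is a genuine gap, and you have in fact put your finger on it yourself without drawing the right conclusion. Your final ``collapse lemma''---that the $\mathsf{P^{NEXPTIME}}$-style computations can be folded back into a single nondeterministic exponential-time run---is not a delicate step requiring care; it is equivalent to $\mathsf{NEXPTIME} = \mathsf{coNEXPTIME}$. Hemachandra's collapse of the strong exponential hierarchy gives exactly $\mathsf{NP^{NEXPTIME}} = \mathsf{P^{NEXPTIME}}$ and no more: since $\mathsf{coNEXPTIME} \subseteq \mathsf{P^{NEXPTIME}}$, having $\mathsf{P^{NEXPTIME}} = \mathsf{NEXPTIME}$ would force $\mathsf{coNEXPTIME} \subseteq \mathsf{NEXPTIME}$ and hence equality of the two classes, a longstanding open problem believed false. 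So your upper-bound analysis correctly lands everything in $\mathsf{P^{NEXPTIME}}$ and then cannot proceed. The same defect infects your hardness side: the remark that $\mathsf{coNEXPTIME}$-hardness can be ``regarded as hardness at the exponential level'' is only legitimate for complement-closed classes such as $\mathsf{PSPACE}$. For the cases that reduce from \emph{invalidity} (Prop.~\ref{prop:hard-rat-elim-dicho}, Prop.~\ref{prop:iNE-hardness-parsimonious}, Prop.~\ref{prop:hardness-elimination-parsimonious}, Prop.~\ref{prop:RC-parsim-hard}), the problems are $\mathsf{coNEXPTIME}$-hard under many-one reductions; if any such problem were additionally \emph{in} $\mathsf{NEXPTIME}$, closure of $\mathsf{NEXPTIME}$ under polynomial many-one reductions would again yield $\mathsf{NEXPTIME} = \mathsf{coNEXPTIME}$. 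So for those cases membership and hardness cannot both hold as claimed (short of the collapse), and for the remaining cases ({\sf NE} and {\sf RC} with dichotomous preferences) membership in $\mathsf{NEXPTIME}$ is simply not established by any argument available in the paper or in your proposal.

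For comparison: the paper offers no proof at all here---it asserts that the adaptation is ``routine,'' which is precisely the claim your attempt refutes. The $\mathsf{PSPACE}$ theorem goes through only because $\mathsf{PSPACE} = \mathsf{coPSPACE} = \mathsf{P^{PSPACE}}$ absorbs both the invalidity queries inside every $\prec_i$ test and the bounded quantifier alternations of the algorithms; none of these closure properties is available for $\mathsf{NEXPTIME}$. What your analysis actually supports is a weaker, repaired statement: all twelve problems lie in $\mathsf{P^{NEXPTIME}}$ ($= \mathsf{NP^{NEXPTIME}}$), each is $\mathsf{NEXPTIME}$-hard or $\mathsf{coNEXPTIME}$-hard according to the corresponding hardness proposition, and all are polynomial-time Turing-equivalent to First-Order MALL validity. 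Proving the theorem as literally stated would require either $\mathsf{NEXPTIME} = \mathsf{coNEXPTIME}$ or a fundamentally different algorithmic idea avoiding invalidity queries altogether; your write-up should say this rather than defer the issue to a lemma that cannot be proved.
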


\paragraph{Comparison with the related literature.}

The research in artificial intelligence, multiagent systems, and computer science has shown some interest in the formal and computational aspects of resource-conscious agents (e.g., \cite{DBLP:conf/atal/HarlandW02, Wooldridge2006835,DBLP:conf/kr/PorelloE10,Dunne201020,harrenstein2015aamas,DBLP:journals/iandc/VelnerC0HRR15,DBLP:journals/jancl/PorelloT15,DBLP:journals/ai/AlechinaBLN17,Tr18aaai}).

Boolean games~\cite{Harrenstein:2001:BG:1028128.1028159,DBLP:conf/ecai/BonzonLLZ06} are games based on classical logic.
Each player controls a set of Boolean variables and produces truth values which can be used without restriction towards the Boolean goals, expressed as classical propositional formula. 
Somehow, also in Boolean games do the players produce and consume `resources'. But there are no immediate natural correspondences between IRGs and Boolean games. As in Boolean games, we could force the endowments to be non-overlapping (for exclusive control over a resource). 
Moreover, we could allow the players in our games to have preferences about the absence of a resource. 
Under these conditions, and using classical propositional logic as \logica, a connection would then exist.

Electric Boolean games~\cite{harrenstein2015aamas} are an extension of Boolean games where playing a certain action has a numeric cost, and agents are endowed with a certain amount of `energy'.
Deciding whether a profile is a Nash equilibrium in a Boolean game is $\mathsf{coNP}$-complete~\cite{DBLP:conf/ecai/BonzonLLZ06}. In electric Boolean games, deciding whether a profile is rationally eliminable is $\mathsf{NP}$-complete, while deciding whether a profile is rationally constructible is $\mathsf{coNP}$-hard and in $\mathsf{\Delta_2^p}$.

In Boolean games, goals of players are expressed as classical
propositional formulas. Moreover, game outcomes or profiles are in
fact models of classical propositional logic, i.e.,
valuations. Checking whether the goal of a player is satisfied in a
game profile is thus an easy problem in Boolean games. This is also true in
electric Boolean games. In contrast in resource games, checking
whether the goal of a player is satisfied in a game profile is as hard
as provability in \logica.

Unsurprisingly, when working with the fragments MLL or MALL, the trend is that the complexity of decision problems in individual
resource games is higher than for their counterparts in electric Boolean
games. An obvious exception is the problem to decide whether an individual
resource game admits a Nash equilibrium when \logica is affine and we
consider dichotomous preferences. The problem is trivial by Proposition~\ref{prop:nonempty} (there is always a Nash equilibrium), while it is $\mathsf{\Sigma_2^p}$-complete in Boolean games~\cite{DBLP:conf/ecai/BonzonLLZ06}.

Moreover, in individual resource games, there is no one-to-one correspondence between profiles and outcomes. This is another difference with electric
Boolean game. As a consequence, the notions of elimination and
construction in individual resource games add a bit of complexity by having
to consider a set of profiles with the same outcomes.

On the other hand, the fragment  MULT is one instance of \logica in which it is easy to check whether a goal of a player is satisfied in a game profile (Proposition~\ref{prop:mult-easy}). In this context, and as shown on Table~\ref{tab:compl-summary-PTIME} and compared to the realm of Boolean games, reasoning about IRGs remains a relatively easy task. It can even be tractable if one considers Affine MULT. Affine logic should be used when we can assume that a player satisfied with an outcome would be satisfied with a more sizeable outcome, which is often a very acceptable assumption.

\medskip

Congestion games (CGs)~\cite{Rosenthal1973} (see also Potential Games~\cite{MONDERER1996124}; exact potential games correspond to CGs up to an isomorphism) are a celebrated class of games where the players interact in resource-sensitive environments.
Despite some apparent similarities between IRGs and CGs, they are rather superficial. Players in CGs do not have endowments \emph{per se}. Players' actions in CGs consist in choosing a subset of an already available common pool of resources to use.
In CGs, the players are only consumers.
In IRGs, players are consumers but also producers of resources; their actions consist in making resources available in the common pool.
%
In CGs, these resources are exclusively atomic resources while in IRGs they can be any logical formula in \logica.

\medskip

With the decision problems of rational elimination and
rational construction, there is a dimension of social choice theory
and mechanism design. Formal frameworks concerned with redistribution
schemes and economic policies can be found for instance in
\cite{harrenstein2015aamas} again, or
\cite{Endriss:2011:DIB:2030470.2030482,Levit:2013:TSB:2484920.2484952,naumovmodal}.

Our games bear some resemblance with combinatorial exchanges~\cite{DBLP:conf/atal/KothariSS04} and with mixed multi-unit combinatorial auctions (MMUCAs)~\cite{Cerquides:2007:BLW:1625275.1625473,Giovannucci2010}, where the agents can be both sellers and buyers.
Interestingly, in MMUCAs, sets of goods can be transformed into different sets of goods. Resource-transforming capacities are central, as the agents are allowed to bid on \emph{transformation services}.
Determining the sequences of bids to be accepted by an auctioneer is generally intractable in MMUCAs; \cite{FIONDA20131} identifies tractable classes for the winner determination problem.

Finally, we focused on individual games and looked at Nash equilibria. 
Nonetheless, the setting allows one to easily build classes of coalition
games, reminiscent of Coalitional Resource Games~\cite{Wooldridge2006835,Dunne201020} and of Coalition Skill Games~\cite{Bachrach:2013:CCS:2537189.2537704}.
In~\cite{Tr18aaai}, we have started the study of what we called Rich Coalitional Resource Games (RCRGs). Individual Resource Games are essentially one-goal RCRGs.

\paragraph{Perspectives.}
We have obtained tight complexity results when \logica is $\mathsf{PSPACE}$-complete. However, this is lacking when \logica is in $\mathsf{NP}$ and in $\mathsf{PTIME}$. We suspect that the complexity of the diverse decision problems generally lie above the lower bounds we have obtained. It is more likely that some proposed upper bounds are tight.
One perspective will thus be to investigate whether some decision problems could be proven hard for some complexity class in the polynomial or Boolean hierarchy, for instance using the techniques from~\cite{wagner87} of raising $\mathsf{NP}$ lower bounds to lower bounds for classes above $\mathsf{NP}$.

Resource games based on resource-sensitive logics become all the more significant when the resources are subject to transforming activities.
We can exploit the existing research on these resource-sensitive logics about their proof theory. In particular, through the Curry-Howard correspondence between proofs and programs (see, e.g., \cite{DBLP:journals/jsyml/GabbayQ92}), an exciting perspective is the possibility to interpret the logical proofs as rigorous programs to be executed by the players. We can expect to obtain some results for the automated  generation of plans, where the resources can be subjected to a series of transforming activities by the agents. Similar ideas have already been defended in multiagent systems (see, e.g., \cite{10.1007/978-3-540-30200-1_5}).

Our models are agnostic about how the contributed resources are distributed.
Instead of having preferences about a raw profile $P$, the player's preferences could be raised over the result of the (fair, envy-free, efficient, etc) allocation of the resources~\cite{bouveretChap12} contributed in $\out(P)$.
These are interesting extensions that are just a step away to get the models more fit for application, although at the expense of mathematical simplicity.

We are interested in using resource games in problems of gamification.
Gamification refers to the broad application of game-design techniques in contexts that do not otherwise present game-like features \cite{Deterding:2011:GUG:1979742.1979575,gamificationEduBusiness}.
Gamification aims at incentivizing an intended behavior by introducing rewards for specific tasks.
Rewards often present themselves as virtual resources such as achievement badges. Formally, they might be nothing more than distinguished tokens of resources.
In Example~\ref{example:telecom1}, we saw that the profile where all companies refrain from  providing any resources, $(\emptyset, \emptyset)$, is a Nash equilibrium. This is an undesirable behaviour that policy makers might be able to anticipate by using the analytical tools defined in this paper, and to avoid by using advanced gamification methods.

\section*{Acknowledgments}
I thank an anonymous reviewer for making a number of suggestions that greatly helped to improve the presentation.
I am grateful to Jamie Gabbay for his enthusiasm and his encouragements.

\newpage
\appendix
\section{Sequent rules of Affine MALL}
\label{sec:rules}

We present the sequent rules for Affine MALL.
In what follows,
$A$, $B$, $A_0$, and $A_1$ are formulas. $\Gamma$, $\Gamma'$, $\Delta$, and $\Delta'$ are sequences of zero or more formulas.
A sequent rule has an upper and a lower part. The upper part is composed of zero, one, or two sequents. The lower part is composed of one sequent. If there is a proof of all the sequents of the upper part, then the rule can be used to obtain a proof of the sequent of the lower part.

\bigskip

  \resizebox{0.85\textwidth}{!}{
    \begin{minipage}[c]{\linewidth}
    \textit{Identities}

\begin{center}
\begin{tabular}{cc}
& \\
\AxiomC{\phantom{PHA}} \RightLabel{ax} \UnaryInfC{$A \vdash A$} \DisplayProof  
& \AxiomC{$\Gamma, A \vdash \Delta$} \AxiomC{$\Gamma' \vdash A, \Delta'$} \RightLabel{cut}
\BinaryInfC{$\Gamma,
\Gamma' \vdash \Delta, \Delta'$} \DisplayProof
\\
 & \\
\end{tabular}
\end{center}

\textit{Structural Rules}

\begin{center}
\begin{tabular}[]{cc}
&  \\
\AxiomC{$\Gamma, A, B, \Gamma' \vdash \Delta$} \RightLabel{E} \UnaryInfC{$\Gamma, B, A, \Gamma' \vdash \Delta$}
\DisplayProof & \AxiomC{$\Gamma \vdash \Delta, A, B, \Delta'$} \RightLabel{E} \UnaryInfC{$\Gamma \vdash \Delta, B, A, \Delta'$}
\DisplayProof\\
&  \\

\AxiomC{$\Gamma \vdash \Delta$} \RightLabel{W} \UnaryInfC{$\Gamma, A \vdash \Delta$} \DisplayProof  & \AxiomC{$\Gamma \vdash \Delta$} \RightLabel{W} \UnaryInfC{$\Gamma  \vdash \Delta, A$} \DisplayProof\\
& \\
\end{tabular}
\end{center}

\textit{Negation}

\begin{center}
\begin{tabular}[]{cc}
& \\
\AxiomC{$\Gamma \vdash A, \Delta$} \RightLabel{$L\lineg$} \UnaryInfC{$\Gamma, \lineg A \vdash \Delta$}
\DisplayProof   &
\AxiomC{$\Gamma, A\vdash \Delta$} \RightLabel{$R\lineg$} \UnaryInfC{$\Gamma \vdash \lineg A, \Delta$}
\DisplayProof
\\
 & \\
\end{tabular}
\end{center}
  
\textit{Multiplicatives}

\begin{center}
\begin{tabular}{ll}
&  \\
\AxiomC{$\Gamma \vdash A, \Delta$} \AxiomC{$\Gamma' \vdash B, \Delta'$}
 \LeftLabel{$\otimes$R} \BinaryInfC{$\Gamma, \Gamma' \vdash A\otimes B, \Delta, \Delta'$} \DisplayProof  & 
 \AxiomC{$\Gamma, A, B \vdash \Delta$} \RightLabel{$\otimes$L} \UnaryInfC{$\Gamma, A \otimes B \vdash \Delta$}
\DisplayProof      \\
& \\
\AxiomC{$\Gamma, A \vdash \Delta$} \AxiomC{$\Gamma', B \vdash \Delta'$} \LeftLabel{$\parr$L}
\BinaryInfC{$\Gamma, \Gamma', A \parr B \vdash \Delta, \Delta'$} \DisplayProof  &

\AxiomC{$\Gamma \vdash A, B, \Delta$} \RightLabel{$\parr$R}
\UnaryInfC{$\Gamma \vdash A \parr B, \Delta$}   \DisplayProof \\
& \\

\AxiomC{$\Gamma \vdash A, \Delta$} \AxiomC{$\Gamma', B \vdash \Delta'$} \LeftLabel{$\implies$L}
\BinaryInfC{$\Gamma, \Gamma', A \implies B, \Delta \vdash  \Delta'$} \DisplayProof  &

\AxiomC{$\Gamma, A  \vdash B, \Delta$} \RightLabel{$\implies$R}
\UnaryInfC{$\Gamma \vdash A \implies B, \Delta$}   \DisplayProof \\
& \\
\end{tabular}
\end{center}

\begin{center}
\begin{tabular}{cccc}
\AxiomC{$\Gamma \vdash \Delta$}
\RightLabel{$\mathbf{1}$L}
\UnaryInfC{$\Gamma, \mathbf{1} \vdash \Delta$}\DisplayProof 
& 
\AxiomC{\phantom{PHA}}
\RightLabel{$\mathbf{1}$R}
\UnaryInfC{$\vdash \mathbf{1}$}\DisplayProof 

&
\AxiomC{\phantom{PHA}}
\RightLabel{$\bot$L}
\UnaryInfC{$\bot \vdash$}
\DisplayProof

&
\AxiomC{$\Gamma \vdash \Delta$}
\RightLabel{$\bot$R}
\UnaryInfC{$\Gamma \vdash \Delta, \bot$}
\DisplayProof
\\
& \\
\end{tabular}
\end{center}

\textit{Additives} (In $\oplus$R, and $\with$L, $i$ stands for either $0$ or $1$.)

\begin{center}
\begin{tabular}{cc}
&  \\
\AxiomC{$\Gamma \vdash A, \Delta$} \AxiomC{$\Gamma \vdash B, \Delta $} \LeftLabel{$\with$R}
\BinaryInfC{$\Gamma \vdash A \with B, \Delta$} \DisplayProof & \AxiomC{$\Gamma, A_{i} \vdash \Delta$} \RightLabel{$\with$L} \UnaryInfC{$\Gamma, A_{0} \with
A_{1} \vdash \Delta$} \DisplayProof\\
& \\

\AxiomC{$\Gamma, A \vdash \Delta$} \AxiomC{$\Gamma, B \vdash \Delta$} \LeftLabel{$\oplus$L}
\BinaryInfC{$\Gamma, A\oplus B \vdash \Delta$} \DisplayProof &  \AxiomC{$\Gamma \vdash A_{i}, \Delta$}
\RightLabel{$\oplus$R}
\UnaryInfC{$\Gamma \vdash A_{0} \oplus A_{1},\Delta$} \DisplayProof\\
& \\

\AxiomC{\phantom{PHA}}
\RightLabel{$\top$R}
\UnaryInfC{$\Gamma \vdash \top, \Delta$}\DisplayProof 

&
\AxiomC{\phantom{PHA}}
\RightLabel{$\mathbf{0}$L}
\UnaryInfC{$\Gamma, \mathbf{0}\vdash \Delta$}
\DisplayProof
\end{tabular}
\end{center}
\end{minipage}
} 

\normalsize

\newpage
\section{Elements of computational complexity}\label{sec:complexity}

We need to assume some familiarity with computational complexity.
This appendix only introduces some elements of terminology and some definitions about complexity theory. 
The reader familiar with these notions can use this section for quick reference.
Another reader can use it as a starting point and move to a more complete introduction.
A classic introduction to computational complexity is~\cite{papadimitriou}.
All elementary complexity classes used in this paper are presented in~\cite{rothe}. 

\medskip

A \emph{decision problem} (or problem for short) is a problem that is posed as `yes'/`no' question of the values of the input.

The class $\mathsf{PTIME}$, also noted $\mathsf{P}$, is the class of decision problems that can be solved in deterministic polynomial time (wrt.\ the size of the input). The class $\mathsf{NP}$ is the class of problems that can be solved in non-deterministic polynomial time. The class $\mathsf{PSPACE}$ is the class of problems that can be solved using a polynomial amount of space.
The \emph{complement} of a decision problem is the decision problem resulting from reversing the `yes' and `no' answers.
For every class of complexity $\mathsf{C}$, we denote $\mathsf{coC}$ the class populated with the complements of the problems in $\mathsf{C}$.
Given two classes of complexity $\mathsf{C_1}$ and $\mathsf{C_2}$, the class $\mathsf{C_1^{C_2}}$ is the class of problems that are in $\mathsf{C_1}$ if we assume the availability of an oracle to solve the problems in $\mathsf{C_2}$. An \emph{oracle} for $\mathsf{C_2}$ is a black box capable to solve every problem in $\mathsf{C_2}$ in a single operation.
Queries to an oracle can be \emph{adaptive} (also called serial), or \emph{non-adaptive} (also called parallel). A query is adaptive when it depends on the answer of a previous query.
Non-adaptive queries on the other hand, can be chosen in advance and computed from the start and are asked in parallel.

For every class of complexity $\mathsf{C}$, we denote $\mathsf{P^C}$ (resp.\ $\mathsf{NP^C}$) the class of problems solvable on a deterministic (resp.\ non-deterministic) polynomial-time bounded oracle Turing machine using an oracle set $\mathsf{C}$. We denote $\mathsf{P^{C[k]}}$ and $\mathsf{NP^{C[k]}}$ when at most $k$ adaptive queries to $\mathsf{C}$ can be used. We denote $\mathsf{P^{C||[k]}}$ and $\mathsf{NP^{C||[k]}}$ when at most $k$ non-adaptive queries to $\mathsf{C}$ can be used.

We denote $\mathsf{P^{C||}}$ (resp.\ $\mathsf{NP^{C||}}$) the class of problems solvable on a deterministic (resp.\ non-deterministic) polynomial-time bounded oracle Turing machine with non-adaptive queries to $\mathsf{C}$. The class $\mathsf{P^{NP||}}$ is also referred to as $\mathsf{\Theta_2^p}$.

\paragraph{The polynomial hierarchy.} The polynomial hierarchy contains a family of complexity classes that are smaller than $\mathsf{PSPACE}$. The class $\mathsf{P}$ lies at the bottom of the polynomial hierarchy. Then, for every positive integer~$i$, we can define $\mathsf{\Delta_{i}^p}$, $\mathsf{\Sigma_{i}^p}$, and $\mathsf{\Pi_{i}^p}$ recursively as follows:
\begin{itemize}
\item $\mathsf{\Delta_{0}^p} = \mathsf{\Sigma_{0}^p} = \mathsf{\Pi_{0}^p} = \mathsf{P}$;
\item $\mathsf{\Delta_{i+1}^p} = \mathsf{P^{\Sigma_{i}^p}}$;
\item $\mathsf{\Sigma_{i+1}^p} = \mathsf{NP^{\Sigma_{i}^p}}$;
\item $\mathsf{\Pi_{i}^p} = \mathsf{co\Sigma_{i}^p}$.
\end{itemize}

\paragraph{The Boolean hierarchy over $\mathsf{NP}$.} The Boolean hierarchy has been studied in~\cite{Wechsung85,kobler1987,wagner87}. 
The Boolean hierarchy over $\mathsf{NP}$ contains a family of complexity classes that are smaller than $\mathsf{\Delta_{2}^p}$. The class $\mathsf{NP}$ lies at the bottom of the Boolean hierarchy over $\mathsf{NP}$.
Here, we are better off looking at complexity classes not as classes of decision problems, but as classes of languages. A \emph{language} is the formal realization of a decision problem. Let $p$ be a decision problem with $k$ inputs. A language of $p$ is the language $L_p = \{ (a_1, \ldots, a_k) \mid p \text{ answers `yes' of the input } (a_1, \ldots, a_k) \}$. Given a class of complexity $\mathsf{C}$, we say that $L_p \in \mathsf{C}$ iff $p \in \mathsf{C}$.
Then, given two classes of complexity $\mathsf{C_1}$ and $\mathsf{C_2}$, each representing a set of languages and the decision problems they formalize, we define $\mathsf{C_1} \land \mathsf{C_2} = \{ L_1 \cap L_2 \mid L_1 \in  \mathsf{C_1} \text{ and } L_2 \in \mathsf{C_2} \}$ and $\mathsf{C_1} \lor \mathsf{C_2} = \{ L_1 \cup L_2 \mid L_1 \in \mathsf{C_1} \text{ and } L_2 \in \mathsf{C_2} \}$.
In this context, the class $\mathsf{NP}$ is the class of languages that can be recognised in non-deterministic polynomial time.
Then, for every positive integer~$i$, we can define $\mathsf{BH_i}$ recursively as follows:
\begin{itemize}
\item $\mathsf{BH_0} = \mathsf{NP}$;
\item $\mathsf{BH_{2k}} = \mathsf{coNP} \land \mathsf{BH_{2k-1}}$;
\item $\mathsf{BH_{2k+1}} = \mathsf{NP} \lor \mathsf{BH_{2k}}$.
\end{itemize}
The class $\mathsf{B\var{H_2}} = \mathsf{NP} \land \mathsf{coNP}$ is the ``difference class'' $\mathsf{D^P}$ presented in~\cite{PAPADIMITRIOU1984244}.

\paragraph{Useful properties.}
Besides the definitions, the following properties are useful:
\begin{itemize}
\item $\mathsf{C_1^{coC_2}} = \mathsf{C_1^{C_2}}$ (for all two classes $\mathsf{C_1}$ and $\mathsf{C_2}$); 
\item $\mathsf{NP^{\Sigma_{i}^p}} = \mathsf{\Sigma_{i+1}^p}$; 
\item $\mathsf{co{\Sigma_{i}^p}} = \mathsf{\Pi_{i}^p}$; 
\item $\mathsf{NP^{\Delta_{i}^p}} = \mathsf{\Sigma_{i}^p}$;
\item $\mathsf{P^{\Delta_{i}^p}} = \mathsf{\Delta_{i}^p}$;
\item $\mathsf{\Sigma_{i}^p} \subseteq \mathsf{PSPACE}$;
\item $\mathsf{PSPACE} = \mathsf{coPSPACE} = \mathsf{P^{PSPACE}} = \mathsf{NP^{PSPACE}}$;
\item $\mathsf{BH_i} \subseteq \mathsf{\Delta_{2}^p}$;
\item $\mathsf{P^{NP||[k]}} \subseteq \mathsf{BH_{k+1}} \subseteq \mathsf{P^{NP||[k+1]}}$.
\end{itemize}

\section{Proofs of the realized objectives in the Example of Section~\ref{sec:alan-fish}}
\label{sec:proofs-realized-obj}


The proof of $\var{H_2O},\var{H_2O},\var{elec} \vdash \gamma_f$ will be instrumental for the subsequent proofs. We label it Proof~$\star$ for reuse.
\[
\AxiomC{\phantom{PHA}}
\RightLabel{ax} 
\UnaryInfC{$\var{H_2O} \vdash \var{H_2O}$}
\AxiomC{\phantom{PHA}}
\RightLabel{ax} 
\UnaryInfC{$\var{H_2O} \vdash \var{H_2O}$}
\RightLabel{$\otimes$R}
\BinaryInfC{$\var{H_2O}, \var{H_2O} \vdash \var{H_2O} \otimes \var{H_2O}$} \RightLabel{$\otimes$R}
\AxiomC{\phantom{PHA}}
\RightLabel{ax} 
\UnaryInfC{$\var{O_2} \vdash \var{O_2}$}
\RightLabel{W} 
\UnaryInfC{$\var{O_2},\var{H_2} \otimes \var{H_2} \vdash \var{O_2}$}
\RightLabel{E} 
\UnaryInfC{$\var{H_2} \otimes \var{H_2}, \var{O_2} \vdash \var{O_2}$} 
\RightLabel{$\otimes$L} 
\UnaryInfC{$\var{H_2} \otimes \var{H_2} \otimes \var{O_2} \vdash \var{O_2}$} 
\RightLabel{$\multimap$L}
\BinaryInfC{$\var{H_2O}, \var{H_2O}, \var{H_2O} \otimes \var{H_2} O \multimap \var{H_2} \otimes \var{H_2} \otimes \var{O_2} \vdash \var{O_2}$} \RightLabel{definition} 
\UnaryInfC{$\var{H_2O},\var{H_2O},\var{elec} \vdash \gamma_f$} 
\RightLabel{Proof~$\star$}
\UnaryInfC{}
\DisplayProof  
\]
The other realized objectives of the fish are immediate using
Proof~$\star$ and the weakening rule.  We prove that $\var{H_2O},\var{H_2O},\var{H_2O},\var{elec}
\vdash \gamma_f$, $\var{drink},\var{H_2O},\var{H_2O},\var{elec} \vdash \gamma_f$, and
$\var{drink},\var{H_2O},\var{H_2O},\var{H_2O},\var{elec} \vdash \gamma_f$.

\begin{center}
\resizebox{\textwidth}{!}{
$
\begin{array}{ccc}

\AxiomC{$\vdots$} \RightLabel{Proof~$\star$} 
\UnaryInfC{$\var{H_2O},\var{H_2O},\var{elec} \vdash \gamma_f$}
\RightLabel{W} 
\UnaryInfC{$\var{H_2O},\var{H_2O},\var{elec},\var{H_2O} \vdash \gamma_f$} 
\RightLabel{E} 
\UnaryInfC{$\var{H_2O},\var{H_2O},\var{H_2O},\var{elec} \vdash \gamma_f$} 
\DisplayProof  

&

\AxiomC{$\vdots$} \RightLabel{Proof~$\star$} 
\UnaryInfC{$\var{H_2O},\var{H_2O},\var{elec} \vdash \gamma_f$}
\RightLabel{W}
\UnaryInfC{$\var{H_2O},\var{H_2O},\var{elec},\var{drink} \vdash \gamma_f$} 
\RightLabel{E*}
\UnaryInfC{$\var{drink},\var{H_2O},\var{H_2O},\var{elec} \vdash \gamma_f$} 
\DisplayProof  

&

\AxiomC{$\vdots$} \RightLabel{Proof~$\star$} 
\UnaryInfC{$\var{H_2O},\var{H_2O},\var{elec} \vdash \gamma_f$}
\RightLabel{W} 
\UnaryInfC{$\var{H_2O},\var{H_2O},\var{elec},\var{drink} \vdash \gamma_f$}
\RightLabel{E*} 
\UnaryInfC{$\var{drink},\var{H_2O},\var{H_2O},\var{elec} \vdash \gamma_f$}
\RightLabel{W}
\UnaryInfC{$\var{drink},\var{H_2O},\var{H_2O},\var{elec},\var{H_2O} \vdash \gamma_f$}
\RightLabel{E*} 
\UnaryInfC{$\var{drink},\var{H_2O},\var{H_2O},\var{H_2O},\var{elec} \vdash \gamma_f$} 
\DisplayProof  
\end{array}
$
}
\end{center}

Finally, we prove $\var{drink},\var{elec},\var{H_2O},\var{H_2O},\var{H_2O} \vdash \gamma_a$. The proof
also uses Proof~$\star$.

\[
\AxiomC{$\vdots$}
\RightLabel{Proof~$\star$}
\UnaryInfC{$\var{H_2O},\var{H_2O},\var{elec} \vdash \gamma_f$}
\AxiomC{\phantom{PHA}}
\RightLabel{ax}
\UnaryInfC{$\var{H_2O} \vdash \var{H_2O}$}
\AxiomC{\phantom{PHA}}
\RightLabel{ax}
\UnaryInfC{$\lineg \var{thirst} \vdash \lineg \var{thirst}$}
\RightLabel{ax}
\RightLabel{$\multimap$L}
\BinaryInfC{$\var{H_2O} \multimap \lineg \var{thirst},\var{H_2O} \vdash \lineg \var{thirst}$}
\RightLabel{$\otimes$R}
\BinaryInfC{$\var{H_2O},\var{H_2O},\var{elec},\var{H_2O} \multimap \lineg \var{thirst},\var{H_2O} \vdash \gamma_f \otimes \lineg \var{thirst}$} \RightLabel{definition} 
\UnaryInfC{$\var{H_2O},\var{H_2O},\var{elec},\var{drink},\var{H_2O} \vdash \gamma_a$} \RightLabel{E*} 
\UnaryInfC{$\var{drink},\var{H_2O},\var{H_2O},\var{H_2O},\var{elec} \vdash \gamma_a$} 
\DisplayProof  
\]

\newpage
\bibliographystyle{plain}
\bibliography{JLC-18-58.bib} 

\end{document}